\title{Approximate counting for spin systems in sub-quadratic time}
\pgfplotsset{compat=1.16}
\tikzstyle{internal} = [draw, fill, shape=circle]
\tikzstyle{external} = [shape=circle]
\tikzstyle{square}   = [draw, fill, rectangle]
\tikzstyle{triangle} = [draw, fill, regular polygon, regular polygon sides=3, inner sep=3pt]
\tikzstyle{pentagon} = [draw, fill, regular polygon, regular polygon sides=5, inner sep=2pt, minimum size=14pt]
\tikzset{every fit/.append style=text badly centered}
\tikzset{>=latex} % arrow tips
\newcommand{\tp}[1]{{\left( #1 \right)}}
\newcommand{\Ex}{\mathop{\mathbb{{}E}}\nolimits}
\renewcommand{\Pr}{\mathop{\mathrm{Pr}}\nolimits}
\def\+#1{\mathcal{#1}}
\def\=#1{\mathbb{#1}}
\newcommand{\poly}[1]{\ensuremath{\mathop{\mathrm{poly}}\tp{#1}}}
\newcommand{\polylog}[1]{\ensuremath{\mathop{\mathrm{polylog}}\tp{#1}}}
\newcommand{\norm}[2]{\ensuremath{\Vert #2 \Vert_{#1}}}
\newcommand{\abs}[1]{\ensuremath{\left\vert#1\right\vert}}
\newcommand{\ceil}[1]{\lceil#1\rceil}
\newcommand{\eps}{\varepsilon}
\newcommand{\Var}[2]{\ensuremath{\textnormal{Var}_{#1}\left(#2\right)}}
\newcommand{\dist}{\operatorname{dist}}
\newcommand{\defeq}{:=}
\newcommand{\numP}{\#{\textnormal{\textbf{P}}}}
\newcommand{\NP}{\textnormal{\textbf{NP}}}
\def\prob#1#2#3{\goodbreak\begin{list}{}{\labelwidth\z@ \itemindent-\leftmargin
      \itemsep\z@  \topsep6\p@\@plus6\p@
      \let\makelabel\descriptionlabel}
  \item[\textbf{Name}]#1
  \item[\textbf{Instance}]#2
  \item[\textbf{Output}]#3
  \end{list}}
\providecommand\@dotsep{5}
\def\listtodoname{Todo list}
\def\listoftodos{\@starttoc{tdo}\listtodoname}
\newcommand{\hcsampler}{\mathsf{HardcoreSampler}}
\newcommand{\ssmsampler}{\mathsf{LazySampler}}
\newcommand{\bdsplit}{\mathsf{BoundarySplit}}
\newcommand{\dTV}{d_{\mathrm{TV}}}
\newcommand{\TSAW}{T_{\mathrm{SAW}}}
\newcommand{\tw}{\operatorname{tw}}
\newcommand{\argmax}{\operatorname{argmax}}
\begin{document}
\maketitle

\begin{abstract}
  We present two randomised approximate counting algorithms with running time $\widetilde{O}\left(\left(\frac{n}{\eps}\right)^{2-c}\right)$, for some constant $c>0$ and accuracy $\eps$:
  \begin{enumerate}
    %\item for the hard-core model when strong spatial mixing (SSM) is sufficiently fast;
    %\item for spin systems with SSM on planar graphs with quadratic growth, such as $\mathbb{Z}^2$.
    \item for the hard-core model with fugacity $\lambda$ on graphs with maximum degree $\Delta$ when $\lambda=O(\Delta^{-1.5-c_1})$ where $c_1=c/(2-2c)$;
    \item for spin systems with strong spatial mixing (SSM) on planar graphs with quadratic growth, such as $\mathbb{Z}^2$.
  \end{enumerate}

  For the hard-core model, Weitz's algorithm (STOC, 2006) achieves sub-quadratic running time when correlation decays faster than the neighbourhood growth, namely when $\lambda = o(\Delta^{-2})$.
  Our first algorithm does not require this property and extends the range where sub-quadratic algorithms exist.

  Our second algorithm appears to be the first to achieve sub-quadratic running time up to the SSM threshold, albeit on a restricted family of graphs.
  It also extends to (not necessarily planar) graphs with polynomial growth, such as $\mathbb{Z}^d$,
  but with a running time of the form $\widetilde{O}\left(\left(\frac{n}{\eps}\right)^{2}/2^{c(\log \frac{n}{\eps})^{1/d}}\right)$ where $d$ is the exponent of the polynomial growth and $c>0$ is some constant.
%  Our technique utilizes Weitz's self-avoiding walk tree (STOC, 2006) and the recent marginal sampler of Anand and Jerrum (SIAM J.~Comput.,~2022).
\end{abstract}

\section{Introduction}

The study of counting complexity was initiated by Valiant \cite{Val79} with the introduction of the complexity class $\numP$.
An intriguing phenomenon emerging in counting complexity is that many $\numP$-complete problems admit fully polynomial-time randomised approximation schemes (FPRAS),
which output an $\eps$-approximation in time polynomial in $n$ and $1/\eps$ with $n$ being the input size.
This is most commonly found for the so-called partition function of spin systems, as demonstrated by the pioneering work of Jerrum and Sinclair \cite{JS89,JS93}.
Spin systems are physics models for nearest neighbour interactions, and the partition functions are the normalising factors for their Gibbs distributions.
This quantity can express the count of combinatorial objects such as the number of matchings, independent sets, or colourings in a graph,
and is much more expressible by allowing real parameters of the system.

In this paper we are most interested in the fine-grained aspects of the complexity of estimating partition functions.
While for most spin systems, exact counting is $\numP$-hard \cite{CC17},
there are parameter ranges where the model exhibits spatial mixing\footnote{There are multiple notions of spatial mixing with varying degrees of strength. In this paper we use strong spatial mixing, defined in \Cref{sec:prelim}, \Cref{def:SSM}. } that yields efficient approximation algorithms. 
Roughly speaking, spatial mixing gives us a way to bound the correlation or influence a partial configuration has on each vertex with respect to its distance to the vertex. 
Without spatial mixing, the partition function is usually $\NP$-hard to approximate \cite{SS14,GSV16,GSV15}.

Efficient approximate counting was first enabled by the work of Jerrum, Valiant, and Vazirani \cite{JVV86} 
who gave self-reductions from approximate counting to sampling for a large class of problems.
The sampling task is then most commonly solved via Markov chains.
The efficiency of a Markov chain is measured by its mixing time (i.e.~how long it takes to get close to the target distribution).
For spin systems with spatial mixing, in many situations, the standard chain, namely the Glauber dynamics, mixes in $O(n\log n)$ time \cite{CLV21,AJKPV22,CFYZ22,CE22}.
%,
%where we use $n$ to denote the number of vertices in the given graph throughout the paper.

Another later technique, simulated annealing, provides a more efficient counting to sampling reduction \cite{SVV09,Hub15,Kol18}.
Together with the $O(n\log n)$ mixing time mentioned above, this leads to\footnote{The notation $\widetilde{O}(\cdot)$ hides logarithmic factors in $n$ and sometimes other constants, such as ones depending on the maximum degree $\Delta$ of the input graph.} $\widetilde{O}((n/\eps)^2)$ approximate counting algorithms. %, where $\eps$ is the accuracy of the approximation.
These Markov chain Monte Carlo (MCMC) algorithms are the fastest for estimating partition functions in general,
but $\Omega(n^2)$ appears to be a natural barrier to this approach.
This is because generating a sample would take at least linear time (and there are $\Omega(n\log n)$ lower bounds for the mixing time of Markov chains \cite{HS07} for many spin systems),
and, restricted to the standard way of using the samples, the number of samples required for simulated annealing is at least $\Omega(n/\eps^2)$ \cite[Theorem~10]{Kol18}.

On the other hand, when we relax the parameters, $\Omega(n^2)$ is no longer a barrier to algorithms.
Let us take the hard-core gas model as an example. 
Here the Gibbs distribution $\mu$ is over the set $\+I$ of independent sets of a graph $G$.
For an independent set $I$, $\mu(I)\defeq\lambda^{\abs{I}}/Z(G)$, where $\lambda$ is a parameter of the system (so-called fugacity),
and $Z(G)\defeq\sum_{I\in\+I} \lambda^{\abs{I}}$ is the partition function.
For graphs with degree bound $\Delta$, spatial mixing holds when $\lambda<\lambda_c(\Delta) \defeq \frac{(\Delta-1)^{\Delta-1}}{(\Delta-2)^{\Delta}} \approx\frac{e}{\Delta}$.
The aforementioned MCMC results \cite{CLV21,CFYZ22,CE22} imply FPRASes running in time $\widetilde{O}((n/\varepsilon)^2)$ as long as $\lambda<\lambda_c(\Delta)$. 
Yet much earlier, Weitz \cite{Wei06} gave the first fully polynomial-time approximation scheme (FPTAS, the deterministic counterpart to FPRAS) for the partition function of the hard-core model when $\lambda<\lambda_c(\Delta)$, which is not based on Markov chains.
%Weitz \cite{Wei06} first gave a fully polynomial-time approximation scheme (FPTAS, the deterministic counterpart to FPRAS) for the partition function of the hard-core model when $\lambda<\lambda_c(\Delta)$.
While Weitz's algorithm has a running time $n^{O(\log \Delta)}$ in general,
it has an interesting feature that it gets faster as $\lambda$ decreases. 
Roughly speaking, for $k>0$ and $\lambda=O((1/\Delta)^{1+k})$,
Weitz's FPTAS runs in time $O(n^{1+1/k}/\eps^2)$.
In particular, if $\lambda=o(\Delta^{-2})$, Weitz's algorithm passes the $\Omega(n^2)$ barrier,
whereas the aforementioned MCMC method still takes $\Omega(n^2)$ time.
This leads to an intriguing question:
\begin{align}\label{question}
  \text{\emph{When can we achieve sub-quadratic running time for approximate counting?}}
\end{align}

In this paper we make some progress towards this question.
For hard-core models, Weitz's algorithm uses the self-reduction \cite{JVV86} to reduce approximate counting to estimating marginal probabilities (probabilities of a partial configuration of the system).
We provide a quadratic speedup for the marginal estimation step for $\lambda$ well below $\lambda_c(\Delta)$, albeit with the introduction of randomness.
%We always use $n$ to denote the number of vertices of the underlying graph.
The result is summarised as follows.

\begin{theorem}
\label{small-lambda}
Fix a constant $k>0$. 
Let $\Delta\ge 2$ be an integer and $\lambda<\frac{1}{\Delta^k(\Delta-1)}$. 
For graphs with maximum degree $\Delta$, there exists an FPRAS for the partition function of the hard-core model with parameter $\lambda$ in time $\widetilde{O}\big( \big( \frac{n}{\eps} \big)^{1+\frac{1}{2k}} \big)$, 
where $n$ is the number of vertices and $\eps$ is the error margin. 
\end{theorem}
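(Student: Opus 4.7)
The plan is to reduce partition-function estimation to marginal estimation in the style of Jerrum--Valiant--Vazirani, apply Weitz's self-avoiding walk (SAW) tree to cast each marginal as a root marginal on a tree, and replace the deterministic tree-recursion evaluation by a new randomised estimator $\hcsampler$ that is quadratically faster.

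First I would fix an arbitrary vertex ordering $v_1,\ldots,v_n$, set $s_1=\cdots=s_n=0$, and use the telescoping identity
$$
  Z(G) = \prod_{i=1}^{n} \frac{1}{p_i}, \qquad p_i \defeq \mu\tp{v_i=0 \cmid v_1=0,\ldots,v_{i-1}=0}.
$$
Since $\lambda \le \Delta^{-k-1}$, each $p_i$ lies in $[1/2,1]$, so a $(1\pm\eps/(2n))$-multiplicative estimate of every $p_i$ yields a $(1\pm\eps)$-estimate of $Z$. Invoking Weitz, each $p_i$ equals the root marginal on a SAW tree with the prefix as boundary. In the regime $\lambda\Delta^{1+k}<1$, strong spatial mixing gives per-level contraction $\Delta^{-k}$, so truncating $\TSAW$ at depth $L=\Theta\tp{\log(n/\eps)/(k\log\Delta)}$ loses only $O(\eps/n)$ per marginal. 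The truncated tree has at most $(n/\eps)^{1/k}$ nodes, and evaluating the exact recursion on it recovers Weitz's baseline of $n \cdot (n/\eps)^{1/k}$.

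The crux is to design and analyse $\hcsampler$: a randomised procedure that estimates the root marginal of the depth-$L$ SAW tree to additive accuracy $O(\eps/n)$ in expected time $\widetilde O\tp{(n/\eps)^{1/(2k)}}$. The idea I would pursue is a depth-splitting scheme. Run the exact recursion $p_v = 1/\tp{1+\lambda\prod_u(1-p_u)}$ only down to depth $L/2$, which costs $\Delta^{L/2}=(n/\eps)^{1/(2k)}$ work. At each interface node of depth $L/2$, replace the exact evaluation of the deeper part by an unbiased, low-variance Monte-Carlo estimator. The small-$\lambda$ assumption makes the expansion of $\prod_u(1-p_u)$ in the children marginals effectively low-order, which should allow an estimator built from a $\polylog{n/\eps}$ number of random descents into the subtree rather than a full exploration of its $\Delta^{L/2}$ leaves.

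The main obstacle I anticipate is variance propagation: random noise introduced at the depth-$L/2$ interface must pass upward through $L/2$ deterministic levels, and each level aggregates up to $\Delta$ siblings, which naively inflates variance by a factor of $\Delta$ per level. The per-level SSM contraction $\Delta^{-k}$, supplied by the hypothesis $k>0$, is precisely what should offset this amplification; turning this into a clean bound calls for a careful coupling between the stochastic and deterministic stages together with a judicious choice of the number of Monte-Carlo samples drawn at each interface node. Once $\hcsampler$ is in hand, multiplying its per-call cost by the $n$ marginals of the self-reduction yields the claimed total running time of $\widetilde O\tp{(n/\eps)^{1+1/(2k)}}$, completing the FPRAS.
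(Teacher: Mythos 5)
Your high-level architecture (the Jerrum--Valiant--Vazirani self-reduction, Weitz's SAW tree, and a randomised per-marginal estimator run only to about half of Weitz's truncation depth) matches the paper's, but there is a genuine gap at the step you yourself flag as the ``crux'': the construction of the estimator is never pinned down, and the sketch you do give would not obviously work. You propose to run the exact recursion to depth $L/2$ and plug noisy Monte-Carlo estimates of the subtree marginals into the interface. Because the recursion $p_v = 1/(1+\lambda\prod_u(1-p_u))$ is nonlinear, feeding in unbiased estimates of the children marginals does \emph{not} produce an unbiased estimate of the root marginal, and the ``variance propagation through $L/2$ deterministic levels'' you worry about is a real obstruction for that design, not just a technical nuisance. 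The paper sidesteps it entirely: it uses the Anand--Jerrum lazy marginal sampler to draw an \emph{exact} random boundary configuration $\sigma_S$ on the SAW tree at depth $\ell$, then computes the exact conditional root marginal $\mu_v^{\sigma_S}(0)$ by recursion. Unbiasedness is immediate from the tower law $\Ex_{\sigma_S}[\mu_v^{\sigma_S}(0)]=\mu_v(0)$, and the variance bound $\Var \le (C\Delta^{-k\ell})^2$ is immediate from strong spatial mixing --- no coupling between stochastic and deterministic stages is needed.

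A related issue is the accuracy target. You aim for additive accuracy $O(\eps/n)$ per marginal, which is Weitz's requirement and would not give a quadratic gain. The paper's gain comes precisely from relaxing this to \emph{variance $O(1/n)$} per marginal: one then takes $O(1/\eps^2)$ independent copies of the full product $\prod_i \widetilde p_{G_i,v_i}$, averages, and applies Chebyshev. Since SSM makes the variance decay like $\Delta^{-2k\ell}$ (the square of the error), the truncation depth satisfying $\Delta^{-2k\ell}\le 1/n$ is roughly half of Weitz's depth, and $\Delta^\ell \approx n^{1/(2k)}$. Finally, your proposal never addresses why the boundary sampler terminates quickly enough; the paper needs a running-time tail bound for the Anand--Jerrum sampler (this is where the hypothesis $\lambda < \frac{1}{\Delta-1}$ enters, via \Cref{lem:truncate-AJ}), without which the per-call cost is not controlled.
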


\begin{remark}[Decay rate vs.~neighbourhood growth]
%We will always use $n$ to denote the number of vertices of the underlying graph.
For a constant $\eps$,
the running time of \Cref{small-lambda} is sub-quadratic if $\lambda=o(\Delta^{-1.5})$,
%$\widetilde{O}(n^2)$ if $\lambda=O(\Delta^{-1.5})$, 
and $\widetilde{O}(n^{1.5})$ if $\lambda=O(\Delta^{-2})$. 
In contrast, to achieve sub-quadratic running-time, Weitz's algorithm requires $\lambda=o(\Delta^{-2})$,
which is also the threshold when correlation decays faster than the growth of the neighbourhood.
This threshold has algorithmic significance in other contexts \cite{FGY22,AJ22},
but \Cref{small-lambda} implies that it is not essential to achieve sub-quadratic approximate counting.
\end{remark}

\Cref{fig:run-time} is a sketch comparing the running times of MCMC,\footnote{The running time of MCMC usually also depends on the parameter $\lambda$, but changing $\lambda$ does not change the exponent of $n$. The effect of $\lambda$ is usually a small polynomial factor hidden in the $\widetilde{O}(\cdot)$ notation, and the sketch in \Cref{fig:run-time} ignores this effect.} Weitz's algorithm, and \Cref{small-lambda}.\footnote{Another notable FPTAS is via zeros of polynomials \cite{Bar16,PR17}. It can achieve similar subquadratic running time when $\lambda=o(\Delta^{-2})$, but it is apparently no faster than Weitz's correlation decay algorithm.}
For the limiting case of $k=0$, our algorithm works when $\lambda<\frac{1}{\Delta-1}$ and still presents a quadratic speedup comparing to Weitz's algorithm.
However in this case the running time is $\left( \frac{n}{\eps} \right)^{O(\log \Delta)}$ and thus our speedup is hidden in the big-O notation and is less significant.
The parameter constraint $\frac{1}{\Delta-1}$ is imposed by the running-time tail bound of a subroutine we used, namely the recursive marginal sampler of Anand and Jerrum \cite{AJ22}.
%\footnote{For the hard-core model, the Anand-Jerrum algorithm has a constant expected running time if $\lambda<1/(\Delta-1)$, and hence a polynomial tail bound on its running time by Markov's inequality. However, for our needs, we require a stronger exponential tail bound, which leads to the stronger assumption $\lambda<1/(e\Delta)$.}

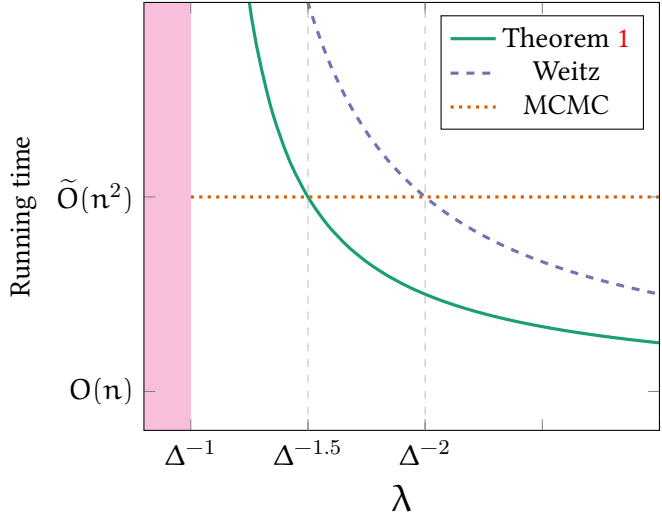
\begin{figure}[!htbp]
  \centering
  \begin{tikzpicture}[transform shape]
    \definecolor{myviolet}{RGB}{117,112,179}
    \definecolor{myorange}{RGB}{217,95,2}
    \definecolor{mygreen}{RGB}{27,158,119}
    \definecolor{mypink}{RGB}{231,41,138}
    
    \begin{axis}[width=0.54\linewidth,	xmin=0.8,   xmax=3, ymin=0.8,   ymax=3, xtick={1,1.5,2,2.5,3}, xticklabels={$\Delta^{-1}$,$\Delta^{-1.5}$,$\Delta^{-2}$, ,}, ytick={1,2}, yticklabels={$O(n)$,$\widetilde{O}(n^2)$}, extra x ticks={1.5,2}, extra x tick labels={\empty}, extra y ticks={}, extra tick style={dashed, grid=major},xlabel={\Large $\lambda$},ylabel={Running time},legend entries={\Cref{small-lambda},Weitz,MCMC}, axis on top, legend pos=north east]
    \addplot [very thick, draw=mygreen,  domain=1:3, smooth] {1+1/(2*(x-1))};
    \addplot [dashed, very thick, draw=myviolet,  domain=1:3, smooth] {1+1/(x-1)}; 
    \addplot [dotted, very thick, draw=myorange,  domain=1:3, smooth] {2};
    \addplot [fill=mypink!30,draw=none] coordinates {(0.8,0.8) (1,0.8) (1,3) (0.8,3)};
  \end{axis}
  \end{tikzpicture}
  \caption{Running time comparison among MCMC, Weitz's algorithm, and \Cref{small-lambda}}
  \label{fig:run-time}
\end{figure}

The key to our method is to find a new estimator of the marginal probability that simultaneously has low variance and can be evaluated very fast.
Our technique combines Weitz's self-avoiding walk (SAW) tree construction and the marginal sampler of Anand and Jerrum \cite{AJ22}.
The marginal of the root of the SAW tree preserves the desired marginal probability,
and can be evaluated in time linear in the size of the tree via standard recursion.
We use the unbiased marginal sampler to draw a random boundary condition at a suitable depth on the SAW tree, and compute the marginal of the root using recursion under this boundary condition.
Both steps can be computed in time near-linear in the size of the sub-tree.
In the self-reduction \cite{JVV86}, it suffices to have $O(1/n)$ variance for each marginal estimation.
%Because we use an unbiased estimator, an $O(1/n)$ variance marginal estimation is sufficient by Chebychev inequality. 
Thus we only need absolute error $O(1/\sqrt{n})$ rather than the $O(1/n)$ error typically required by Weitz's algorithm.
This larger error tolerance roughly halves the depth where we truncate the SAW tree comparing to Weitz's truncation. %of our boundary condition is roughly half the depth where Weitz truncates the SAW tree. 
As a result, we obtain our quadratic improvement on the marginal estimation step over Weitz's algorithm.
This method also extends to other anti-ferromagnetic 2-spin systems.

Our second contribution is about graphs with polynomial growth.
In particular, for planar graphs with quadratic growth,
we provide $\widetilde{O}((n/\eps)^{2-c})$ algorithms for some constant $c>0$.
An informal statement is as follows.
(The detailed statement is \Cref{lattice-counting}.)

\begin{theorem}\label{thm:main2}
  Let $\=G$ be a family of planar graphs with quadratic growth.
  For a spin system exhibiting spatial mixing on $\=G$,
  there exists an FPRAS for the partition function of $G\in\=G$ with $n$ vertices.
  The run-time is $\widetilde{O}((n/\eps)^{2-c})$ for some constant $c>0$,
  where $\eps$ is the error margin.
\end{theorem}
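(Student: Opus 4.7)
The plan is to combine the JVV-style self-reduction from counting to marginal estimation with a fast local marginal estimator that exploits both SSM and the planar, quadratic-growth geometry. First, I would invoke the Jerrum--Valiant--Vazirani self-reduction (as used in the proof of \Cref{small-lambda}), which writes $Z(G)$ as a telescoping product of $n$ conditional marginals $\mu(v_i=s\mid\tau_{<i})$ at a sequence of pinnings. Following the key observation underlying \Cref{small-lambda}, it suffices to estimate each marginal with variance $O(1/n)$, i.e.~additive error $O(\eps/\sqrt{n})$, rather than the more stringent $O(\eps/n)$ that a naive union bound would demand.

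The next step is the local marginal estimator. By SSM with exponential decay rate $\delta<1$, each marginal $\mu(v_i=s\mid\tau_{<i})$ is determined up to error $O(\eps/\sqrt{n})$ by the configuration inside a ball $B_r(v_i)$ of radius $r=\Theta(\log(n/\eps))$. Quadratic growth then yields $|B_r(v_i)| = O(\log^2(n/\eps))$, and since any planar subgraph on $m$ vertices has treewidth $O(\sqrt{m})$, the ball has treewidth $O(\log(n/\eps))$. On this local window I would either compute the marginal exactly by a tree-decomposition dynamic program in time $(n/\eps)^{O(1)}$, or draw samples using a local Glauber dynamics which, under SSM restricted to the ball, mixes in $\widetilde{O}(|B_r(v_i)|)$ time. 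Either route yields a per-marginal cost that is strictly sub-linear in $n$.

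Combining the $n$ marginal estimations (each with $O(1/\eps^2)$ repetitions) with the sub-linear per-marginal cost gives an overall running time of $\widetilde{O}((n/\eps)^{2-c})$ for some $c>0$ determined by the SSM rate and the treewidth bound. For the extension to polynomial-growth, non-planar graphs of dimension $d$, the absence of an $O(\sqrt{m})$-treewidth bound forces a correspondingly weaker $\widetilde{O}((n/\eps)^2 / 2^{c(\log(n/\eps))^{1/d}})$, consistent with what the abstract announces.

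\textbf{Main obstacle.} The principal technical difficulty is to guarantee that the SSM-based local truncation remains valid under the arbitrary pinnings introduced by the self-reduction: a pinning $\tau_{<i}$ could in principle amplify correlations and push relevant information outside the ball $B_r(v_i)$. One must appeal to a pinning-robust formulation of SSM and verify that the chosen radius $r=\Theta(\log(n/\eps))$ absorbs every admissible pinning to within the target error. A secondary concern is the careful bookkeeping of constants: the SSM decay rate, the growth exponent, and the treewidth-to-runtime conversion together determine $c$, and the analysis must confirm that $c$ remains strictly positive after all of these ingredients are composed.
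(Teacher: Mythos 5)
Your skeleton --- the JVV self-reduction, the observation that variance $O(1/n)$ (rather than deterministic error $O(1/n)$) suffices, quadratic growth, and linear local treewidth of planar graphs --- is indeed the right starting point and matches the paper. But the heart of the argument, a per-marginal estimator that provably runs in time $n^{1-c}$, is missing, and each concrete route you sketch has a gap. If you compute the marginal by a DP on $B_r(v_i)$ under a \emph{fixed} boundary condition, the resulting error is a deterministic \emph{bias}, not a variance; bias compounds multiplicatively across the $n$ telescoping factors, so you would need bias $O(\eps/n)$, not the $O(\eps/\sqrt n)$ you used to set $r$ --- and even then the DP time $q^{O(\mathrm{tw})}=(n/\eps)^{O(\log q/\log r)}$ is not sub-linear for a generic SSM rate $r$, so this only recovers the fast-decay regime already handled by Weitz. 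To actually reap the variance advantage the estimator must be unbiased, which requires conditioning on a \emph{random} boundary $\tau$ drawn from the exact global marginal on the sphere. Local Glauber on the ball samples from the Gibbs distribution of the truncated sub-system, not from the marginal of the global Gibbs measure on the ball, so it does not supply an unbiased estimator. The paper instead uses the Anand--Jerrum recursive marginal sampler precisely because it samples $\tau$ from the correct global marginal while touching only $\widetilde{O}(1)$ vertices with high probability.

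Two further ingredients are missing and are exactly what makes the per-marginal cost sub-linear. First, the unbiased estimator needs $m=\Theta(n\cdot \mathrm{Var})$ independent samples of $\tau$, and if each required its own DP the cost $m\cdot 2^{O(\ell)}$ would exceed $n$; the paper avoids this by \emph{precomputing a lookup table} of $\mu_v^{\sigma,\tau}(k)$ indexed over all $q^{|S_v(\ell')|}$ boundary configurations, so that each of the $m$ samples is an $O(1)$ lookup. Second, for the table to be constructible in time $n^{1-c}$ one needs the \emph{sphere}, not the ball, to have size $O(\ell)$; quadratic growth bounds only $|B_v(\ell)|=O(\ell^2)$, and the sphere can genuinely have size $\Omega(\ell^2)$ at some specific radii (the paper exhibits such a subgraph of $\mathbb{Z}^2$ in its appendix). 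A pigeonhole argument shows some radius $\ell'\in[\ell/2,\ell]$ has $|S_v(\ell')|=O(\ell)$, and this is the radius the algorithm actually uses. Without the lookup table and the pigeonhole-selected sphere, the running time does not balance to $\widetilde{O}((n/\eps)^{2-c})$. Incidentally, the ``main obstacle'' you flag (robustness of SSM under adaptive pinnings) is a non-issue: strong spatial mixing as defined in the paper is, by definition, uniform over all pinnings, so the required robustness comes for free.
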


We note that one of the most important graphs in statistical physics, the 2D integer grid $\=Z^2$, indeed has quadratic growth.
More generally, any planar graph with a bounded radius circle packing has quadratic growth.
Thus \Cref{thm:main2} covers many important families of planar graphs, including most lattices.
(A \emph{non}-example would be the Cayley tree.)
Specialised to the hard-core model, 
\Cref{thm:main2} works up to the critical threshold, which is at least $\lambda_c(\Delta)$,\footnote{For a given graph family, such as subgraphs of $\mathbb{Z}^2$, the critical threshold may be well above $\lambda_c(\Delta)$.}
when the graph satisfies the condition in the theorem and has maximum degree $\Delta$.

The key to \Cref{thm:main2} is once again a suitable estimator for marginal probabilities.
We choose a distance~$\ell$ boundary around a vertex $v$ in $G$ with a carefully chosen $\ell$,
and our estimator is the marginal under random boundary conditions.
This boundary condition is yet again sampled using the algorithm of Anand and Jerrum \cite{AJ22}.
Our main observation is that due to quadratic growth, 
the number of possible boundary conditions do not grow very fast.
It turns out to be more efficient to create a look-up table by enumerating all boundary conditions first,
and instead of computing the marginal for each sample,
we simply find it in this table.
Since planar graphs have linear local tree-width,
the table can be created efficiently.
This last step is inspired by the work of Yin and Zhang \cite{YZ13}.

This method extends to any (not necessarily planar) graph families with polynomial growth.
Without planarity, we use brute-force enumeration instead to create the table.
This makes our gain on the running time smaller.
Again an informal statement is as follows, with the full version in \Cref{thm:polylog}.

\begin{theorem} \label{thm:main3}
  Let $\=G$ be a family of graphs with polynomial growth.
  For a spin system exhibiting spatial mixing on $\=G$,
  there exists an FPRAS for the partition function of $G\in\=G$ with $n$ vertices.
  The run-time is $\widetilde{O}\left(\frac{n^2}{\eps^2 2^{c(\log \frac{n}{\eps})^{1/d}}}\right)$ where $c>0$ is some constant, $d$ is the exponent of the polynomial growth,
  and $\eps$ is the error margin.
\end{theorem}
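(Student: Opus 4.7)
The plan is to mirror the approach of \Cref{thm:main2} for planar graphs, but to replace the tree-width-based table construction with brute-force enumeration. First, I would invoke the standard self-reduction of Jerrum, Valiant, and Vazirani to reduce the estimation of the partition function $Z(G)$ to estimating $n$ conditional marginal probabilities, each to within a target variance of roughly $O(\eps^2/n)$.

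Fix a vertex $v$ and a truncation radius $\ell$ to be chosen later. Let $B = B_\ell(v)$ denote the ball of radius $\ell$ around $v$, with boundary $\partial B$. The marginal estimator for $\mu_v$ is the conditional marginal $\mu_v(\cdot \mid \sigma)$, where $\sigma$ is a random boundary configuration on $\partial B$ drawn from its true marginal distribution. I would obtain such a $\sigma$ in an unbiased manner by invoking the recursive marginal sampler of Anand and Jerrum (the same tool used in \Cref{small-lambda} and \Cref{thm:main2}). The estimator is unbiased by construction, and strong spatial mixing on $\=G$ implies that its variance decays like $\exp(-\Omega(\ell))$, since $v$ is only exponentially weakly coupled to the spins on $\partial B$.

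The critical step, and the one where planarity was exploited in \Cref{thm:main2}, is the efficient computation of $\mu_v(\cdot \mid \sigma)$ for a given $\sigma$. Without linear local tree-width to lean on, I would build a brute-force lookup table: enumerate all $q^{|B|} = q^{O(\ell^d)}$ configurations on $B$ (where $q$ is the number of spin states and $|B| = O(\ell^d)$ by polynomial growth), sum their weighted contributions, and store $\mu_v(\cdot \mid \tau)$ for each of the $q^{O(\ell^d)}$ possible boundary conditions $\tau$. This table costs $q^{O(\ell^d)}$ time and space per vertex, substantially more than the $q^{O(\ell)}$ cost achievable in the planar setting, and this degradation is the source of the weaker final speedup.

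Finally, I would choose $\ell = \alpha(\log(n/\eps))^{1/d}$ with a sufficiently small constant $\alpha > 0$ so that the total table cost $n \cdot q^{O(\ell^d)} = n \cdot (n/\eps)^{O(\alpha^d)}$ is comfortably dominated by the sampling cost, while the SSM-driven variance-reduction factor $\exp(-\Omega(\ell)) = 2^{-\Omega((\log(n/\eps))^{1/d})}$ cuts the total sampling work from the baseline $\widetilde{O}((n/\eps)^2)$ down to $\widetilde{O}\bigl(n^2/(\eps^2 \cdot 2^{c(\log(n/\eps))^{1/d}})\bigr)$, as claimed. The main obstacle I anticipate is the joint calibration of $\alpha$ against the SSM decay constant and the sample complexity per marginal: $\alpha$ must be small enough that the brute-force table stays sub-dominant, yet $\ell$ itself must be large enough for SSM to deliver the claimed variance decay. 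This tight interplay is precisely why the saving is only sub-polynomial, rather than polynomial as in the planar case of \Cref{thm:main2}.
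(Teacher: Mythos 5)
Your proposal matches the paper's proof of \Cref{thm:polylog} essentially step for step: the JVV self-reduction, the marginal estimator via a random boundary at radius $\ell$ sampled with the Anand--Jerrum algorithm, the brute-force lookup table of size $q^{O(\ell^d)}$ replacing the tree-width-based dynamic programming of \Cref{YZ-table}, the variance bound from SSM, and the choice $\ell = \Theta((\log n)^{1/d})$ calibrated so that the table cost is sub-dominant to the sampling cost. The only cosmetic differences are in bookkeeping — you fold the $\eps$-dependence into the per-vertex variance target rather than taking $O(1/\eps^2)$ independent copies of the product estimator and applying the accuracy-boosting trick at the end, and your single pass over all $q^{|B|}$ ball configurations is a marginally tighter accounting than the paper's $q^{2C_0\ell^d}$ — but these do not change the argument or the final bound.
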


An example of such graphs would be the $d$-dimensional integer lattice $\=Z^d$.
Note that \Cref{thm:main2} is better than \Cref{thm:main3} for $d=2$ but requires the extra assumption of planarity.
The speedup factor $2^{c(\log n)^{1/d}}$ in \Cref{thm:main3} is slower than any polynomial in $n$ but faster than any polynomial in $\log n$.

We note an interesting related work by Chu, Gao, Peng, Sachdeva, Sawlani, and Wang \cite{CGPSSW18},
who give an approximate counting algorithm with running time $\widetilde{O}(m^{1+o(1)}+n^{1.875+o(1)}/\eps^{1.75})$ for spanning trees of graphs,
where $m$ is the number of edges and $n$ is the number of vertices. Notice that the input size here is $O(m)$ and $m=\Omega(n)$.
Thus their running time is also sub-quadratic.
However, there are some key differences between this work and ours.
Aside from not being a spin system, spanning trees can be counted exactly in polynomial time, 
thanks to Kirchhoff's matrix-tree theorem.
This allows them to use various efficient exact counting subroutines,
whereas the problems we consider are $\numP$-hard in general and no such subroutine is likely to exist.

Another more recent related result is the sub-quadratic all-terminal unreliability estimation algorithm by Cen, He, Li, and Panigrahi \cite{CHLP23},
which runs in sub-quadratic time $m^{1+o(1)}\eps^{-3}+\widetilde{O}(n^{1.5}\eps^{-2})$.
This problem, while $\numP$-hard, is not a spin system either.
Their method features a recursive Monte Carlo estimator that is very different from ours,
and not applicable to spin systems.
%and does not seem to be able to get an FPRAS. \htodo{Why do they state only for constant $\eps$? Need to investigate there.}
%Moreover, their algorithm's run-time has an exponential dependency on $1/\eps$,
%and thus is a PRAS instead of an FPRAS.

A crucial ingredient of our algorithm is the recursive marginal sampler of Anand and Jerrum \cite{AJ22}, which allows us to sample configurations on part of the graph while examining a low number of vertices in expectation.
This type of local / marginal sampler enables partial access to a large random object (in this case, a random spin configuration on the whole graph) with substantially less information than traditional samplers.
It has found applications in local computation algorithms \cite{BRY20}, and in derandomising Markov chains \cite{FGWWY22}.
Our results offer yet another application, namely to accelerate computation of the global partition function. 

We hope that our results are just the first step towards answering Question \eqref{question}.
In particular, it is not clear whether an $O((n/\eps)^{2-c})$ algorithm exists for the hard-core model when $\lambda=\Theta(1/\Delta)$ on graphs with maximum degree $\Delta$,
or if more efficient algorithms exist for graphs with polynomial or sub-exponential growth.
We leave these questions as open problems.

\section{Preliminaries}\label{sec:prelim}
We are interested in spin systems which exhibit strong spatial mixing. 

\begin{definition}
  A $q$ state spin system (or $q$-spin system for short) is given by a graph $G=(V,E)$, a $q$-by-$q$ interaction matrix $A$, and a field $b:[q]\rightarrow\mathbb{R}$. A configuration of $G$ is an assignment of states to vertices, $\sigma: V\rightarrow[q]$. The weight of a configuration $\sigma$ is determined by the assignments to the vertices and the interactions between them,
\[w(\sigma)\defeq\prod_{(u,v)\in E}A_{\sigma(u),\sigma(v)}\prod_{v\in V}b_{\sigma(v)}.\]

The Gibbs distribution $\mu$ is one where the probability of each configuration is proportional to its weight, namely, $\mu(\sigma)\defeq\frac{w(\sigma)}{Z(G)}$, where the partition function $Z(G)=\sum_\sigma w(\sigma)$ is a normalising factor. 
\end{definition}

In this paper, we consider the following permissive spin system, which says any locally feasible configuration can be extended to a globally feasible configuration.
\begin{definition}
  A $q$-spin system on $G=(V,E)$ is permissive if for any $\Lambda \subseteq V$, any $\sigma \in [q]^\Lambda$, if $b_{\sigma(v)} > 0$ for all $v \in \Lambda$ and $A_{\sigma(u),\sigma(v)} > 0$ for all $u,v \in \Lambda$ satisfying $(u,v) \in E$, then $\sigma$ can be extended to a full configuration $\sigma' \in [q]^V$ such that $w(\sigma')>0$.
\end{definition}
Many natural spin systems are permissive. Examples include the hard-core model, the graph $q$-colouring with $q \geq \Delta + 1$, where $\Delta$ is the maximum degree of the graph, and all spin systems with soft constraints (e.g. the Ising model and the Potts model).

We call the problem of evaluating $Z$ the counting problem for the $q$-spin system.
The standard algorithmic aim here is a \emph{fully-polynomial randomised approximation scheme} (FPRAS),
where given the spin system and an accuracy $\eps>0$,
the algorithm outputs $\widetilde{Z}$ such that $1-\eps\le\frac{\widetilde{Z}}{Z}\le 1+\eps$ with probability at least $3/4$, and runs in time polynomial in the size of the system and $1/\eps$.
To understand the requirement of an FPRAS, note that the probability $3/4$ can be boosted arbitrarily close to $1$ via standard means.
The accuracy can also be boosted by taking many disjoint copies of the system.
In fact, any polynomial accuracy can be boosted to an arbitrarily small $\eps$ in polynomial-time.

Also note that if $G$ is disconnected, then $Z(G)=\prod_{i}Z(G_i)$ where $G_i's$ are the connected components of $G$.
Thus, we always consider connected graphs in the paper.

Similar to $\mu(\sigma)$ for the probability of a configuration, for an $S\subseteq V$ and a partial configuration $\sigma_S$ on $S$, we use $\mu(\sigma_S)$ for the marginal probability of $\sigma_S$ under $\mu$.
We denote the marginal distribution induced by $\mu$ on $S$ by $\mu_S$.
When $S=\{v\}$, we also write $\mu_v$. For the distribution conditioned on a partial configuration $\sigma_S$, we use $\mu^{\sigma_S}$ or $\mu_v^{\sigma_S}$. 

Strong spatial mixing is a property of the spin system where a partial configuration of $G$ does not significantly influence the assignment of a distant vertex. 

\begin{definition}[SSM]\label{def:SSM}
%A $q$-spin system is said to have strong spatial mixing with decay rate $f(\ell)$ for a family of graphs if for any $G=(V,E)$ in the family, any $v\in V, S\subset V$, and two configurations $\sigma_S,\tau_S$,
A $q$-spin system is said to have strong spatial mixing with decay rate $f(\ell)$ for a graph $G=(V,E)$ if for any $v\in V, S\subset V$, and two configurations $\sigma_S,\tau_S$,
\[\dTV(\mu_v^{\sigma_S},\mu_v^{\tau_S})\leq f(\ell),\]
where $\dTV$ denotes the total variation distance, $T\subseteq S$ is the subset where the configurations are different, and $\ell=\dist(v,T)$ is the minimum distance from $v$ to a vertex in $T$.

We say a $q$-spin system has strong spatial mixing for a family of graphs $\mathbb{G}$ if there exists $f(\cdot)$ such that the system has strong spatial mixing with the same decay rate $f(\ell)$ for all $G\in\mathbb{G}$. 
\end{definition}

Strong spatial mixing is a very strong form of correlation decay. When $f(\ell)=\exp(-\Omega(\ell))$ we say we have strong spatial mixing with exponential decay. 

\subsection{Two-state spin systems}

A spin system is symmetric if $A_{ij}=A_{ji}$ for all $i,j$. % We only consider symmetric spin systems. 
When $q=2$ and the system is symmetric, we have states $\{0,1\}$ and can normalise $A$ and $b$ so that the interaction between $0$ and $1$ and the contribution of $0$ are $1$, and $A=
\begin{bmatrix}
\beta & 1\\
1 & \gamma
\end{bmatrix}$ and $b=(1,\lambda)$ for $\beta,\gamma\geq0$, and $\lambda>0$. 

When $\beta=\gamma$ the system is an Ising model, and for $\beta=1, \gamma=0$ the system is a hard-core gas model. We call a system \emph{anti-ferromagnetic} if disagreeing assignments of adjacent vertices are more heavily weighted, namely $\beta\gamma<1$. 
%Since we have only two states, we will use $\mu_v$ as $\mu_v(0)$, and $1-\mu_v=\mu_v(1)$ when discussing a two-state spin model. 

For a tree $T$ rooted at $v$ and a partial configuration $\sigma_S$ we define the marginal ratio
\begin{align*}
  R_T^{\sigma_S}\defeq\frac{\mu_v^{\sigma_S}(1)}{\mu_v^{\sigma_S}(0)}=\frac{\mu_v^{\sigma_S}(1)}{1-\mu_v^{\sigma_S}(1)},
\end{align*}
or $R_T^{\sigma_S}\defeq\infty$ if $\mu_v^{\sigma_S}(1)=1$.
These ratios satisfy a well-known recurrence relation:
\begin{align}
\label{saw-recursion}
R_T^{\sigma_S}=\lambda\prod_{i=1}^d\frac{\gamma R_{T_i}^{\sigma_S}+1}{R_{T_i}^{\sigma_S}+\beta},
\end{align}
where $T_i$ is the $i$th subtree of $T$. Similarly, for a graph $G$ we can define $R_{G,v}^{\sigma_S}=\mu_v^{\sigma_S}(1)/(1-\mu_v^{\sigma_S}(1))$.
While $R_{G,v}^{\sigma_S}$ does not admit a simple recursion, the self-avoiding walk (SAW) tree of $G$ at $v$ as constructed by Weitz \cite{Wei06} can be used to compute it. 

\begin{theorem}[Theorem 3.1 of \cite{Wei06}]
\label{saw-theorem}
For any $G=(V,E)$, a configuration $\sigma_S$ on $S\subset V$, and any $v\in V$, there exists a tree $\TSAW=\TSAW(G,v)$ such that
\[R_{G,v}^{\sigma_S}=R_{\TSAW}^{\sigma_S}.\]
\end{theorem}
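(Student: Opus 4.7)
The plan is to give an explicit construction of $T_{\mathrm{SAW}}(G,v)$ and then verify the marginal identity by induction on the cyclomatic number $|E(G)|-|V(G)|+1$ of the connected component containing $v$. First I would fix, at each vertex $u$ of $G$, an arbitrary total order $\prec_u$ on the edges incident to $u$. The SAW tree is produced by a DFS-like enumeration of maximal self-avoiding walks starting from $v$: nodes correspond to such walks, and children are the one-step extensions by an unused, unvisited neighbour. Whenever the walk at $u$ would revisit a previously traversed vertex $w$ through an unused edge $(u,w)$, I do not extend the walk; instead I attach to $u$ a leaf labelled $w$ whose spin is pinned to $1$ if $(u,w) \prec_w e_w$ and to $0$ otherwise, where $e_w$ is the edge by which $w$ was first reached on the current walk. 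These pinned leaves are appended to the boundary condition $\sigma_S$.

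The base case is when $G$ is a tree, where $T_{\mathrm{SAW}}(G,v) = G$ and the identity is trivial. For the inductive step I would pick a vertex $w \neq v$ lying on a cycle through $v$ and perform a local \emph{splitting} of $w$: replace $w$ by two copies $w_1,w_2$, partition its incident edges between them, and attach to each copy a pinned boundary leaf dictated by the convention above. The core of the argument is a local identity asserting that splitting preserves $R_{G,v}^{\sigma_S}$. Concretely, I would expand the partition function by conditioning on $\sigma(w)\in\{0,1\}$, obtaining a two-term sum in $G$; in the split graph the analogous expansion over $(\sigma(w_1),\sigma(w_2))$ has four terms, but the pinned leaves exactly annihilate the two disagreeing terms, while the duplication of the external field $b_{\sigma(w)}$ is compensated by the pinned-leaf contributions. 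Iterating the splitting, one excess edge at a time, drives the cyclomatic number to zero and produces exactly $T_{\mathrm{SAW}}(G,v)$.

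The main obstacle is choosing the cycle-breaking convention so that the local splitting identity is globally consistent. The delicate point is that each cycle through $w$ is traversed by the DFS in both directions, producing two pinned leaves whose pinned spins must be complementary in a coherent way: for every spin assignment to the cycle vertices, exactly one of the two directional pinnings is forced to be consistent and the other one zeroes out the weight. I would first sanity-check this on a triangle and on a $4$-cycle, then argue in general that the rule based on $\prec_w$ induces a well-defined orientation of each cycle so that the bijection between configurations of $G$ and "surviving" configurations of the pinned split graph is preserved. Once this is established, the remaining work is the combinatorial bookkeeping that unrolling the DFS via repeated splitting produces precisely the tree described in the first paragraph.
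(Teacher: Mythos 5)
Your construction of $\TSAW(G,v)$ via a DFS enumeration of self-avoiding walks with pinned closing leaves essentially matches Weitz's, so that part is fine. The gap is in the proof strategy: the claimed ``local splitting identity'' --- that replacing a non-root vertex $w$ on a cycle by two copies $w_1,w_2$ and attaching pinned boundary leaves preserves $R_{G,v}^{\sigma_S}$ --- is simply false. Test it on the $4$-cycle $v\!-\!u_1\!-\!w\!-\!u_2\!-\!v$ with the hard-core model: a direct computation gives
\[
R_{G,v}=\frac{\lambda(1+\lambda)}{1+3\lambda+\lambda^2},
\]
whereas after splitting $w$ into $w_1$ (adjacent to $u_1$) and $w_2$ (adjacent to $u_2$) with a pinned-occupied leaf on one copy and a pinned-unoccupied leaf on the other, the new graph is a tree with
\[
R_{G',v}=\frac{\lambda}{1+2\lambda},
\]
and these differ already at order $\lambda^3$ in the cross-multiplied comparison. (Omitting the leaves, or swapping the pinnings, gives still other values, none equal to $R_{G,v}$.) Intuitively, a hard-core pinned-occupied leaf at $w_i$ kills exactly the configurations with $\sigma(w_i)=1$, and a pinned-unoccupied leaf kills nothing; neither singles out the disagreeing terms $\sigma(w_1)\neq\sigma(w_2)$ as your argument requires. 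So the inductive step on the cyclomatic number does not go through.

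Weitz's actual argument is structurally different in a way that avoids this. He splits the \emph{root} $v$ (of degree $d$) into $d$ pinned copies, and the key lemma is a \emph{telescoping product}, not a single-graph invariance: writing $u_1,\dots,u_d$ for the neighbours of $v$,
\[
R_{G,v}^{\sigma_S}=\lambda\prod_{i=1}^{d}\frac{1}{1+R_{u_i}\!\left(G_i\right)},
\]
where $G_i$ is obtained from $G$ by deleting $v$ and pinning the other copies $v_j$ with a mixed pattern that \emph{changes with $i$} (e.g.\ $v_j\mapsto 1$ for $j<i$ and $v_j\mapsto 0$ for $j>i$). Each factor lives in a \emph{different} pinned graph; no one of them equals $R_{G,v}$. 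Recursing this identity at each level is what unrolls $G$ into $\TSAW(G,v)$, with the mixed pinnings becoming exactly the cycle-closing leaf labels in your DFS description. To repair your proof you would need to replace the single-split invariance by this product/recursion from the root (or some equivalent), and in particular you must allow the pinnings of the split copies to vary across the product --- a fixed assignment of pinned leaves, as you propose, cannot reproduce the ratio.

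Two smaller points. First, the induction on cyclomatic number is also awkward because a single split of $w$ into two copies, even if it preserved the ratio, would only reduce the cyclomatic number by one when the two copies happen to land in different $2$-edge-connected pieces; in general one would need a more careful choice of $w$ and of the edge partition. Second, the ``duplication of the external field $b_{\sigma(w)}$'' concern you raise is a red herring once you abandon the single-graph invariance: in Weitz's product each copy of $v$ appears in a separate factor and contributes its own $\lambda$ consistently, so no compensation is needed.
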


The SAW tree is rooted at $v$.
Each node corresponds to a self-avoiding walk starting from $v$. The length of the walk is the same as the distance between the node and the root $v$.
When a walk is closed, the node is set to unoccupied or occupied according to if the penultimate vertex is before or after the starting vertex of the cycle in some pre-determined local ordering at the last vertex.
For details, see \cite{Wei06}.

The SAW tree can have depth up to $n$, so may be exponential in size. Marginals on the SAW tree are therefore difficult to compute, but using the recursion in \Cref{saw-recursion} we can approximate them by truncating the tree. 
This approximation is accurate when strong spatial mixing holds,
and the time to compute the marginal is linear in the size of the truncated tree.
To maintain a polynomial running time, Weitz~\cite{Wei06} choose to truncate it at a suitable logarithmic depth.

%\begin{table}
%\begin{tabular}{|l|c|c|}\hline
%Condition & Goal &We have \\\hline
%$\lambda<\lambda_c(\Delta)$ & $n^{2-\epsilon}$ & \\\hline
%$\lambda\ll\lambda_c(\Delta)$ & $n^{1+\epsilon}$ & Hard-core $n^{1.5}$ \\\hline
%Sub-exponential growth (planar) & $n^{2-\epsilon}$ & $\mathbb{Z}^2$ \\\hline
%\end{tabular}
%\end{table}

\section{Fast SSM regime for 2-spin systems}
%\todo[inline]{We need \Cref{q-spin-decomposition} for general $2$-spin systems, but only discuss \Cref{hardcore-reduction} in this section. Would it be better to replace \Cref{hardcore-reduction} and bring up \Cref{q-spin-decomposition} much earlier, or should we leave the example with \Cref{hardcore-reduction} and add a small discussion to the end of the section for \Cref{q-spin-decomposition}? My preference is for the latter.}
%In this section we show that when SSM is sufficiently fast, especially when it is faster than the neighbourhood growth rate,
%we can approximate the partition function $Z$ in time $\widetilde{O}(n^{1.5}/\eps^2)$.
In this section we give a quadratic speedup of Weitz's Algorithm to estimate the marginal of a single vertex in $2$-spin systems, albeit being randomised instead of deterministic.
We use the hard-core model as our running example to illustrate the main ideas.
The main result of the section is \Cref{small-lambda}.
%With the standard self-reduction, we have an FPRAS that runs in sub-quadratic running time when the parameter $\lambda=o(\Delta^{-1.5})$. 

%%original n^1.5 running time:
%\begin{theorem}
%\label{small-lambda}
%%For a graph $G$ with maximum degree $\Delta$, if $G$ has strong spatial mixing with exponential decay, then there exists an FPRAS for the partition function of the hard-core model with $\lambda<\lambda_?$ in time $\widetilde{O}(n^{1.5}/\eps^2)$. 
%Let $\Delta\ge 2$ be an integer and $\lambda<\frac{1}{\Delta(\Delta-1)}$.
%For graphs with maximum degree $\Delta$, there exists an FPRAS for the partition function of the hard-core model with parameter $\lambda$ in time $\widetilde{O}(n^{1.5}/\eps^2)$,
%where $n$ is the number of vertices. 
%\end{theorem}

Let the hard-core model be described by $A=\begin{bmatrix}1&1\\1&0\end{bmatrix}$ and $b=(1,\lambda)$.
The support of the Gibbs distribution is the set of independent sets of $G$. 
%We will call the stronger SSM induced by $\lambda<\lambda_?$ as SSSM. 
Let vertices assigned $0$ not be in the independent set (unoccupied) and vertices assigned $1$ be in the independent set (occupied). 
Our algorithm uses self-reduction \cite{JVV86} as follows.
Since unoccupied vertices contribute $1$ to the weight of a configuration, we can consider the all $0$ configuration $\sigma_0$ where 
\begin{align}
\label{hardcore-reduction}
\frac{1}{Z(G)}=\mu(\sigma_0)&=\mu_{v_1}(0)\mu_{V\setminus\{v_1\}}^{v_1\leftarrow0}(\mathbf{0})
=\mu_{v_1}(0)\frac{1}{Z(G\setminus\{v_1\})}
=\mu_{G_1,v_1}(0)\mu_{G_2,v_2}(0)\cdots\mu_{G_n,v_n}(0),
\end{align}
and where $G_i=G\setminus\{v_1,\ldots,v_{i-1}\}$ for all $i\in[n]$.
This reduces the problem of computing $Z(G)$ to computing $\mu_{v_1}$ and recursively $Z(G\setminus\{v_1\})$.
As the $G_i$'s are subgraphs of $G$, they have the same degree bound and still exhibit SSM.
%The crux of our algorithm is to design a random variable that estimates $\mu_{v}$ in time $\widetilde{O}(n^{0.5})$
The crux of our algorithm is to design a random variable that estimates $\mu_{v}$ in time $\widetilde{O}(n^{1/(2k)})$.

%Since the SAW tree can have linear depth, and thus exponential size, we truncate it to achieve a fully polynomial randomized approximation scheme (FPRAS) for $Z(G)$. 

%On a hard-core model, when we have strong spatial mixing with exponential decay, we can truncate $T_{SAW}(G,v)$ at $O(\log(n))$ depth to approximate $\mu_v$ in $O(n^{0.5})$. Fast approximation of $\mu_v$ will help us due to the following recursion. 

Another ingredient we need is the lazy single-site sampler by Anand and Jerrum \cite{AJ22},
which allows us to rapidly sample a partial configuration vertex by vertex. 
The original setting of \cite{AJ22} requires sub-exponential neighbourhood growth in order to work up to the strong spatial mixing threshold, but in our parameter regime no sub-exponential growth is required. 
Moreover, only the expected running time is studied in \cite{AJ22}, while we need a tail bound. 
A similar analysis is done in \cite[Appendix B]{FGWWY22}. 
For completeness, we provide a proof specialised to our setting in \Cref{sec:truncate-AJ}.

\begin{lemma}  \label{lem:truncate-AJ}
  Let $\Delta\ge 2$ be an integer and $\lambda<\frac{1}{\Delta-1}$.
  Let $G=(V,E)$ be a graph with maximum degree $\Delta$.
  There exists an algorithm that, 
  for any $v\in V$,
  draws a sample from $\mu_v$
  and halts in time $O(\log \frac{1}{\eps})$ with probability at least $1-\eps$.
\end{lemma}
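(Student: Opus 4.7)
The plan is to apply the recursive marginal sampler of Anand and Jerrum \cite{AJ22} to the self-avoiding walk tree $T = \TSAW(G, v)$ and to analyse its running time as the total progeny of a subcritical Galton--Watson branching process. By \Cref{saw-theorem} the marginal at the root of $T$ coincides with $\mu_v$ on $G$, so it suffices to sample from this marginal on $T$.

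The AJ sampler is a lazy recursive procedure: to assign a spin to a node $u$ of the tree, one first draws a uniform random variable $U \in [0,1]$ and compares it against an a priori interval in which $\mu_u(1)$ is known to lie. For the hard-core model, $\mu_u(1) \in \left[0,\, \lambda/(1+\lambda)\right]$, so if $U > \lambda/(1+\lambda)$ we safely output $0$ and stop; otherwise we must resolve $\mu_u(1)$ exactly by recursively sampling the children of $u$, terminating early as soon as some child is assigned spin $1$ (which forces $X_u = 0$). Correctness of the output distribution follows from the SAW-tree recursion \eqref{saw-recursion} together with a direct verification of the sampler's invariants as in \cite{AJ22}.

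Let $N$ denote the number of tree nodes examined during one call. The key point is that $N$ is stochastically dominated by the total progeny of a Galton--Watson process in which each exploring node has at most $\Delta-1$ potential children (given by the SAW-tree branching, minus the parent edge) and each such child is actually visited only when the parent's uniform variable lands in the uncertainty interval, which has width at most $p := \lambda/(1+\lambda)$. Thus the per-node offspring mean is at most $(\Delta-1)p$, which under the hypothesis $\lambda < 1/(\Delta-1)$ is strictly less than $1$, so the branching process is strictly subcritical and has offspring bounded above by $\Delta-1$.

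A subcritical Galton--Watson process with bounded offspring has total progeny with exponentially decaying tail. Concretely, if $\phi$ is the offspring probability generating function with $\phi'(1) < 1$, then the total-progeny generating function $f$ satisfies $f(s) = s\,\phi(f(s))$, is finite on some interval $[1, s_0]$ with $s_0 > 1$, and Markov's inequality applied to $s_0^N$ yields $\Pr[N \ge k] \le s_0^{-k}\,f(s_0) = O(\alpha^k)$ with $\alpha := 1/s_0 \in (0,1)$. Setting $k = C\log(1/\eps)$ for a sufficiently large constant $C = C(\Delta,\lambda)$ produces the claimed $O(\log(1/\eps))$ tail bound on the running time. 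The main obstacle is justifying the stochastic domination cleanly: one must carefully account for the SAW-tree structure, the early-termination rule when resolving siblings of a "hard-case" vertex, and the fact that marginals inside the recursion are conditioned on the partial configuration sampled so far. A very similar analysis has already been carried out in \cite[Appendix B]{FGWWY22}, and I would follow that template, specialising to the hard-core model, where the uncertainty interval is directly bounded by $\lambda/(1+\lambda)$ and no further neighbourhood-growth assumption is required because $\lambda < 1/(\Delta-1)$ already makes the process subcritical.
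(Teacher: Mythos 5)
Your proposal captures the paper's core mechanism---the Anand--Jerrum lazy sampler with the explored region dominated by a subcritical branching process---but departs from the paper's proof in two ways. First, the paper's \Cref{Alg:ssms-hardcore} runs directly on $G$, with no SAW tree at all: its correctness needs only the pointwise identity $\mu_v(1)=\frac{\lambda}{1+\lambda}\Pr[\text{all unpinned neighbours are }0]$, which holds on any graph, so the detour through $\TSAW(G,v)$ (and the attendant on-the-fly tree construction and pinned-leaf bookkeeping) is unnecessary for this lemma; the SAW tree is reserved for \Cref{small-lambda-alg}, where it serves a different purpose. Second, for the tail bound you invoke the classical total-progeny generating-function argument, $f(s)=s\,\phi(f(s))$ plus Markov's inequality on $s_0^N$; the paper instead reformulates the queue length as the one-dimensional random walk \eqref{equ:gwb-def}, constructs the martingale $Y_t=X_t+\bigl(\frac{\lambda}{1+\lambda}\Delta-1\bigr)t$, and applies Azuma--Hoeffding in \Cref{lem:gwb-terminate}. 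Both yield the required exponential tail; the martingale route is arguably more self-contained and gives the constant in $O(\log\frac{1}{\eps})$ explicitly, while the generating-function route is a standard alternative. One minor slip in your write-up: the root may spawn up to $\Delta$ (not $\Delta-1$) children, and the paper's dominating chain conservatively allows offspring $\Delta$ throughout; indeed $\lambda<\frac{1}{\Delta-1}$ is exactly the condition $\frac{\lambda}{1+\lambda}\Delta<1$ that makes that walk drift downward, so using $\Delta-1$ only makes your bound looser than it needs to be at the root but tighter elsewhere, and does not affect the conclusion.
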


%\begin{theorem}[Theorem 5 of \cite{AJ21}]
%\label{constant-sample-theorem}
%For a $q$-spin system on $G$ with strong spatial mixing with rate $f$, if $G$ has bounded growth of the form
%\[|S_\ell(v)|\leq g(\ell)\]
%for all $v\in V$ where $S_\ell(v)=\{u\in V|d_G(v,u)=\ell\}$
%and
%\[q\cdot f(\ell)\cdot g(\ell)\leq\alpha<1\]
%then given a vertex $v\in V$ and a partial configuration $\sigma_\Lambda$ on $\Lambda\subset V$, we can sample a partial configuration $\sigma_{\Lambda\cup\{v\}}$ with expected runtime bounded by $\frac{1}{1-\alpha}$. 
%\htodo{To fit the definition of FPRAS, we also need a whp statement here.}
%\end{theorem}

Our algorithm then combines the lazy sampler of \Cref{lem:truncate-AJ} with the SAW tree of \cite{Wei06}.
We expand the SAW tree, and then use \Cref{lem:truncate-AJ} to sample a truncated boundary,
from which we use the recursion in \eqref{saw-recursion} to get our estimate.
The depth of the truncation controls the variance of this estimator.
In our algorithm, we only need to bound the variance from above by $1/n$.
In contrast, Weitz's algorithm requires the error of the marginal incurred by the truncation to be bounded from above by $O(1/n)$.
As the variance of our estimator decays twice as fast as the marginal errors,
our truncation depth is roughly half of that in Weitz's algorithm.
Consequently, we achieved a quadratic speedup for estimating each term in \eqref{hardcore-reduction}.

\begin{lemma}
\label{small-lambda-alg}
Let $\=G_{\Delta}$ be the family of graphs with maximum degree $\Delta$.
Suppose the hard-core model has strong spatial mixing with decay rate $C\Delta^{-k\ell}$ for some constant $C>0$ for $\=G$.
%For a graph $G$ with maximum degree $\Delta$, if the hard-core model on $G$ has strong spatial mixing with decay rate $C\Delta^{-k\ell}$ for some constant $C>0$, 
Then there exists an algorithm that, for any $G=(V,E)\in\=G$ and $v\in V$, generates a random sample $\widetilde{p}_v$ and halts in time $O(n^{1/(2k)}(\log \frac{n}{\delta})^2)$ with probability at least $1-\frac{\delta}{8}$.
Furthermore, $\Ex[\widetilde{p}_v]=\mu_v(0)$ and $\Var{}{\widetilde{p}_v}\le 1/n$. 
\end{lemma}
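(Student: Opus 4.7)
The plan is to construct an unbiased estimator of $\mu_v(0)$ with variance at most $1/n$ by combining Weitz's SAW tree with the lazy marginal sampler of \Cref{lem:truncate-AJ}. The estimator samples a random boundary condition at a carefully chosen depth of the SAW tree and then evaluates the root marginal via the recursion~\eqref{saw-recursion}.

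\textbf{Step 1: Truncation depth.} Set $L = \lceil \frac{1}{2k}\log_\Delta(C^2 n) \rceil = \Theta(\log n / k)$. Then $f(L)^2 = (C\Delta^{-kL})^2 \le 1/n$, and the truncated SAW tree $T_L = \TSAW(G,v)$ cut off at depth $L$ has size $|T_L| \le \Delta^L = O(n^{1/(2k)})$.

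\textbf{Step 2: Sample the boundary.} Build $T_L$ explicitly in time $O(|T_L|)$. The depth-$L$ leaves split into ``closed walk'' leaves (whose spins are fixed by Weitz's rule) and ``free'' leaves $\partial$. Our goal is to sample $\sigma \in \{0,1\}^{\partial}$ from its joint marginal distribution $\mu_\partial^T$ under the full SAW tree $T$. We proceed sequentially: enumerate the free leaves $w_1,w_2,\ldots$ in some order and, for each $w_i$, invoke \Cref{lem:truncate-AJ} on $T$ conditioned on $(\sigma_{w_1},\ldots,\sigma_{w_{i-1}})$ to obtain $\sigma_{w_i}$. Conditioning on a vertex of $T$ leaves a hard-core system on a subtree with the same $\lambda$ and maximum degree at most $\Delta$, so the prerequisite $\lambda<1/(\Delta-1)$ of \Cref{lem:truncate-AJ} is preserved throughout; the sampler explores $T$ lazily and can thus be run with only on-the-fly access to $T$ derived from $G$. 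Choosing the per-call failure parameter as $\delta / (8|\partial|)$, a union bound ensures that all $|\partial|\le|T_L|$ calls complete within time $O(\log(n/\delta))$ each, so the total sampling time is $O(|T_L|\log(n/\delta))$ with probability at least $1 - \delta/8$.

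\textbf{Step 3: Compute the estimator.} Using the sampled boundary $\sigma$, compute $R_{T_L}^\sigma$ by a postorder application of~\eqref{saw-recursion} and return $\widetilde{p}_v = 1/(1+R_{T_L}^\sigma)$. This uses $O(|T_L|)$ arithmetic operations, yielding the final runtime $O(n^{1/(2k)}(\log(n/\delta))^2)$ once one accounts for $O(\log(n/\delta))$-bit precision in the arithmetic.

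\textbf{Analysis.} For unbiasedness, \Cref{saw-theorem} gives $\mu_v^G(0)=\mu_v^T(0)$, and since $\sigma\sim \mu_\partial^T$ by construction, the law of total expectation yields $\Ex[\widetilde{p}_v] = \Ex_\sigma\bigl[\mu_v^{T,\sigma}(0)\bigr] = \mu_v^T(0) = \mu_v(0)$. For variance, note that $T$ is a tree of maximum degree at most $\Delta$, hence lies in $\=G_\Delta$, so SSM with rate $C\Delta^{-k\ell}$ applies to $T$. Since any two realisations of the boundary differ only on $\partial$, which sits at depth exactly $L$ from $v$, SSM gives $|\mu_v^{T,\sigma}(0)-\mu_v^{T,\tau}(0)|\le f(L)$ for any $\sigma,\tau$; averaging against $\mu_\partial^T$ yields $|\widetilde{p}_v - \mu_v(0)| \le f(L)$ almost surely, hence $\Var[\widetilde{p}_v]\le f(L)^2\le 1/n$.

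\textbf{Main obstacle.} The delicate point is justifying the application of \Cref{lem:truncate-AJ} to the SAW tree $T$, a finite but potentially exponential-sized object available only implicitly through $G$. One has to check (i) that the Anand--Jerrum sampler's local exploration can be driven by the on-the-fly SAW construction, (ii) that the sequential conditioning on previously sampled leaves leaves the system in the regime where \Cref{lem:truncate-AJ} applies, and (iii) that SSM transfers to $T$ with the same rate. The remaining calculation---bounding the truncation, controlling variance via SSM, and summing the per-call runtimes via a union bound---is then routine.
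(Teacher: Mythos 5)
Your proposal follows the same strategy as the paper's proof: truncate the SAW tree at depth $L=\Theta(\log n/k)$, sample a joint boundary configuration sequentially via \Cref{lem:truncate-AJ}, compute the root marginal by the recursion \eqref{saw-recursion}, and control the variance through SSM. The truncation depth, the unbiasedness argument via the law of total expectation, and the variance calculation all match the paper's.

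The one place your runtime accounting is incomplete is exactly the obstacle (i) that you flag but do not resolve. In Step~2 you charge each Anand--Jerrum call $O(\log(n/\delta))$ time, but that quantity is only the number of exploration steps the sampler makes. You also need to cost the on-the-fly SAW construction that supplies those steps: whenever the sampler descends below level $L$, each newly reached node must be checked for whether the corresponding self-avoiding walk closes (in which case Weitz's rule pins it). The paper implements this by storing each node's ancestor list and scanning it, which costs $O(\ell+\log(n/\delta))$ per exploration step; that multiplicative factor --- not bit precision --- is what contributes the second logarithm in the final bound $O(n^{1/(2k)}(\log\frac{n}{\delta})^2)$. Your final numeric estimate is therefore correct, but Step~2 undercharges the sampling phase and Step~3 attributes the extra $\log$ to the wrong cause; the fix is to replace the bit-precision handwave with the ancestor-list accounting.
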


\begin{proof}
Let $\TSAW$ be the self-avoiding walk tree for $G$ rooted at $v$ as defined in \Cref{saw-theorem}, and let $S=\{u\in V|d_{\TSAW}(v,u)=\ell\}$ where $\ell$ is a parameter we will fix later. We have
\[\mu_{\TSAW,v}(0)=\sum_{\sigma\in\{0,1\}^S}\mu_{\TSAW}(\sigma)\mu_{\TSAW,v}^\sigma(0)=\Ex_{\sigma\sim\mu_{\TSAW,S}}[\mu_{\TSAW,v}^\sigma(0)].\]
We use \Cref{lem:truncate-AJ} to sample $\sigma$.
Fix an arbitrary order of $S=\{s_1,s_2,\ldots,s_{\abs{S}}\}$.
We sample first the marginal of $s_1$ with $\eps\defeq\frac{\delta}{8\abs{S}}$.
Then, conditioned on the result on $s_1$, we sample $s_2$ with the same $\eps$, and so on and so forth.
Note that whatever the result on $s_1$ is, it always reduces to a hard-core instance of a smaller graph.
Thus, the condition of \Cref{lem:truncate-AJ} is always satisfied until all of $S$ are sampled.
This gives a boundary condition $\sigma_S$ in $\TSAW$.

As the full SAW tree may be exponential in size,
a little care is required to implement the outline above.
We first expand $\TSAW$ up to level $\ell$, denoted $T_\textrm{SAW,$\ell$}$. %, and for each vertex in $S$, record all its ancestors, namely the vertices to avoid in further expansion. 
The algorithm in \Cref{lem:truncate-AJ} (\Cref{Alg:ssms-hardcore} in \Cref{sec:truncate-AJ}) is essentially an exploration process.
When we apply it to sample the boundary condition~$\sigma_S$,
we expand the SAW tree below $T_\textrm{SAW,$\ell$}$ on the fly,
only creating vertices that are explored by the algorithm.
Note that the construction of the SAW tree imposes a boundary condition whenever a vertex in $G$ is encountered again in a self-avoiding walk.
We implement this pinning by remembering a list of all ancestors of a given node in the SAW tree and checking the next vertex to explore against this list.
Since \Cref{lem:truncate-AJ} halts in $O(\log\frac{1}{\eps})$ time with probability at least $1-\eps$,
this extra check incurs a multiplicative slowdown factor $O(\ell+\log\frac{1}{\eps})=O(\ell+\log\frac{\abs{S}}{\delta})$ with probability at least $1-\eps$. 
%In particular, if a vertex in the original graph is encountered again, then its second copy on $T_{\mathrm{SAW}}$ receives a pinning due to the nature of Weitz's algorithm, and \Cref{Alg:ssms-hardcore} just returns that pinning immediately. 

Given $\sigma_S$, we can compute $\mu_v^{\sigma_S}(0)=\widetilde{p}_v$ with the standard dynamic programming approach.
By a union bound, the total running time of sampling the boundary is $O(\abs{S}\log\frac{\abs{S}}{\delta}(\ell+\log\frac{\abs{S}}{\delta}))$ with probability at least $1-\frac{\delta}{8}$,
  and the dynamic programming step uses time $O(\abs{T_\textrm{SAW,$\ell$}})$.

%We choose $\ell\defeq\left\lceil\frac{\log(n)/2+\log C}{\log(\Delta)}\right\rceil$ so that $C\Delta^{-\ell}\le n^{-0.5}$ and $\Delta^\ell\le C' n^{0.5}$ for some constant $C'>0$. 
We choose $\ell\defeq\left\lceil\frac{\log(n)/2-\log C}{k\log(\Delta)}\right\rceil$ so that $C\Delta^{-k\ell}\le n^{-0.5}$ and $\Delta^\ell\le C' n^{1/(2k)}$ for some constant $C'>0$. 
Note that $|S|\le(\Delta-1)^\ell$ and $|T_\textrm{SAW,$\ell$}|\le \Delta^\ell$. 
%Then the total runtime to draw a sample is $O(n^{0.5}\log \frac{n}{\delta})$.
Then the total runtime to draw a sample is $O(n^{1/(2k)}(\log \frac{n}{\delta})^2)$ with probability at least $1-\frac{\delta}{8}$.

Finally, we analyse the variance. Strong spatial mixing implies that $\abs{\mu_v^{\sigma_S}(0)-\mu_v(0)}\leq C\Delta^{-k\ell}$ for any $\sigma_S$, so
\begin{align}  \label{eqn:var-error}
  \Var{}{\widetilde{p}_v}=\Var{\sigma_S}{\mu_v^{\sigma_S}(0)}=\Ex_{\sigma_S\sim\mu_{\TSAW,S}}[\abs{\mu_v^{\sigma_S}(0)-\mu_v(0)}^2]
  &\leq \left(C\Delta^{-k\ell}\right)^2\leq n^{-1},
\end{align}
which is what we desire.
%for $\ell=\frac{1}{2}\frac{\log(n)}{\log(\Delta)}$ and $c>\log(\Delta)$. 
\end{proof}

\begin{lemma}
\label{lem:contraction}
  %For a graph $G$ with maximum degree $\Delta$, if $\lambda\leq\frac{1}{\Delta(\Delta-1)}$, the hardcore model on $G$ exhibits strong spatial mixing with decay rate $C\Delta^{-\ell}$. 
  For a graph $G$ with maximum degree $\Delta$, if $\lambda\leq\frac{1}{\Delta^k(\Delta-1)}$ for some constant $k>0$, the hard-core model on $G$ exhibits strong spatial mixing with decay rate $C\Delta^{-k\ell}$. 
\end{lemma}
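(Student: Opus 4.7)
The plan is to first reduce the claim to trees via Weitz's self-avoiding walk tree construction (\Cref{saw-theorem}), and then establish SSM on a tree by a standard contraction argument in a convenient parametrization. By Weitz's theorem, the marginal ratio $R_{G,v}^{\sigma_S}$ equals $R_{\TSAW}^{\widehat{\sigma}}$ for the natural extension of $\sigma_S$ onto the SAW tree. Two configurations $\sigma_S,\tau_S$ that agree outside a set $T$ with $\dist_G(v,T)=\ell$ induce pinnings on $\TSAW$ that agree except at nodes at depth at least $\ell$. Since for a 2-state system $\dTV(\mu_v^{\sigma_S},\mu_v^{\tau_S})=|\mu_v^{\sigma_S}(0)-\mu_v^{\tau_S}(0)|$, it suffices to show that on a tree of maximum degree $\Delta$, changing pinnings only at depth $\ge \ell$ perturbs the root marginal by at most $C\Delta^{-k\ell}$.

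On the tree, I would reparametrize the recursion (\ref{saw-recursion}) with $\beta=1,\gamma=0$ by setting $q_u\defeq 1/(1+R_u)=\mu_u(0)\in[0,1]$, so that
\[
q_v \;=\; \frac{1}{1+\lambda\prod_{i=1}^{d_v}q_{v_i}},
\]
where $d_v$ is the number of children of $v$. Pinnings give $q_u\in\{0,1\}$; unpinned leaves inherit the same value on both sides and contribute nothing to the difference $|q_u-q_u'|$. A direct calculation gives $|\partial q_v/\partial q_{v_i}|\le \lambda$, so the mean-value inequality yields
\[
|q_v - q_v'| \;\le\; \lambda\sum_{i=1}^{d_v}|q_{v_i}-q_{v_i}'|.
\]

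Now I would induct on depth from the disagreeing boundary. At a disagreeing pinned leaf, $|q_u-q_u'|\le 1$; at agreeing pinned vertices, self-avoiding-walk closures, or unpinned leaves, $|q_u-q_u'|=0$. For a non-root interior vertex $v$ one has $d_v\le \Delta-1$, and the hypothesis $\lambda\le 1/(\Delta^k(\Delta-1))$ gives the contraction coefficient $\lambda(\Delta-1)\le \Delta^{-k}$, so $|q_v-q_v'|\le \Delta^{-k}\max_i|q_{v_i}-q_{v_i}'|$. Iterating $\ell-1$ times from depth $\ell$ down to depth $1$ produces $\Delta^{-k(\ell-1)}$; at the root the degree may be $\Delta$, contributing a further factor of $\lambda\Delta\le\tfrac{\Delta}{\Delta-1}\Delta^{-k}\le 2\Delta^{-k}$, which is absorbed into the constant $C$. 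This yields $|q_v-q_v'|\le C\Delta^{-k\ell}$ as required.

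The main obstacle I anticipate is the bookkeeping at the boundary rather than the algebraic contraction: pinned vertices have $R_u\in\{0,\infty\}$ so working with $R_u$ directly gives infinite initial differences, and one must also verify that pinnings coming from $S\setminus T$ and from SAW-tree closures -- which are identical under $\sigma$ and $\tau$ -- drop out of the difference. Switching to the $q$-parametrization takes values in the compact interval $[0,1]$, simultaneously handling the infinite case and making ``agree $\Rightarrow$ difference $0$'' automatic. Once this is set up, the contraction step is essentially routine.
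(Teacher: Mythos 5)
Your proof is correct, and it takes a genuinely different route from the paper's. The paper establishes the decay rate by citing a prior result (\cite[Lemma~7.20]{Guo15}, building on \cite{SST14}) that bounds the SSM decay rate $r$ by $\abs{f'(\widehat{x})}$ at the unique positive fixed point $\widehat{x}$ of the symmetric univariate recursion $f(x)=\lambda/(1+x)^{\Delta-1}$, and then estimates $\abs{f'(\widehat{x})}<(\Delta-1)f(\widehat x)=(\Delta-1)\widehat{x}<(\Delta-1)\lambda\le\Delta^{-k}$. You instead give a direct, self-contained one-step contraction argument in the parametrization $q_u=1/(1+R_u)\in[0,1]$: the uniform bound $\abs{\partial q_v/\partial q_{v_i}}\le\lambda$ (and hence per-level contraction factor $\lambda(\Delta-1)\le\Delta^{-k}$ for internal vertices, $\lambda\Delta\le 2\Delta^{-k}$ at the root) is established by an elementary calculation, and then propagated down the SAW tree by induction starting from a difference of at most $1$ at the depth-$\ell$ cut. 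Both arguments deliver the same bound $r\le(\Delta-1)\lambda$, so quantitatively they are equivalent here. What your approach buys is self-containedness and elementariness — no reliance on the fixed-point/potential-function machinery behind the cited lemma — and the $q$-parametrization cleanly sidesteps the infinities in $R$ for pinned vertices, which you correctly identify as the only real bookkeeping hazard. What the paper's citation buys is brevity, and a framework (derivative-at-fixed-point) that gives tighter rates in other regimes where a worst-case uniform derivative bound would be lossy; in this particular parameter range the two coincide.
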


\begin{proof}
  It is well-known that if $\lambda<\lambda_c(\Delta) = \frac{(\Delta-1)^{\Delta-1}}{(\Delta-2)^{\Delta}}\approx\frac{e}{\Delta}$,
  strong spatial mixing holds with exponential decay $Cr^{\ell}$ for some constant $C$ and $r<1$ \cite{Wei06}.
  Moreover, the decay rate $r$ can be controlled by a quantity related to the recursion \eqref{saw-recursion} \cite{SST14}.
  For example, by \cite[Lemma 7.20]{Guo15}, $r$ is bounded by $r\le|f'(\widehat{x})|$,
  where $f(x)\defeq\frac{\lambda}{(1+x)^{\Delta-1}}$ is the symmetric version of the recursion in \eqref{saw-recursion} and $\widehat{x}$ is the unique positive fixed point of $f$.
  %By, for example, \cite[Lemma 7.20]{Guo15}, $r\le|f'(\widehat{x})|$,
  %where $f(x)\defeq\frac{\lambda}{(1+x)^{\Delta-1}}$ is the symmetric version of the recursion in \eqref{saw-recursion} and $\widehat{x}$ is the unique positive fixed point of $f$.
  (Note that when the degree of $G$ is at most $\Delta$, all vertices but the root in $\TSAW$ have branching number $\Delta-1$.)
  Then we have
  \begin{align*}
    \abs{f'(x)} = \abs{-\frac{f(x)(\Delta-1)}{1+x}} < (\Delta-1) f(x).
  \end{align*}
  As $\widehat{x}>0$ and $\widehat{x}$ is a fixed point,
  \[\widehat{x}<\widehat{x}(1+\widehat{x})^{\Delta-1}=\lambda.\]
  Thus, as $\lambda\leq\frac{1}{\Delta^k(\Delta-1)}$,
  \begin{align*}
    r&\le|f'(\widehat{x})|< (\Delta-1) f(\widehat{x})=(\Delta-1)\widehat{x}<\frac{1}{\Delta^k}. \qedhere
  \end{align*}
\end{proof}

Now we are ready to prove \Cref{small-lambda}.

\def\epsnot{\eps_0}

\begin{proof}[Proof of Theorem~\ref{small-lambda}]
  We first give an algorithm that runs in time $O\left(\frac{n^{1+1/(2k)}}{\eps^2}(\log \frac{n}{\eps})^2\right)$, 
  and then improve the dependency on $\eps$ by a simple trick at the end of the proof.
  It would be easier to first introduce an algorithm that runs within the desired bound with high probability.
  To have a fixed running time upper bound, we then truncate the algorithm.

  Set $N\defeq\lceil 8e^{(1+\lambda)^2}/\epsnot^2 \rceil$ where $\epsnot=\eps/2$.
  Let $X\defeq\prod_{i=1}^n \widetilde{p}_{G_i,v_i}$ where $G_1=G$ and $G_i=G_{i-1}\setminus\{v_{i-1}\}$.
  By \Cref{lem:contraction}, we can use \Cref{small-lambda-alg} to draw $N$ samples of $X$ and take its average, where we set $\delta=\frac{1}{nN}$ in  \Cref{small-lambda-alg}.
%  Each $\widetilde{p}_{G_i,v_i}$ can be computed in time $O(n^{0.5}\log \frac{n}{\eps})$, so computing one sample of $X$ takes $O(n^{1.5}\log \frac{n}{\eps})$ time. 
%  The overall running time is $O(Nn^{1.5}\log \frac{n}{\eps})=O\left(\frac{n^{1.5}}{\eps^2}\log \frac{n}{\eps}\right)$.
  Each $\widetilde{p}_{G_i,v_i}$ can be computed in time $O(n^{1/(2k)}(\log \frac{n}{\delta})^2)$ with probability at least $1-\frac{\delta}{8}$, 
  so computing one sample of $X$ takes time $O(n^{1+1/(2k)}(\log \frac{n}{\delta})^2)$ time with probability at least $1-\frac{n\delta}{8}$ by a union bound. 
  By a union bound again, the overall running time of taking the average is $O(Nn^{1+1/(2k)}(\log \frac{n}{\delta})^2)=O\left(\frac{n^{1+1/(2k)}}{\eps^2}(\log \frac{n}{\eps})^2\right)$ 
  with probability at least $1-\frac{\delta}{8}\cdot nN = \frac{7}{8}$.
%  If the running time is over this threshold,
%  we call the algorithm failed and return an arbitrary value as the average.
  
%  By a union bound, we can use \Cref{small-lambda-alg} to compute $\widetilde{Z}=\prod_{i=1}^n\frac{1}{\widetilde{p}_{G_i,v_i}}$ where $G_1=G$ and $G_i=G_{i-1}\setminus\{v_{i-1}\}$,
%  with success probability at least $1-\frac{1}{4}=\frac{3}{4}$. 
%  Each $\widetilde{p}_{G_i,v_i}$ can be computed in time $O(n^{0.5}\log n)$, so computing $\widetilde{Z}$ takes $O(n^{1.5}\log n)$ time in total. 

%  Let $X=1/\widetilde{Z}$.
  Since $\{\widetilde{p}_{G_i,v_i}\}$ are mutually independent, by \Cref{small-lambda-alg},
  \begin{align*}
    \Ex[X]=\Ex\left[\prod_{i=1}^n\widetilde{p}_{G_i,v_i}\right] & = \prod_{i=1}^n\mu_{G_i,v_i}(0)=\frac{1}{Z(G)}.
  \end{align*}
%\[\Ex[\widetilde{Z}]=\Ex\left[\prod_{i=1}^n\frac{1}{\widetilde{p}_{G_i,v_i}}\right]=\prod_{i=1}^n\frac{1}{\mu_{G_i,v_i}}=Z(G).\]
  We bound $\Var{}{X}$ as follows
  \begin{align*}
    \frac{\Var{}{X}}{(\Ex[X])^2}&=\frac{\Ex[X^2]}{(\Ex[X])^2}-1 = \frac{\prod_{i=1}^n\Ex[\widetilde{p}_{G_i,v_i}^2]}{\prod_{i=1}^n\Ex[\widetilde{p}_{G_i,v_i}]^2}-1\\
    &=\prod_{i=1}^n\left(1+\frac{\Var{}{\widetilde{p}_{G_i,v_i}}}{(\Ex[\widetilde{p}_{G_i,v_i}])^2}\right)-1\\
    &\leq\left(1+\frac{c}{n}\right)^n-1\tag*{(by \Cref{small-lambda-alg})}\\
    &<e^{c},
  \end{align*}
%To bound $\Var{}{\widetilde{Z}}$ we have
%\begin{align*}
%\frac{\Var{}{\widetilde{Z}}}{(\Ex[\widetilde{Z}])^2}&=\frac{\Ex[\widetilde{Z}^2]}{(\Ex[\widetilde{Z}])^2}-1\\
%&=\prod_{i=1}^n\left(1+\frac{\Var{}{\widetilde{p}_{G_i,v_i}}}{(\Ex[\widetilde{p}_{G_i,v_i}])^2}\right)-1\\
%&\leq\left(1+\frac{c}{n}\right)^n-1\\
%&\leq1+2c-1=2c,
%\end{align*}
  where $c=\max_i(1/\mu_{G_i,v_i}(0)^2)$. 
  Note that as $\mu_{G_i,v_i}(0)\ge \frac{1}{1+\lambda}$, $c\le (1+\lambda)^2$ and is a constant.
%  Next we take the average of $\lceil 8c/\eps^2 \rceil$ independent copies of $\widetilde{X}$.

  Let $\widetilde{X}$ be the average of $N$ samples of $X$.
  Then $\Var{}{\widetilde{X}} = \frac{\Var{}{X}}{N} \le \frac{e^{c}}{N\cdot Z(G)^2}$.
  By Chebyshev's inequality,
  \begin{align*}
    \Pr\left[\abs{\widetilde{X}-\frac{1}{Z(G)}}\ge\frac{\epsnot}{Z(G)}\right]\le\frac{\Var{}{\widetilde{X}}}{\frac{\epsnot^2}{Z(G)^2}} 
    \le \frac{e^{c}}{N\cdot Z(G)^2} \cdot \frac{Z(G)^2}{\epsnot^2}\le \frac{1}{8}.
  \end{align*}
  Thus, with probability at least $7/8$,
  we have that $\frac{1-\epsnot}{Z(G)}\le \widetilde{X} \le \frac{1+\epsnot}{Z(G)}$.
  Finally, we output $\widetilde{Z}=1/\widetilde{X}$.
  To make sure that the algorithm runs within the time bound $O\left(\frac{n^{1+1/(2k)}}{\eps^2}(\log \frac{n}{\eps})^2\right)$, 
  we truncate the algorithm if it runs overtime and output an arbitrary value in that case.
  This truncated version can be coupled with the untruncated algorithm with probability at least $7/8$,
  and its output $\widetilde{Z}$ satisfies $1-\eps\le \frac{\widetilde{Z}}{Z(G)}\le 1+\eps$ with probability at least $7/8-1/8=3/4$.

  To improve the dependency on $1/\eps$,
  construct $G'$ as $t=\ceil{2/\eps}$ copies of $G$.
  Without loss of generality we may assume that $\eps<1$ so that $t>2$
  Denote $Z'=Z(G')$ and $Z=Z(G)$.
  Then $Z'=Z^m$.
  We set $\eps_1=1-1/e$ and apply the algorithm above on $G'$ to get an estimate $\widetilde{Z}$ to $Z'$,
  which satisfies
  \begin{align*}
    e^{-1}\le\frac{\widetilde{Z}}{Z'}\le 2-1/e < e.
  \end{align*}
  Thus,
  \begin{align*}
    e^{-1/t}\le \frac{\widetilde{Z}^{1/t}}{Z} \le e^{1/t}.
  \end{align*}
  As $t=\ceil{2/\eps}>2$, $e^{1/t}<1+2/t\le 1+ \eps$ and $e^{-1/t} > 1-\frac{1}{t}>1-\eps$. 
  Thus, $\widetilde{Z}^{1/t}$ is an estimate to $Z$ within the desired accuracy.
  As for the running time of this algorithm, 
  the size of $G'$ is $t n=O\big(\frac{n}{\eps}\big)$ and $\eps_1=\Omega(1)$.
  Thus, the overall running time is $O\big( \big( \frac{n}{\eps} \big)^{1+\frac{1}{2k}}(\log \frac{n}{\eps})^2 \big)=\widetilde{O}\big( \big( \frac{n}{\eps} \big)^{1+\frac{1}{2k}} \big)$.
\end{proof}

Note that, Weitz's algorithm is faster if the correlation decay is faster,
but in that case so is our algorithm.
In \Cref{sec:lb-Weitz},
\Cref{lemma-inf-lower} shows that the correlation decay cannot be much faster than the standard analysis in the parameter regimes of \Cref{small-lambda},
and our speed-up, comparing to Weitz's algorithm, is always at least $\widetilde{O}(n^{1/2k-o(1/k^2)})$.

%With a small adjustment, we can achieve equivalent dependency on $1/\eps$ as on $n$. 
%
%\begin{lemma}\label{lem:err-improvement}
%Fix a constant $k>0$. 
%Let $\Delta\ge 2$ be an integer and $\lambda<\frac{1}{\Delta^k(\Delta-1)}$. 
%Additionally, assume $1/\eps\in\mathbb{N}_{>1}$. 
%For graphs with maximum degree $\Delta$, there exists an FPRAS for the partition function of the hard-core model with parameter $\lambda$ in time $\widetilde{O}((n/\eps)^{1+\frac{1}{2k}})$, 
%where $n$ is the number of vertices. 
%\end{lemma}
%
%\begin{proof}
%Construct $G'$ as $m=1/\eps$ copies of $G$, and use $Z'=Z(G')$ and $Z=Z(G)$. Then $Z'=Z^m$, and from \Cref{small-lambda} with $\eps_0=1-1/e$ we have an FPRAS for $Z'$ that runs in time \[\widetilde{O}((mn)^{1+\frac{1}{2k}}/\eps_0^2)=\widetilde{O}((n/\eps)^{1+\frac{1}{2k}}).\]
%We show that this approximation for $Z'^{1/m}$ gives us an approximation for $Z$ with error $\eps$. From the definition for an FPRAS we have,
%\begin{align*}
%1-\eps_0\leq&\frac{\widetilde{Z'}}{Z'}\leq1+\eps_0\\
%e^{-1}\leq&\frac{\widetilde{Z'}}{Z^m}\leq2-e^{-1}\leq(1+1/m)^m\\
%e^{-m}\leq&\frac{\widetilde{Z'}^{1/m}}{Z}\leq1+1/m\\
%1-\eps\leq&\frac{\widetilde{Z'}^{1/m}}{Z}\leq1+\eps. 
%\end{align*}
%\end{proof}

We also remark that \Cref{small-lambda} generalises to antiferromagnetic 2-spin systems.
This is because all the key ingredients, namely correlation decay, Weitz's SAW tree, and the marginal sampler of Anand and Jerrum all generalise,
except that the Anand-Jerrum algorithm would require the neighbourhood growth rate smaller than the decay rate (see \Cref{generic-AJ}).
This is also the parameter regime where Weitz's algorithm is faster than $O(n^2)$.
Thus, our speedup is still in the sub-quadratic regime.
The self-reduction in \eqref{hardcore-reduction} also generalises (as we will see in \eqref{q-spin-decomposition} in the next section).
One needs to redo the calculations in \Cref{lem:contraction} to get a precise statement, which we will omit here.

\section{Speed-up on planar graphs}
In this section we mainly consider (not necessarily two-state) spin systems on planar graphs.
We show that for any planar graph with quadratic neighbourhood growth,
when SSM holds with exponential decay,
approximate counting can be done in sub-quadratic time.
For example, this includes all subgraphs of the 2D lattice $\=Z^2$.
The circle-packing theorem asserts that any planar graph is the tangent graph of some circle packing.
All planar graphs with bounded-radius circle packings have quadratic neighbourhood growth.
Thus this is a substantial family of planar graphs.
Moreover, in \Cref{sec:poly-growth} we extend the result to (not necessarily planar) graphs with polynomial growth,
but the speed up factor there is sub-polynomial yet faster than $(\log n)^k$ for any $k$.

\begin{definition}  \label{def:quadratic-growth}
  A graph family $\=G$ has \emph{quadratic growth},
  if there is a constant $C_0$ such that for any $G=(V,E)\in\=G$, $v\in V$, and any integer $\ell>0$,
  $\abs{B_v(\ell)}\le C_0\ell^2$.
\end{definition}

Subgraphs of the 2D lattice $\=Z^2$ satisfies \Cref{def:quadratic-growth} with $C_0=5$.
Note that by taking $\ell=1$, \Cref{def:quadratic-growth} implies that the maximum degree is no larger than $C_0$.

\begin{theorem}\label{lattice-counting}
  Let $\=G$ be a family of planar graphs with quadratic growth (assume the rate is $C_0\ell^2$).
  Let ${\mathbf{A}}$ and ${\mathbf{b}}$ specify a $q$-state spin system, which exhibits SSM with decay rate $Cr^{-\ell}$ on $\=G$.
  Then there is a constant $c>0$ such that there exists an FPRAS for the partition function of $G\in\=G$ with $n$ vertices with run-time $\widetilde{O}\big(\big(\frac{n}{\eps}\big)^{2-c}\big)$.
  The constant $c$ depends on $C_0$, $q$, and~$r$.
\end{theorem}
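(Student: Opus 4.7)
The plan is to extend the marginal-estimation template behind \Cref{small-lambda} from the SAW tree to balls in $G$ itself, and then exploit quadratic growth and planarity to amortise the per-sample cost via a precomputed lookup table of conditional marginals. As a first step I would set up the self-reduction in the $q$-state setting analogous to \eqref{hardcore-reduction}: pick any all-feasible reference configuration $\sigma_0$ (which exists by permissiveness) and write
\begin{align*}
Z(G)=\frac{w(\sigma_0)}{\mu(\sigma_0)}=\frac{w(\sigma_0)}{\prod_{i=1}^{n}\mu_{G_i,v_i}^{\sigma_0|_{v_{<i}}}\!(\sigma_0(v_i))},
\end{align*}
where $G_i=G\setminus\{v_1,\dots,v_{i-1}\}$ still inherits quadratic growth and SSM. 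By the product-variance calculation from the proof of \Cref{small-lambda}, it then suffices to produce, for each factor, an unbiased estimator of variance $O(\eps^{2}/n)$ and multiply the sample averages.

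For each marginal $\mu_{G_i,v_i}(s)$ the estimator mimics \Cref{small-lambda-alg} but acts directly on the graph. Choose a radius $\ell^{*}$, take $B=B_{v_i}(\ell^{*})$ in $G_i$, let $S$ be the sphere at distance $\ell^{*}$, and output $\mu_{B,v_i}^{\sigma_S}(s)$ where $\sigma_S\sim\mu_S$ is drawn by the $q$-spin version of the Anand--Jerrum recursive sampler (the generic form of \Cref{lem:truncate-AJ}, applicable here since polynomial growth is sub-exponential). By the Markov property $\mu_{B,v_i}^{\sigma_S}(s)=\mu_{G_i,v_i}^{\sigma_S}(s)$, so the estimator is unbiased; by SSM it differs from the true marginal by at most $Cr^{-\ell^{*}}$, so the single-draw variance is at most $(Cr^{-\ell^{*}})^{2}$.

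The speed-up comes from precomputing, once per vertex $v_i$, a lookup table indexed by $\tau_S\in[q]^{S}$ with entries $\mu_{B,v_i}^{\tau_S}(s)$, so that each subsequent draw becomes a table lookup. Two features of planar quadratic-growth graphs keep the table small and cheap. First, a simple averaging argument controls the boundary: since $\sum_{\ell=\ell_0+1}^{2\ell_0}|S_{v_i}(\ell)|\le|B_{v_i}(2\ell_0)|\le 4C_0\ell_0^{2}$, there exists $\ell^{*}\in[\ell_0+1,2\ell_0]$ with $|S_{v_i}(\ell^{*})|\le 4C_0\ell_0=O(\ell^{*})$, so the table has only $q^{O(\ell^{*})}$ entries. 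Second, by the linear local tree-width property of planar graphs (Eppstein), $\tw(G_i[B])=O(\ell^{*})$, and a tree decomposition of this width can be computed in polynomial time; standard DP on this decomposition (in the spirit of \cite{YZ13}) computes $\mu_{B,v_i}^{\tau_S}(s)$ for a single $\tau_S$ in time $q^{O(\ell^{*})}\cdot|B|$, so enumerating all boundary conditions fills the table in time $q^{O(\ell^{*})}\cdot\ell^{*2}$.

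Writing $\alpha=\log_2 r>0$ and $m=n/\eps$, and setting $\ell^{*}=\kappa\log_2 m$, the per-vertex table cost is $m^{\kappa\log_2 q}\,\mathrm{polylog}(m)$, and the number of samples needed to push the variance of each factor below $\eps^{2}/n$ is $N=O(m^{1-2\kappa\alpha}/\eps)$, each of polylogarithmic cost. Summed over $n=m\eps$ vertices the total time is $\widetilde{O}\bigl(m^{1+\kappa\log_2 q}\eps+m^{2-2\kappa\alpha}\bigr)$, which is balanced by $\kappa=1/(\log_2 q+2\alpha)$ to $\widetilde{O}(m^{2-c})$ with $c=2\alpha/(\log_2 q+2\alpha)>0$; the $\eps$-boosting trick at the end of the proof of \Cref{small-lambda} then transfers verbatim. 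The main obstacle I foresee is the boundary-size control: quadratic growth alone does not force any \emph{single} sphere to be thin, so the floating choice of $\ell^{*}$ via averaging is essential and must be redone for each vertex in each reduced graph $G_i$. A secondary subtlety is extending the high-probability running-time tail bound for the Anand--Jerrum sampler from the hard-core case of \Cref{lem:truncate-AJ} to permissive $q$-spin systems on quadratic-growth graphs under SSM, which should follow from the argument in \Cref{sec:truncate-AJ} combined with the generic recursion underlying \Cref{generic-AJ}.
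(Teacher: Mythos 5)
The proposal's structure matches the paper's argument closely: a lookup table of conditional marginals indexed by boundary configurations (\Cref{YZ-table}), quadratic-growth averaging to find a thin sphere $\ell'\in[\ell/2,\ell]$ with $\abs{S_v(\ell')}=O(\ell)$, linear local tree-width of planar balls for the dynamic programming, random boundary conditions drawn via the Anand--Jerrum sampler (\Cref{generic-AJ}), and the same parameter balance. However, there is a genuine gap in the self-reduction step.

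You write $Z(G) = w(\sigma_0)/\mu(\sigma_0)$ for a \emph{fixed} reference configuration $\sigma_0$ supplied by permissiveness, and then state that it suffices to estimate each factor $\mu_{G_i,v_i}^{\sigma_0|_{v_{<i}}}(\sigma_0(v_i))$ with variance $O(\eps^2/n)$. But the quantity governing how per-factor errors compound in the product estimator is the \emph{relative} variance $\Var{}{\widetilde Z_i}/(\Ex[\widetilde Z_i])^2$, and permissiveness only guarantees $\mu(\sigma_0)>0$; it gives no constant lower bound on the individual conditional marginals $\mu_{v_i}^{\sigma_0|_{v_{<i}}}(\sigma_0(v_i))$. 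For a general permissive $q$-spin system these marginals can be polynomially or even exponentially small along the ordering, in which case the absolute variance target $O(\eps^2/n)$ is no longer adequate and the truncation depth $\ell^*$ needed to compensate destroys the sub-quadratic budget. (For the hard-core model with $\sigma_0$ the all-zero configuration one does get $\mu_{G_i,v_i}(0)\ge 1/(1+\lambda)$, which is why a fixed reference works in \Cref{small-lambda}, but that is a special feature of that model, not of permissiveness.)

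The paper avoids this by constructing the target configuration $\sigma$ \emph{adaptively}: at step $i$ it sets $\sigma_{v_i}$ to an approximate $\argmax_{k\in[q]}\mu_{v_i}^{\sigma_i}(k)$, computed via a constant-radius boundary so that SSM guarantees $\mu_{v_i}^{\sigma_i}(\sigma_{v_i})\ge 1/(2q)$ for every $i$. This greedy choice is the missing ingredient in your proposal; once it is added, the relative-variance calculation closes with the factor $e^{4q^2}$ as in the paper, and the remainder of your argument (boundary averaging, table construction, sampler, balancing, and the $\eps$-boosting trick) aligns with the paper's proof.
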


\Cref{lattice-counting} is the detailed version of \Cref{thm:main2}.

Essentially the idea is still to find an estimator for the marginal of an arbitrary vertex that can be evaluated very quickly.
Let us first consider a $\sqrt{n}$-by-$\sqrt{n}$ grid.
For any vertex $v$, we consider the sphere $S_v(\ell)$ of radius $\ell=O(\log n)$ centered at $v$,
and a random configuration $\tau$ on $S_v(\ell)$.
Let $B_v(\ell)$ be the ball of radius $\ell$ centered at $v$.
Since any planar graph has linear local tree-width \cite{DH04,Epp00},
$B_v(\ell)$ has tree-width $O(\ell)$.
Thus, given a configuration $\tau$ on $S$, 
the law of $\mu_v^{\tau}$ can be computed in time $2^{O(\ell)}\textrm{poly}(\ell)$ for a fixed $\tau$ (see, e.g.~\cite{YZ13}\footnote{The algorithm in \cite{YZ13} uses the separator decomposition. Another possibility is to first find a constant approximation of the tree decomposition first \cite{KT16}, and then apply Courcelle's theorem.}).
This step can be very efficient with a carefully chosen $\ell$.

%Since the number of $\tau$'s can actually be a small polynomial,
For a general bounded degree planar graph, $\abs{S_v(\ell)}$ can be a polynomial in $n$, which makes the number of possible $\tau$'s exponential in $n$.
However, for a $\sqrt{n}$-by-$\sqrt{n}$ grid, $\abs{S_v(\ell)}\le 4\ell=O(\log n)$,
and the number of possible $\tau$'s is much smaller and is a small polynomial in $n$.
Thus, it would be more efficient to first create a table to list all possibilities of $\tau$,
and then, instead of computing $\mu_v^{\tau}$ each time,
 simply look up the answer from this table. We can do the same for any subgraph of $\mathbb{Z}^2$ by choosing a boundary based on distance in the original grid. 

For a general $G\in\=G$,
we no longer have a linear bound on the size of the boundary.
See \Cref{sec:quadratic-boundary} for a subgraph of $\=Z^2$ where the distance $\ell$ boundary has size $\Omega(\ell^2)$.
However, since $\=G$ has quadratic growth,
we know that $B_v(\ell)\le C_0\ell^2$ for some constant $C_0>0$.
It implies that 
\begin{align*}
  \sum_{i=\ell/2}^{\ell}\abs{S_v(\ell)}\le\abs{B_v(\ell)} \le C_0\ell^2.
\end{align*}
Thus, there must exist an $\ell'\in[\ell/2,\ell]$ such that $\abs{S_v(\ell')}\le 2C_0\ell$.
We will find this $\ell'$ and use $S_v(\ell')$ instead.

Once again, we use a self-reduction similar to \eqref{hardcore-reduction}. For $q$-spin systems, given a feasible configuration $\sigma$, we have the decomposition,
\begin{align}
	\label{q-spin-decomposition}
	\frac{w(\sigma)}{Z_G}=\mu(\sigma)=\mu_{v_1}(\sigma_{v_1})\mu_{v_2}^{\sigma_{v_1}}(\sigma_{v_2})\mu_{v_3}^{\sigma_{v_1},\sigma_{v_2}}(\sigma_{v_3})\ldots\mu_{v_n}^{\sigma_{v_1},\ldots,\sigma_{v_{n-1}}}(\sigma_{v_n}).
\end{align}
When computing our table, we will have to condition on the already pinned vertices. 

\begin{lemma}\label{YZ-table}
  Let ${\mathbf{A}}$, ${\mathbf{b}}$, $q$ and $G$ be as in \Cref{lattice-counting}.
  For $v\in V$, a partial configuration $\sigma$, and an integer~$\ell$, 
  we can find an $\ell'$ such that $\ell'\in[\ell/2,\ell]$,
  and then construct a table of $\mu_v^{\sigma,\tau}$, indexed by every boundary configuration $\tau$ on unpinned vertices of $S_v(\ell')$.
  The total run-time is $2^{C_1\ell}$,
  where~$C_1$ is a constant depending on $C_0$ and $q$.
\end{lemma}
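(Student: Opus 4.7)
The plan is to realise the lemma in three steps: pick a good radius $\ell'$, enumerate boundary configurations, and compute the required marginal for each using bounded tree-width dynamic programming.

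First, finding $\ell'$. Run BFS from $v$ to depth $\ell$, which takes time $O(|B_v(\ell)|)=O(\ell^2)$ and gives us the sizes $|S_v(i)|$ for $i\in[\ell/2,\ell]$. By quadratic growth
\[
\sum_{i=\lceil \ell/2\rceil}^{\ell}|S_v(i)|\;\le\;|B_v(\ell)|\;\le\;C_0\ell^2,
\]
so by pigeonhole at least one index $\ell'\in[\ell/2,\ell]$ satisfies $|S_v(\ell')|\le 4C_0\ell$. Pick any such $\ell'$; the unpinned vertices of $S_v(\ell')$ form a set $U$ of size at most $4C_0\ell$.

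Second, enumerate all boundary configurations. There are at most $q^{|U|}\le q^{4C_0\ell}$ assignments $\tau\in[q]^U$, and we will build the table by computing $\mu_v^{\sigma,\tau}$ for each. Because the system is permissive and we only care about marginals at $v$, we can restrict attention to the induced sub-instance on $B_v(\ell')$ with the fixed pinnings on $\sigma$-vertices inside $B_v(\ell')$ and on $\tau\cup\sigma_{S_v(\ell')}$ along the boundary.

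Third, compute each marginal via a tree-decomposition DP. By Eppstein's linear local tree-width theorem for planar graphs \cite{Epp00,DH04}, the subgraph induced on $B_v(\ell')$ has tree-width at most $C'\ell$ for a universal constant $C'$, and a tree decomposition of width $O(\ell)$ on $O(\ell^2)$ vertices can be computed in time $2^{O(\ell)}\poly(\ell)$ (e.g.\ via a constant-factor tree-width approximation). We compute this decomposition \emph{once}, independent of $\tau$. Then, for each fixed $\tau$, standard bag-by-bag marginal DP for a $q$-spin system on a width-$w$ decomposition runs in time $q^{O(w)}\cdot|B_v(\ell')|=q^{O(\ell)}\cdot\poly(\ell)$; fixed pinnings (both $\sigma$ and $\tau\cup\sigma_{S_v(\ell')}$) are handled by zeroing out table entries whose bag assignment contradicts the pinning, at no extra asymptotic cost. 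Summing over the $q^{4C_0\ell}$ configurations and adding the one-off decomposition cost gives total time $q^{O(\ell)}\cdot q^{O(\ell)}\cdot\poly(\ell)\;\le\;2^{C_1\ell}$ for a constant $C_1=C_1(C_0,q)$, as required.

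The main obstacle is the clean invocation of the tree-decomposition machinery: we need a computable, not merely existential, tree decomposition of width $O(\ell)$ for $B_v(\ell')$, and we need the DP to handle partial pinnings and to be reused across all $\tau$ without recomputing the decomposition. Both are standard but should be spelled out; everything else reduces to the pigeonhole bound on $|S_v(\ell')|$ and a direct product of exponentials in $\ell$.
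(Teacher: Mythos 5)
Your proposal is correct and follows essentially the same three-step structure as the paper's proof: BFS plus pigeonhole to find a radius $\ell'$ with $|S_v(\ell')|=O(\ell)$, enumerate the $q^{O(\ell)}$ boundary configurations, and compute each marginal via a bounded-tree-width dynamic program exploiting planarity (linear local tree-width). The paper cites the separator-decomposition algorithm of Yin--Zhang and, in a footnote, the constant-factor tree-width approximation route you adopt; these are equivalent for the purposes of the $2^{O(\ell)}\poly(\ell)$ per-entry bound. Your remark that the decomposition can be computed once and reused across all $\tau$ is a sensible (if minor) explicit efficiency observation that the paper leaves implicit.
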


\begin{proof}
  As discussed earlier, due to the quadratic growth of $G$,
  there must exist an $\ell$ such that $\ell'\in[\ell/2,\ell]$ and $\abs{S_v(\ell')}\le 2C_0\ell$.
  To find this $\ell$,
  we do a breadth-first-search to check $S_v(i)$ from $i=\ell/2$ to $\ell$.
  The running time is at most $O(B_v(\ell))=O(\ell^2)$.

  Once $\ell'$ is found,
  $\abs{S^{\sigma}_v(\ell')}\le\abs{S_v(\ell')}\le 2C_0\ell$,
  and there are at most $q^{2C_0\ell}$ config\-urations~$\tau$ in our table.
  As $G$ is a planar graph, the tree-width of the ball $\tw(B_v(\ell'))=O(\ell')=O(\ell)$.
  Thus, using for example the algorithm of \cite{YZ13}, each entry of the table can be computed in time $2^{O(\ell)}\poly{\ell}$.
  The total amount of time required is $O(\ell^2)+q^{2C_0\ell}2^{O(\ell)}\poly{\ell}\le 2^{C_1\ell}$,
  for some sufficiently large constant~$C_1$.
\end{proof}

While we may construct this table very quickly,
it is not clear how to compute or estimate the marginals of the boundary condition $\tau$'s rapidly.
Instead, we sample a random one using the marginal sampler \cite{AJ22} that terminates in almost linear time with high probability. 
See \Cref{generic-AJ}. 

\begin{lemma}
	\label{lem:lattice-estimator}
    Let ${\mathbf{A}}$, ${\mathbf{b}}$, $q$ and $G\in\=G$ be as in \Cref{lattice-counting}.
    %Let the vertices of $G$ be ordered $v_1,\ldots,v_n$, and 
    Let $\sigma$ be a partial configuration.
    For any $v\in V$ not pinned under $\sigma$ and any $k\in[q]$,
    there exists an algorithm that generates a random variable $\widetilde{Z}$ such that $\Ex[\widetilde{Z}]=\mu_{v}^{\sigma}(k)$ and $\Var{}{\widetilde{Z}}\leq1/n$.
    Moreover, its running time is $\widetilde{O}(n^{1-c})$ with high probability where $c$ depends on $C_0$, $q$, and~$r$.
\end{lemma}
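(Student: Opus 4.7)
My plan is to combine the lookup table of \Cref{YZ-table} with the Anand-Jerrum marginal sampler (\Cref{generic-AJ}) in the same spirit as \Cref{small-lambda-alg}. The estimator will simply read off a precomputed value: fix a radius $\ell$, invoke \Cref{YZ-table} to build the table $\{\mu_v^{\sigma,\tau}(k)\}_\tau$ over all boundary configurations on the unpinned part of $S_v(\ell')$, sample a random $\tau$ from the true conditional marginal $\mu^\sigma_{S_v(\ell')}$ using the Anand-Jerrum sampler, and output $\widetilde{Z} \defeq \mu_v^{\sigma,\tau}(k)$ via table lookup.

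Concretely, I would set $\ell \defeq \lceil(\log n + 2\log C)/\log r\rceil = \Theta(\log n / \log r)$, tuned so that exponential SSM at the shortest possible radius $\ell' \ge \ell/2$ already pushes $(Cr^{-\ell/2})^2$ below $1/n$. With this choice, \Cref{YZ-table} returns both $\ell'$ and the full table in time $2^{C_1 \ell} = n^{C_1/\log r}$. To draw $\tau$, I would run the Anand-Jerrum sampler vertex-by-vertex over the $O(\ell)$ unpinned vertices of $S_v(\ell')$, each call configured with per-step failure probability $1/\poly(n)$, so that by a union bound the entire boundary is sampled in $\widetilde{O}((\log n)^2)$ time with probability $1-1/\poly(n)$. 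A single $O(1)$ table lookup then produces $\widetilde{Z}$.

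For correctness, the tower property gives $\Ex[\widetilde{Z}] = \Ex_{\tau \sim \mu^\sigma_{S_v(\ell')}}\!\left[\mu_v^{\sigma,\tau}(k)\right] = \mu_v^\sigma(k)$. Strong spatial mixing yields $\left|\mu_v^{\sigma,\tau}(k) - \mu_v^\sigma(k)\right| \le Cr^{-\ell'} \le Cr^{-\ell/2}$ uniformly in $\tau$, hence $\Var{}{\widetilde{Z}} \le (Cr^{-\ell/2})^2 \le 1/n$ by the choice of $\ell$. The runtime is dominated by the table construction at $n^{C_1/\log r + o(1)}$, giving an overall high-probability bound of $\widetilde{O}(n^{1-c})$ with $c \defeq 1 - C_1/\log r$.

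The main obstacle is pinning down the regime where $c > 0$: the table size $2^{C_1 \ell}$, which scales like $q^{2C_0 \ell} \cdot 2^{O(\ell)}$ by \Cref{YZ-table}, must be outpaced by the SSM decay factor $r^{-\ell}$. This is exactly what forces the constant $c$ to depend on $C_0$, $q$, and $r$, and it is the analogue of the SSM-vs-growth tension that appeared in \Cref{small-lambda-alg}. A secondary technical nuisance is upgrading the Anand-Jerrum sampler's expected-time guarantee into a high-probability bound across all $O(\log n)$ boundary vertices; this is handled as in the proof of \Cref{small-lambda} by taking the per-step failure parameter to be $1/\poly(n)$ and truncating overruns, which perturbs the expectation and variance only by a negligible additive amount and so preserves the stated guarantees.
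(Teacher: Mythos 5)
The proposal contains a genuine gap in the runtime analysis. You use a \emph{single} sample $\widetilde{Z} = \mu_v^{\sigma,\tau}(k)$ from one random boundary $\tau$, which forces you to pick $\ell$ large enough that $(Cr^{-\ell/2})^2 \le 1/n$, i.e.\ $\ell = \Theta(\log n / \log r)$. With this choice, the table-construction cost $2^{C_1 \ell}$ becomes $n^{C_1/\log r}$, and you set $c = 1 - C_1/\log r$. The trouble is that there is no reason for $c$ to be positive: $C_1$ is a constant determined by $C_0$ and $q$ (it absorbs factors like $2C_0 \log q$ and the tree-width-DP cost), whereas $\log r$ can be an arbitrarily small positive constant when SSM holds only barely. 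Nothing in the hypotheses of the lemma guarantees $\log r > C_1$, so your algorithm is sub-linear only under an unstated strong-decay assumption. You acknowledge this as ``the main obstacle'' and frame it as the source of $c$'s dependence on the parameters, but the lemma requires $c > 0$ for \emph{every} spin system with exponential SSM on a quadratic-growth planar family — not only for ones with $\log r > C_1$.

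The paper resolves this with an idea you are missing: rather than one sample with a large $\ell$, it defines $\widetilde{Z}$ as the empirical mean of $m$ i.i.d.\ samples $Z_v(\tau_j)$ using a \emph{smaller} radius $\ell$. Each $Z_v(\tau)$ has variance $\le C^2 r^{-\ell}$, so taking $m = \lceil n C^2 r^{-\ell'}\rceil$ samples drives the variance of the average below $1/n$. This decouples the table size from the variance requirement: the table costs $2^{C_1\ell}$ and the sampling costs $\widetilde{O}(m\ell)$, and these are \emph{balanced} by the choice $\ell = \frac{1-c}{C_1}\log n$ with $c = \frac{\log r}{\log r + 2C_1}$, which is strictly positive whenever $r>1$ and $C_1$ is finite. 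Your expectation and variance calculations for a single sample are correct per se, and the use of \Cref{YZ-table} and \Cref{generic-AJ} is the right toolkit, but without the averaging-over-$m$-samples trade-off the runtime claim fails in general.
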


\begin{proof}
  Let $\ell$ be a constant that we will choose later.
  Let $\ell'\in[\ell/2,\ell]$ be as in \Cref{YZ-table},
  and let $\tau$ be a boundary condition on the unpinned vertices of $S_{v}(\ell')$ under $\sigma$.
  Let $Z_{v}(\tau)=\mu_{v}^{\sigma,\tau}(k)$ so that $\Ex_\tau[Z_{v}(\tau)]=\mu_{v}^{\sigma}(k)$. 
  Then, let
  \[\widetilde{Z}\defeq\frac{1}{m}\sum_{j=1}^mZ_{v}(\tau_j)\]
  be the empirical mean over $m$ random samples $\tau_j$,
  where we will choose $m$ later. 

  Since the spin system exhibits SSM with decay rate $Cr^{-\ell}$,
  similar to \eqref{eqn:var-error},
  it follows that $\Var{}{Z_{v}(\tau)}\le C^2r^{-2\ell'}\le C^2r^{-\ell}$. Then
  \[\Var{}{\widetilde{Z}}=\Var{}{\frac{1}{m}\sum_{j=1}^mZ_{v}(\tau_j)}\le\frac{C^2}{mr^{\ell'}}.\]
  Thus, we set $m=\lceil nC^2r^{-\ell'} \rceil$ samples so that $\Var{}{\widetilde{Z}}\leq 1/n$. 

  For the running time,
  we first construct the table as in \Cref{YZ-table}.
  Then we take $m$ samples of $\tau$, each of which can be generated in time almost linear in $\abs{S_{v}(\ell')}$ with high probability using \Cref{generic-AJ}.
  As $\abs{S_{v}(\ell')}\le 2C_0\ell$,
  the runtime in total is at most $O(2^{C_1\ell}+n\ell r^{-\ell'}\log n)$ with high probability.
  We choose $\ell=\frac{1-c}{C_1}\log n$ for $c=\frac{\log r}{\log r+2C_1}\in[0,1]$,
  so that $\ell' \geq \ell /2 = \frac{1-c}{2C_1} \log n$  and the total runtime is $O(n^{1-c}+n^{1-(1-c)\log r/(2C_1)}\log^2 n)=\widetilde{O}(n^{1-c})$.
  %,
  %which is $\widetilde{O}(n^{1-c})$ for $c=\min\{c_0,\frac{(1-c_0)\log r}{2C_1}\}$.
\end{proof}

Now we are ready to prove \Cref{lattice-counting}. 

\begin{proof}[Proof of Theorem~\ref{lattice-counting}]
  Using the same trick at the end of the proof of \Cref{small-lambda},
  it suffices to give an algorithm that runs in time $\widetilde{O}(n^{2-c}/\eps^2)$.

  We are going to use \eqref{q-spin-decomposition} to do a self-reduction.
  First we construct the target configuration $\sigma$ adaptively.
  Given $\sigma$ on $v_1,\ldots,v_{i-1}$,
  we want to choose $\sigma_{v_i}$ to be $k\in [q]$ with the largest marginal.
  In other words, $\sigma_{v_i}=\argmax_{k\in[q]}\mu_{v_i}^{\sigma_i}(k)$ for each $i$,
  where $\sigma_i$ is what has been constructed so far, namely $\sigma_{v_1},\ldots,\sigma_{v_{i-1}}$.
  Of course, this step cannot be done exactly.
  Instead, we may fix a constant $t = t(C,r,q)$ such that $Cr^{-t} \leq \frac{1}{2q}$, fix an arbitrary boundary  configuration $\tau$ on $S^{\sigma_i}_{v_i}(t)$ and then pick $k\in[q]$ that maximises $\mu^{\sigma_i,\tau}_{v_i}(k)$.
  %Instead, we run \Cref{lem:lattice-estimator} to get one estimate for each $k\in[q]$, and pick the largest.
  SSM guarantees that
  $\mu_{v_i}^{\sigma_i}(\sigma_{v_i})\ge 1/2q$, where $\sigma_{v_i} = k$.
  %We will assume that this is the case in the rest.
  This step takes constant time as $t$ is a constant.

  The rest of the proof is very similar to that of \Cref{small-lambda}.
  Set $N\defeq\lceil10 e^{4q^2}/\eps_0^2\rceil$ where $\eps_0=\eps/2$. 
  We compute $X=\prod_{i=1}^n\widetilde{Z}_i$ where each $\widetilde{Z}_i$ is from \Cref{lem:lattice-estimator} plugging in $v_i$ and $\sigma_i$. Due to the decomposition \eqref{q-spin-decomposition} we have
  \[\Ex[X]=\Ex\left[\prod_{i=1}^n\widetilde{Z}_i\right]=\prod_{i=1}^n\Ex\left[\widetilde{Z}_i\right]=\prod_{i=1}^n\mu_{v_i}^{\sigma_i}(\sigma_{v_i})=\mu(\sigma).\]
  We also compute $w(\sigma)$ which can be done in $O(n)$ on a planar graph with quadratic growth.
  By \Cref{lem:lattice-estimator}, the time to generate one $X$ is $O(n^{2-c}\polylog{n})$ with high probability. 
  We bound $\Var{}{X}$ as follows
  \begin{align*}
    \frac{\Var{}{X}}{(\Ex[X])^2}&=\frac{\Ex[X^2]}{(\Ex[X])^2}-1=\frac{\prod_{i=1}^n\Ex[\widetilde{Z}_i^2]}{\prod_{i=1}^n\Ex[\widetilde{Z}_i]^2}-1
    =\prod_{i=1}^n\left(1+\frac{\Var{}{\widetilde{Z}_i}}{(\Ex[\widetilde{Z}_i])^2}\right)-1\\
    &\leq\left(1+\frac{4q^2}{n}\right)^n-1\leq e^{4q^2},
  \end{align*}
  where we use $\mu_{v_i}^{\sigma_i}(\sigma_{v_i})\ge 1/2q$ for any $i\in[n]$.
  Let $\widetilde{X}$ be the average of $N$ samples of $X$. Then $\Var{}{\widetilde{X}}=\frac{\Var{}{X}}{N}\leq\frac{e^{4q^2}}{N\cdot Z(G)^2}$. 
  By Chebyshev's inequality,
  \begin{align*}
    \Pr\left[\abs{\widetilde{X}-\frac{1}{Z(G)}}\ge\frac{\epsnot}{Z(G)}\right]\le\frac{\Var{}{\widetilde{X}}}{\frac{\epsnot^2}{Z(G)^2}} 
    \le \frac{e^{4q^2}}{N\cdot Z(G)^2} \cdot \frac{Z(G)^2}{\epsnot^2}\le \frac{1}{10}.
  \end{align*}
  Thus, with probability at least $9/10$,
  we have that $\frac{1-\epsnot}{Z(G)}\le \widetilde{X} \le \frac{1+\epsnot}{Z(G)}$.
  Finally, we output $\widetilde{Z}=w(\sigma)/\widetilde{X}$.
  Since \Cref{def:quadratic-growth} implies a constant degree bound, the graph is sparse and $w(\sigma)$ can be computed in $O(n)$ time.
  To make sure that the algorithm runs within the time bound $O\left(\frac{n^{2-c}}{\eps^2}\right)$, 
  we truncate the algorithm if it runs overtime and output an arbitrary value in that case.
  This truncated version can be coupled with the untruncated algorithm with probability at least $7/8$,
  and its output $\widetilde{Z}$ satisfies $1-\eps\le \frac{\widetilde{Z}}{Z(G)}\le 1+\eps$ with probability at least $3/4$.
\end{proof}

%We can use the same method we used for \Cref{small-lambda} to improve the dependency on $1/\eps$. 
%
%\begin{corollary}
%  Let $\=G$ be a family of planar graphs with quadratic growth (assume the rate is $C_0\ell^2$).
%  Let ${\mathbf{A}}$ and ${\mathbf{b}}$ specify a $q$-state spin system, which exhibits SSM with decay rate $Cr^{-\ell}$ on $\=G$.
%  Additionally, assume $1/\eps\in\mathbb{N}_{>1}$. 
%  Then there is a constant $c>0$ such that there exists an FPRAS for the partition function of $G\in\=G$ with $n$ vertices with run-time $\widetilde{O}((n/\eps)^{2-c})$.
%  The constant $c$ depends on $C_0$, $q$, and~$r$.
%\end{corollary}

\subsection{Bounded-radius circle packing}\label{sec:circle-packing}

%Now we generalise \Cref{lattice-counting} to a more general class of planar graphs,
%namely those with bounded radius circle packings.
Here we show that \Cref{lattice-counting} applies to any planar graph with bounded-radius circle packings.
We begin with the definition of a circle packing. 
\begin{definition}
A \emph{circle packing} is a collection $\mathcal{C}$ of interior-disjoint circles over the 2-dim-ensional plane. 
A \emph{tangency graph} of a circle packing is a graph having a vertex for each circle, and an edge between two vertices if and only if the two corresponding circles are tangent. 
\end{definition}

The Koebe-Andreev-Thurston \emph{circle packing theorem} states the following.
\begin{theorem}
For every connected locally finite simple planar graph $\mathbb{G}$, there exists a circle packing whose tangency graph is (isomorphic to) $\mathbb{G}$. 
\end{theorem}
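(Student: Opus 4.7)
The plan is to reduce to the case of a finite triangulation of the sphere, solve that case by a convex variational argument in the spirit of Colin de Verdi\`ere, and recover the general statement by exhaustion and compactness.

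First I would carry out two reductions. If the finite planar graph is not already a triangulation, I add edges (and, if necessary, an outer vertex so the unbounded face is also triangulated) to obtain a triangulation $\widetilde G$ with the same vertex set; any packing of $\widetilde G$ trivially realises the original tangencies, and extra tangencies along the added edges can be broken by a small perturbation of radii. For an infinite locally finite $\mathbb{G}$, I fix a basepoint $v_0$ and take the exhaustion by the graph balls $\mathbb{G}_n$ around $v_0$. Each $\mathbb{G}_n$ is packed and the packing is normalised (for instance by a M\"obius transformation pinning the circle at $v_0$ to the unit disc), after which a Cantor diagonalisation over vertices extracts a subsequential limit; disjointness and tangency of circles are closed conditions and pass to the limit, giving a packing of $\mathbb{G}$ itself.

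For a finite triangulation, I would set up the angle equation. Given positive radii $(r_v)$, the law of cosines makes the Euclidean angle at $v$ inside a face $(u,v,w)$ an explicit smooth function $\alpha_{v,(u,v,w)}(r)$ of $(r_u, r_v, r_w)$, and a planar packing realising the combinatorics exists if and only if
\[ \theta_v(r) \defeq \sum_{\text{faces } f \ni v} \alpha_{v,f}(r) \;=\; 2\pi \qquad \text{for every interior vertex } v. \]
Passing to log-radii $\rho_v = \log r_v$, these angle sums are the gradient of a functional $\Phi(\rho)$ that is strictly convex on the slice $\{\sum_v \rho_v = 0\}$ (a direct Hessian computation using symmetry of $\partial \alpha_v / \partial \rho_u$ in $u, v$). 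Existence of the desired radii then follows from existence of a minimiser of $\Phi$, uniqueness up to the obvious scaling follows from strict convexity, and the planar packing itself is assembled by gluing Euclidean triangles with these side lengths around each vertex: the gluing is consistent precisely because $\theta_v = 2\pi$ at every interior vertex, and no monodromy is picked up around any cycle.

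The main obstacle is proving that $\Phi$ is coercive, i.e. that minimising sequences cannot have some $r_v$ escape to $0$ or $\infty$. This is handled by combining the boundary asymptotics of the angle function (the angle at $v$ inside a face tends to $\pi$ as $r_v/r_u \to 0$ and to $0$ as $r_v/r_u \to \infty$) with a combinatorial counting argument using Euler's formula on the triangulation, which controls the number of faces incident to any subset of vertices. The accounting shows that any large deviation of the $\rho_v$'s forces $\sum_v (\theta_v - 2\pi)\rho_v$ to blow up, yielding the required growth of $\Phi$ outside a compact set in the slice. With coerciveness in hand, existence, uniqueness, and planar realisation all fall into place, and then the reductions above lift the conclusion to arbitrary connected locally finite planar~$\mathbb{G}$.
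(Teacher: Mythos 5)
The paper does not prove this statement: it is the classical Koebe--Andreev--Thurston circle packing theorem (extended to infinite locally finite graphs by He and Schramm), quoted as a known result without proof, so there is no in-paper argument to compare against. Your proposal is a reasonable outline of one of the standard proofs — the variational approach in the style of Colin de Verdi\`ere, with finite triangulations handled by minimising a convex energy whose critical points satisfy the angle-sum equations, and infinite graphs handled by exhaustion and a diagonal/compactness argument.

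That said, two steps that you flag as ``handled'' are in fact where essentially all of the work lies, and a reader would not be convinced by the sketch as written. First, the coerciveness of $\Phi$ on the normalised slice is the heart of the finite case; the combinatorial ingredient is a nontrivial lemma (roughly, for any proper nonempty set $A$ of interior vertices, the number of face--vertex incidences at $A$ is strictly less than what $2\pi|A|$ would require, a consequence of Euler's formula for the triangulation), and without stating and proving it the existence of a minimiser is not established. Second, in the infinite locally finite case the exhaustion-plus-normalisation argument needs care to rule out degeneration: after pinning one circle, neighbouring circles in the approximating packings could still shrink to points or blow up, and ensuring a nondegenerate limit requires a ring lemma / Schramm-type rigidity input (this is precisely why the infinite version is attributed to He--Schramm rather than being an immediate corollary of Koebe). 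Also minor but worth noting: when you pass from $G$ to a triangulation $\widetilde G$, a packing of $\widetilde G$ has tangency graph $\widetilde G$, not $G$, and ``breaking the extra tangencies by a small perturbation'' is not automatic — you must perturb radii while preserving all the required tangencies, which needs an argument (e.g.\ via the flexibility of packings with prescribed boundary radii). None of these is a wrong turn, but as presented they are gaps rather than completed steps.
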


We are concerned with the radius of the circles used in the packing, especially the ratio between the smallest and largest ones. 
\begin{definition}
  A locally finite simple planar graph $\mathbb{G}$ is said to have an \emph{$R$-bounded-radius circle packing} ($R$-BRCP) for some constant $R>0$, if there exists a circle packing $\mathcal{C}$ whose tangency graph is (isomorphic to) $\mathbb{G}$ such that
\[
\frac{\inf\limits_{\odot\in\mathcal{C}}r_{\odot}}{\sup\limits_{\odot\in\mathcal{C}}r_{\odot}}\geq R
\]
where $r_{\odot}$ denotes the radius of a circle $\odot$ in the packing. 
\end{definition}

\begin{figure}
	\definecolor{myorange}{RGB}{252,141,98}
	\definecolor{mygreen}{RGB}{102,194,165}
	\definecolor{myblue}{RGB}{141,160,203}
\centering
\begin{minipage}{0.32\textwidth}
\centering
\resizebox{\linewidth}{!}{
\begin{tikzpicture}[rotate=90]
  \clip (-0.5,{-sqrt(3)/6}) rectangle (12.5,{18.5/3*sqrt(3)});

\foreach \j in {0,...,6} {
  \def\y{{\j*2}}
  \draw [thick] (-0.5,\y) -- (12.5,\y);
}
\foreach \i in {0,...,6} {
  \def\x{{\i*2}}
  \draw [thick] (\x,{-sqrt(3)/3*0.5}) -- (\x,{18.5/3*sqrt(3)});
}

\foreach \i in {0,...,6} {
  \foreach \j in {0,...,6} {
    %\ifnum \numexpr(mod(\i+\j,2))=0
    \def\x{{2*\i}}
    \def\y{{2*\j}}
    \filldraw[color=myorange, fill=myorange, fill opacity=0.15, ultra thick](\x,\y) circle (1);
    \node[circle,fill=myorange,draw=black,inner sep=2pt] () at (\x,\y) {};
  }
}
\end{tikzpicture}
}
\end{minipage}
\begin{minipage}{0.32\textwidth}
\centering
\resizebox{\linewidth}{!}{
\begin{tikzpicture}[rotate=90]
  \clip (-0.5,{-sqrt(3)/6}) rectangle (12.5,{18.5/3*sqrt(3)});

% edges 
\foreach \j in {0,...,6} {
  \def\y{{\j*sqrt(3)}}
  \draw [thick] (-0.5,\y) -- (12.5,\y);
}
\foreach \i in {0,...,4} {
  \def\x{{\i*3}}
  \draw [thick] (\x,{-sqrt(3)/3*0.5}) -- (\x,{18.5/3*sqrt(3)});
}

\foreach \i in {0,...,5} {
  \def\x{{(6*\i-12-0.5)*sqrt(3)/3}}
  \def\y{{(6*\i+0.5)*sqrt(3)/3}}
  \draw [thick] (-0.5,\x) -- (12.5,\y);
  \draw [thick] (12.5,\x) -- (-0.5,\y);
}

\foreach \i in {0,...,9} {
  \def\x{{2*\i-37/6}}
  \def\y{{2*\i+1/6}}
  \draw [thick] (\x,{37/6*sqrt(3)}) -- (\y,{-1/6*sqrt(3)});
  \draw [thick] (\y,{37/6*sqrt(3)}) -- (\x,{-1/6*sqrt(3)});
}

% myorange
\foreach \i in {0,...,4} {
  \foreach \j in {0,...,6} {
    %\ifnum \numexpr(mod(\i+\j,2))=0
    \def\x{{3*\i}}
    \def\y{{sqrt(3)*\j}}
    \pgfmathparse{{int(mod(\i+\j,2))}}
    \ifnum \pgfmathresult=0
      {
        \filldraw[color=myorange, fill=myorange, fill opacity=0.15, ultra thick](\x,\y) circle ({(1+sqrt(3))/2});
        \node[circle,fill=myorange,draw=black,inner sep=2pt] () at (\x,\y) {};
      }
    \fi
  }
}
% mygreen
\foreach \i in {-1,...,4} {
  \foreach \j in {0,...,6} {
    %\ifnum \numexpr(mod(\i+\j,2))=0
    \def\xa{{3*\i+1}}
    \def\xb{{3*\i+2}}
    \def\y{{sqrt(3)*\j}}
    \pgfmathparse{{int(mod(\i+\j,2))}}
    \ifnum \pgfmathresult=0
      {\filldraw[color=mygreen, fill=mygreen, fill opacity=0.2, ultra thick](\xb,\y) circle ({(3-sqrt(3))/2});
      \node[circle,fill=mygreen,draw=black,inner sep=2pt] () at (\xb,\y) {};}
    \else
      {\filldraw[color=mygreen, fill=mygreen, fill opacity=0.2, ultra thick](\xa,\y) circle ({(3-sqrt(3))/2});
      \node[circle,fill=mygreen,draw=black,inner sep=2pt] () at (\xa,\y) {};}
    \fi
  }
}
% myblue
\foreach \i in {0,...,4} {
  \foreach \j in {0,...,6} {
    %\ifnum \numexpr(mod(\i+\j,2))=0
    \def\x{{3*\i}}
    \def\y{{sqrt(3)*\j}}
    \pgfmathparse{{int(mod(\i+\j,2))}}
    \ifnum \pgfmathresult=1
      {\filldraw[color=myblue, fill=myblue, fill opacity=0.2, ultra thick](\x,\y) circle ({(sqrt(3)-1)/2});
      \node[circle,fill=myblue,draw=black,inner sep=2pt] () at (\x,\y) {};}
    \fi
  }
}
\foreach \i in {0,...,3} {
  \foreach \j in {0,...,5} {
    %\ifnum \numexpr(mod(\i+\j,2))=0
    \def\x{{3*\i+1.5}}
    \def\y{{sqrt(3)*\j+sqrt(3)/2}}
    \filldraw[color=myblue, fill=myblue, fill opacity=0.2, ultra thick](\x,\y) circle ({(sqrt(3)-1)/2});
    \node[circle,fill=myblue,draw=black,inner sep=2pt] () at (\x,\y) {};
  }
}
\end{tikzpicture}
}
\end{minipage}
\begin{minipage}{0.32\textwidth}
\centering
\resizebox{\linewidth}{!}{
\begin{tikzpicture}

\filldraw[color=myorange, fill=myorange, fill opacity=0.15, ultra thick](0:0) circle (1);
\node[circle,fill=myorange,draw=black,inner sep=2pt] () at (0:0) {};

\foreach \i in {0,...,2} {
    \def\t{360/3*\i+360/6}
    \def\r{2}
    \def\s{1}
    \filldraw[color=myblue, fill=myblue, fill opacity=0.2, ultra thick](\t:\r) circle (\s);
    \node[circle,fill=myblue,draw=black,inner sep=2pt] () at (\t:\r) {};
    \begin{scope}[on background layer]
      \draw[thick] (\t:\r) -- (0,0);
    \end{scope}
}

\foreach \i in {0,...,5} {
    \def\t{360/6*\i+360/12}
    \def\r{{sqrt(3)*2}}
    \def\s{1}
    \filldraw[color=mygreen, fill=mygreen, =0.5, fill opacity=0.2, ultra thick](\t:\r) circle (\s);
    \node[circle,fill=mygreen,draw=black,inner sep=2pt] () at (\t:\r) {};
    \begin{scope}[on background layer]
      \def\tt{{360/3*int(\i/2)+360/6}}
      \def\rr{{2}}
      \draw[thick] (\t:\r) -- (\tt:\rr);
    \end{scope}
}

\foreach \i in {0,...,11} {
    \def\t{360/12*\i+360/24}
    \def\r{5.13384}
    \def\s{1}
    \filldraw[color=myorange, fill=myorange, fill opacity=0.15, ultra thick](\t:\r) circle (\s);
    \node[circle,fill=myorange,draw=black,inner sep=2pt] () at (\t:\r) {};
    \begin{scope}[on background layer]
      \def\tt{{360/6*int(\i/2)+360/12}}
      \def\rr{{sqrt(3)*2}}
      \draw[thick] (\t:\r) -- (\tt:\rr);
    \end{scope}
}

\foreach \i in {0,...,23} {
    \def\t{360/24*\i+360/48}
    \def\r{6.63}
    \def\s{0.6667}
    \filldraw[color=myblue, fill=myblue, fill opacity=0.2, ultra thick](\t:\r) circle (\s);
    \node[circle,fill=myblue,draw=black,inner sep=2pt] () at (\t:\r) {};
    \begin{scope}[on background layer]
      \def\tt{{360/12*int(\i/2)+360/24}}
      \def\rr{5.13384}
      \draw[thick] (\t:\r) -- (\tt:\rr);
    \end{scope}
}

\foreach \i in {0,...,47} {
    \def\t{360/48*\i+360/96}
    \def\r{7.6}
    \def\s{0.4}
    \filldraw[color=mygreen, fill=mygreen, fill opacity=0.2, ultra thick](\t:\r) circle (\s);
    \node[circle,fill=mygreen,draw=black,inner sep=2pt] () at (\t:\r) {};
    \begin{scope}[on background layer]
      \def\tt{{360/24*int(\i/2)+360/48}}
      \def\rr{6.63}
      \draw[thick] (\t:\r) -- (\tt:\rr);
    \end{scope}
}

\foreach \i in {0,...,95} {
    \def\t{360/96*\i+360/192}
    \def\r{8.2}
    \def\s{0.24}
    \filldraw[color=myorange, fill=myorange, fill opacity=0.15, ultra thick](\t:\r) circle (\s);
    \node[circle,fill=myorange,draw=black,inner sep=2pt] () at (\t:\r) {};
    \begin{scope}[on background layer]
      \def\tt{{360/48*int(\i/2)+360/96}}
      \def\rr{7.6}
      \draw[thick] (\t:\r) -- (\tt:\rr);
    \end{scope}
}

\end{tikzpicture}
}
\end{minipage}
\par
\begin{minipage}{0.32\textwidth}
\centering
%\caption*{(a)}
\bigskip(a)
\end{minipage}
\begin{minipage}{0.32\textwidth}
\centering
%\caption*{(b)}
\bigskip(b)
\end{minipage}
\begin{minipage}{0.32\textwidth}
\centering
%\caption*{(c)}
\bigskip(c)
\end{minipage}

\caption{Circle packings of some lattices. (a): $\mathbb{Z}^2$ grid, $R=1$. (b): Kisrhombille tiling, $R=2-\sqrt{3}$. (c): degree-$3$ Bethe lattice, $R=0$.}
\label{fig:packing}
\end{figure}
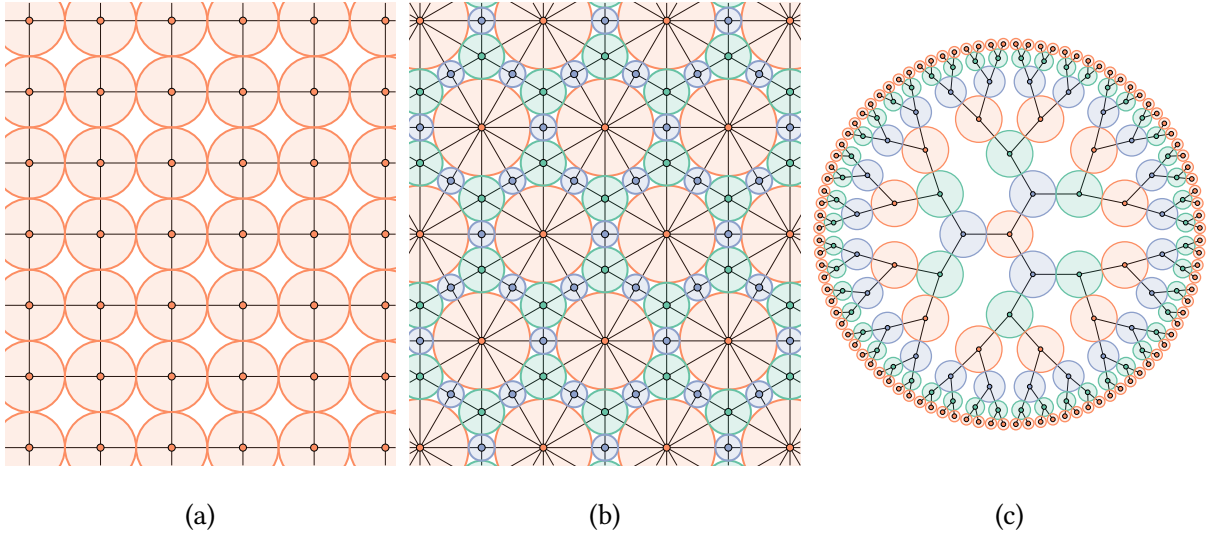

Three examples are given in \Cref{fig:packing}. 
The $\mathbb{Z}^2$ grid can be naturally packed by unit disks, leading to $R=1$. 
Such a graph is called a ``penny graph''. 
The $3,6$-kisrhombille tiling is a tiling of the 2-dimensional plane by $\pi/6$-$\pi/3$-$\pi/2$ triangles. 
This lattice can be packed by circles of radii $1,2\sqrt{3}-3,2-\sqrt{3}$, so $R=2-\sqrt{3}$. 
The degree-$3$ Bethe lattice, also known as the infinite $3$-regular tree, can be drawn as a planar graph on the 2-dimensional plane. 
However, the neighbourhood growth is so fast that %one cannot find a constant lower bound on the radii, so 
$R=0$. 

Fix the underlying graph $\mathbb{G}$ and its $R$-BRCP $\mathcal{C}$. 
Without loss of generality, we assume the diameter of the largest circle in $\+C$ is $1$.
Thus, the radius of an arbitrary circle in $\+C$ is between $R/2$ and $1/2$.
Let $G$ be a finite subgraph of $\=G$.
Here we need to distinguish the graph distance in $G$ and the geometric distance (the Euclidean distance $\left\|\cdot\right\|_2$ between the center of their corresponding disks on the 2-dimensional plane).
For two vertices $u$ and $v$,
we use $\dist_G(u,v)$ to denote their graph distance,
and use $\norm{2}{u-v}$ to denote their geometric distance.
Note that $\dist_{\=G}(u,v)\ge\norm{2}{u-v}$ and $\dist_{G}(u,v)\ge\dist_{\=G}(u,v)$.

For any vertex $v$ and $u$ in the $\ell$-ball $B_v(\ell)$ in $G$,
$\norm{2}{u-v}\le\dist_G(u,v)\le \ell$.
The disk $\odot_u$ corresponding to $u$ must be contained completely in the circle centered at $u$ with radius $\ell+1/2$.
By considering the area they cover, 
\begin{align*}
  \abs{B_v(\ell)}\le \frac{\pi(\ell+1/2)^2}{\pi (R/2)^2} = O(\ell^2/R^2).
\end{align*}
Thus, any family of subgraphs of $\=G$ has quadratic growth,
where the growth constant depends on $R$.
Together with \Cref{lattice-counting}, we have the following corollary.

\begin{corollary}\label{cor:BRCP}
%Let $\mathbb{G}$ be a local-finite simple planar graph, together with an $R$-bounded-radius circle packing $\mathcal{C}$ where $R>0$ is a constant. 
  Let $\mathbb{G}$ be a locally finite simple planar graph, together with an $R$-BRCP where $R>0$ is a constant. 
Let $\+{G}$ be a family of subgraphs of $\mathbb{G}$,
and ${\mathbf{A}}, {\mathbf{b}}$ specify a $q$-spin system that exhibits SSM with exponential decay on $\mathcal{G}$. 
Then there exists an FPRAS that takes a graph $G\in\mathcal{G}$ as an input and estimates the partition function of the spin system on $G$ in time $\widetilde{O}\big(\big(\frac{n}{\eps}\big)^{2-c}\big)$. 
Here, $n=|V(G)|$, and $c>0$ is a constant depending on $q$, decay rate of SSM, and $R$. 
\end{corollary}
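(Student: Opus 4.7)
The plan is to deduce this statement directly from \Cref{lattice-counting}. The only hypothesis of that theorem not already given in the corollary's statement is quadratic neighbourhood growth of the graph family $\mathcal{G}$, so the entire task reduces to verifying \Cref{def:quadratic-growth} for subgraphs of $\mathbb{G}$ with a growth constant $C_0$ that depends only on $R$.

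To establish quadratic growth I would rescale the packing $\mathcal{C}$ so that the largest disk has diameter $1$; then every disk has radius in the interval $[R/2, 1/2]$. Fix $G \in \mathcal{G}$, a vertex $v \in V(G)$, and an integer $\ell \geq 1$. Because $G$ is a subgraph of $\mathbb{G}$, graph distance in $G$ dominates graph distance in $\mathbb{G}$, which in turn dominates the Euclidean distance between circle centres, since each edge of $\mathbb{G}$ corresponds to a pair of tangent disks whose centres lie at distance equal to the sum of their radii, hence at most $1$. Consequently, for every $u \in B_v(\ell)$ the centre of $\odot_u$ lies within Euclidean distance $\ell$ of the centre of $\odot_v$, and so $\odot_u$ itself is contained in the concentric disk of radius $\ell + 1/2$.

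Since the disks $\{\odot_u : u \in V(G)\}$ are pairwise interior-disjoint and each has area at least $\pi R^2/4$, an area comparison yields
\[
|B_v(\ell)| \cdot \frac{\pi R^2}{4} \;\leq\; \pi (\ell + 1/2)^2,
\]
so that $|B_v(\ell)| \leq (2\ell+1)^2 / R^2 \leq 9 \ell^2 / R^2$. Setting $C_0 = 9/R^2$ confirms that $\mathcal{G}$ has quadratic growth with a constant depending only on $R$. Invoking \Cref{lattice-counting} with this $C_0$, the given SSM decay rate, and the given $q$ then produces the claimed FPRAS with run-time $\widetilde{O}((n/\eps)^{2-c})$, where the constant $c$ inherits its dependence on $R$, $q$, and the SSM decay rate through the corresponding dependence on $C_0$, $q$, and $r$ in \Cref{lattice-counting}. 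There is no real obstacle: the full content of the proof is the one-line area-packing estimate above.
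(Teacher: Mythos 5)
Your proof is correct and is essentially the same area-packing argument the paper gives: rescale so the largest disk has diameter $1$, observe that graph distance in $G$ dominates Euclidean distance between disk centres, and divide the area of the disk of radius $\ell + 1/2$ by the minimum disk area $\pi R^2/4$ to get quadratic growth with $C_0 = O(1/R^2)$. The only addition is that you make the constant explicit ($C_0 = 9/R^2$), which the paper leaves as $O(\ell^2/R^2)$.
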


\begin{remark}
  The algorithm does not need to know the circle packing,
  as long as an $R$-BRCP exists.

  On a separate note, although a good approximation of the circle packing of a finite planar graph can be found in near linear time \cite{DLQ20}, 
  its output does not optimise the radius ratio.
  It is not clear how to generate a circle packing with a constant approximation of the optimal radius ratio. 
  In the extreme, it is $\NP$-hard to decide if a given graph $G$ (without geometric positions) is a penny graph, namely admitting a circle packing using unit circles \cite{EW96}, even if $G$ is restricted to be a tree \cite{BDLRST15}. 
\end{remark}

\subsection{Polynomial-growth graphs} \label{sec:poly-growth}

Our method goes beyond planar graphs with quadratic growth rate.
For any graph with a polynomial growth rate, we have a speed-up that is faster than any polylog factors.

\begin{definition}  \label{def:poly-growth}
  A graph family $\=G$ has \emph{polynomial growth},
  if there are constants $C$ and $d$ such that for any $G=(V,E)\in\=G$, $v\in V$, and any integer $\ell>0$,
  $\abs{B_v(\ell)}\le C_0\ell^d$.
\end{definition}

Examples of graphs with polynomial growth include finite subgraphs of the $d$-dimensional integer lattice $\=Z^d$.
Again, by taking $\ell=1$, \Cref{def:poly-growth} implies that the maximum degree is no larger than~$C_0$.

\begin{theorem} \label{thm:polylog}
  Let $\=G$ be a family of graphs with polynomial growth (assume the rate is $C_0\ell^d$).
  Let ${\mathbf{A}}$ and ${\mathbf{b}}$ specify a $q$-state spin system, which exhibits SSM with decay rate $Cr^{-\ell}$ on $\=G$.
  Then there is a constant $c>0$ such that there exists an FPRAS for the partition function of $G\in\=G$ with $n$ vertices with run-time $\widetilde{O}\left(\frac{n^2}{\eps^2 2^{c(\log \frac{n}{\eps})^{1/d}}}\right)$.
  The constant $c$ depends on $C_0$, $q$, and~$r$.
\end{theorem}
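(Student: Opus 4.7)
The plan is to follow the proof of \Cref{lattice-counting} essentially line by line, modifying only the table-construction step (\Cref{YZ-table}); every other ingredient---the self-reduction \eqref{q-spin-decomposition}, the Anand--Jerrum boundary sampler (\Cref{generic-AJ}), the Chebyshev-based variance analysis, and the $\lceil 2/\eps\rceil$-copy boosting trick---carries over without change, since none of those arguments uses planarity.

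First I would prove the polynomial-growth analogue of \Cref{YZ-table}: for any pinned partial configuration $\sigma$, any vertex $v$, and any integer $\ell$, build a lookup table of $\mu_v^{\sigma,\tau}(k)$ indexed by boundary configurations $\tau$ on the unpinned part of some $S_v(\ell')$ with $\ell'\in[\ell/2,\ell]$ and $\abs{S_v(\ell')}\leq 2C_0\ell^{d-1}$. The existence of such an $\ell'$ follows from the same averaging argument $\sum_{i=\ell/2}^{\ell}\abs{S_v(i)}\leq\abs{B_v(\ell)}\leq C_0\ell^d$ and can be found by breadth-first search. Without planarity the linear local tree-width of \cite{YZ13} is unavailable; instead I would brute-force. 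By the spatial Markov property, $\mu_v^{\sigma,\tau}(k)$ depends only on the weights inside $B_v(\ell')$, so one enumerates all $q^{\abs{B_v(\ell')}}\leq q^{C_0\ell^d}$ spin assignments on $B_v(\ell')$ consistent with $\sigma$, computes the weight of each in $\poly{\ell^d}$ time, and buckets them by their restriction to $S_v(\ell')$ to populate the table in one pass. This gives a total build time of $2^{C_2\ell^d}$ for some constant $C_2=C_2(C_0,q)$.

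Combining the table with $m=\lceil nC^2 r^{-\ell'}\rceil$ Anand--Jerrum samples of $\tau$, each drawn in time $\widetilde{O}(\abs{S_v(\ell')})=\widetilde{O}(\ell^{d-1})$ with high probability, I obtain an estimator $\widetilde{Z}$ with $\Ex[\widetilde{Z}]=\mu_v^{\sigma}(k)$ and $\Var{}{\widetilde{Z}}\leq 1/n$ exactly as in \Cref{lem:lattice-estimator}, since SSM gives $\Var{}{Z_v(\tau)}\leq C^2r^{-2\ell'}\leq C^2r^{-\ell}$. The per-estimator runtime is therefore
\[
2^{C_2\ell^d}\;+\;\widetilde{O}\!\left(n\cdot r^{-\ell}\cdot \ell^{d-1}\right).
\]

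The main (though still routine) step I expect to be slightly delicate is balancing the two terms in $\ell$. Setting $\ell=\beta(\log n)^{1/d}$ with $\beta>0$ chosen small enough that $C_2\beta^d<1$ makes the first term $n^{C_2\beta^d}$, which, for $n$ large, is dominated by the second term $\widetilde{O}(n/2^{(\beta\log r)(\log n)^{1/d}})$ because $(1-C_2\beta^d)\log n\gg(\log n)^{1/d}$. Hence the per-estimator cost is $\widetilde{O}(n/2^{c(\log n)^{1/d}})$ with $c=\beta\log r>0$. Plugging this estimator into the $n$-term product self-reduction and averaging $N=O(1/\eps^2)$ independent copies (with the same Chebyshev/variance calculation as in the proof of \Cref{lattice-counting}, using $\mu_{v_i}^{\sigma_i}(\sigma_{v_i})\geq 1/2q$ ensured by the same ``greedy'' choice of $\sigma$ via a constant-radius boundary) gives runtime $\widetilde{O}(n^2/(\eps^2\cdot 2^{c(\log n)^{1/d}}))$ at constant accuracy; the $\lceil 2/\eps\rceil$-copy boosting trick from the proof of \Cref{small-lambda} then replaces $n$ by $n/\eps$ throughout, yielding the claimed bound.
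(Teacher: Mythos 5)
Your proof is correct and follows essentially the same route as the paper: replace the tree-width-based table of \Cref{YZ-table} by brute-force enumeration over boundary configurations, pair it with the Anand--Jerrum sampler and the same variance/Chebyshev bookkeeping, and balance the table-building cost $2^{\Theta(\ell^d)}$ against the sampling cost $\widetilde{O}(nr^{-\Theta(\ell)}\cdot\mathrm{poly}(\ell))$ by taking $\ell=\Theta((\log n)^{1/d})$. The only cosmetic differences are that the paper skips the $\ell'\in[\ell/2,\ell]$ reduction entirely (since $|S_v(\ell)|\le C_0\ell^d$ is already polynomial in $\ell$, retaining $\ell'$ merely costs a factor of $2$ in the exponent constant $c$, as your $r^{-\ell}$ should really read $r^{-\ell'}$), and your one-pass bucketing of full ball-configurations is a mild refinement of the paper's per-boundary brute force; neither changes the final bound.
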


\Cref{thm:polylog} is the detailed version of \Cref{thm:main3}.

In comparison to \Cref{lattice-counting},
the proof of \Cref{thm:polylog} needs only a few small tweaks.
Let $\ell$ be a parameter we will choose later,
and our estimator is still set by using a random boundary condition on $S_v(\ell)$ to estimate the marginal at $v$.
Note that we no longer need to find $\ell'$ for a smaller boundary.
The main difference is in \Cref{YZ-table},
where we no longer have linear local tree-width.
Instead, we have to create the table by brute-force enumeration.
There are $q^{C_0\ell^d}$ possible boundary conditions,
and the overall time cost for creating the table is $O(q^{2C_0\ell^d})$.

We use the same estimator as in \Cref{lem:lattice-estimator}.
To reduce the variance of our estimator to $1/n$, we need $nC^2r^{-2\ell}$ samples,
each of which can be looked up quickly using the table.
Let $\ell=\frac{0.99(\log n)^{1/d}}{2C_0\log q}$.
The overall time cost is
\begin{align*}
  \widetilde{O}\left( \frac{n}{\eps^2}\left( q^{2C\ell^d}+ nr^{-2\ell}\right) \right)
  =\widetilde{O}\left( \frac{n}{\eps^2}\left(n^{0.99}+\frac{n}{2^{c(\log n)^{1/d}}}\right) \right)
  =\widetilde{O}\left(\frac{n^2}{\eps^2 2^{c(\log n)^{1/d}}}\right),
\end{align*}
where $c=\frac{0.99\log r}{C_0\log q}$.
To finish the proof of \Cref{thm:polylog},
we employ the trick at the end of the proof of \Cref{small-lambda} once again.
Note that the factor $2^{c(\log \frac{n}{\eps})^{1/d}}$ grows faster than $\big(\log \frac{n}{\eps}\big)^k$ for any constant $k>0$.

%\ifdoubleblind
%\else
\section*{Acknowledgement}
We would like to thank Chunyang Wang for pointing out how to shave a factor of $e$ from \Cref{lem:gwb-terminate}. 
%This project has received funding from the European Research Council (ERC) under the European Union's Horizon 2020 research and innovation programme (grant agreement No.~947778).
%\fi

\printbibliography
%\bibliographystyle{alpha}
%\bibliography{refs}

\appendix

\section{Lazy marginal samplers}

\subsection{Specialised to hard-core models}
\label{sec:truncate-AJ}

\Cref{lem:truncate-AJ} is proved in this subsection. 
The single-site Anand-Jerrum algorithm adapts to the hard-core model as in \Cref{Alg:ssms-hardcore}. 
\begin{algorithm}  \LinesNotNumbered
  \caption{$\hcsampler(G,\lambda,(\Sigma,\sigma),v)$} \label{Alg:ssms-hardcore}
  \KwIn{a $\Delta$-degree graph $G$, fugacity $\lambda$, a set of vertices $\Sigma\subseteq V$ with a configuration $\sigma\in \Omega_{\Sigma}$, and vertex to sample $v\notin \Sigma$}
%  \KwIn{a graph $G$ of degree $\Delta$ and fugacity $\lambda$, a set of vertices $\Sigma\subseteq V$ with a configuration $\sigma\in \Omega_{\Sigma}$, and a vertex $v\notin \Sigma$ to sample}
  \KwOut{the partial configuration passed in with a spin at
$v$: $(\Sigma, \sigma) \oplus (v, i)$ for some $i \in \{0,1\}$.}
%\SetKwIF{Try}{Catch}{Exception}{try}{:}{catch}{exception}{}
%\uTry{}{
Decrease the global timer $T\gets T-1$\;
\If{there exists $u\in\Sigma\cap N(v)$ such that $\sigma(u)=1$}{
  \Return $((\Sigma,\sigma)\oplus(v,0))$\;
}
Sample random $X\in \{\bot,0\}$ with $\Pr[X=\bot]=\lambda/(1+\lambda)$ and $\Pr[X=0]=1/(1+\lambda)$\;
\uIf{$X=\bot$}{
$(\Sigma',\sigma')\gets (\Sigma,\sigma)$\;
$Y\gets 1$\;
\ForAll{$u\in N(v)\backslash\Sigma$}{
  $(\Sigma',\sigma')\gets \hcsampler{}(G,\lambda,(\Sigma',\sigma'),u)$\; \label{Alg:line-branching}
  \lIf{$\sigma'(u)=1$}{$Y\gets 0$}
}
\Return $((\Sigma,\sigma)\oplus(v,Y))$\;
}
\Else
{
    \Return $((\Sigma,\sigma)\oplus(v,0))$\;
}
%}
%\Catch{$T<0$}{
%  Terminate all instances of $\hcsampler{}$ routines and set the vertex the first call is on to $0$\; \label{Alg:line-hardcore-truncate} 
%}
\end{algorithm} 

%The correctness of the untruncated algorithm is summarised by the following theorem, adapted to our setting. 
The correctness of the algorithm is summarised by the following theorem, adapted to our setting. 

\begin{theorem}[{\cite[Theorem 5.3]{AJ22}}]
Suppose $G$ is a graph with maximum degree bounded by $\Delta$, and $\lambda<\lambda_c(\Delta)$. 
If the untruncated algorithm $\hcsampler{}_{+\infty}(G,\lambda,(\Sigma,\sigma),v)$ terminates with probability $1$, then it generates a spin of $v$ according to the correct marginal distribution upon termination, provided that the partial configuration $(\Sigma,\sigma)$ is feasible. 
\end{theorem}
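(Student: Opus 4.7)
The plan is to prove correctness by showing that the algorithm's output probabilities satisfy the same recursion as the hard-core marginals, and then to handle the potentially self-referential recursion via a truncation-and-limit argument.

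\textbf{Step 1: Target recursion.} Let $u_1, \dots, u_d$ be the unpinned neighbours of $v$ in $G$. The standard hard-core identity $\mu_v^{\sigma}(1)/\mu_v^{\sigma}(0) = \lambda\, p$, where $p = \mu^{\sigma}[u_1 = \dots = u_d = 0 \mid v = 0]$, combined with the observation that $v = 1$ deterministically forces every unpinned neighbour of $v$ to be $0$, yields after conditioning on $v$ the clean identity
\[
\mu_v^{\sigma}(1) \;=\; \frac{\lambda}{1+\lambda} \cdot \mu^{\sigma}\bigl[u_1 = \dots = u_d = 0\bigr].
\]
This is the target equation the algorithm must realise.

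\textbf{Step 2: Match the algorithm's output.} Let $q_v^{\sigma}(s)$ denote the probability that the untruncated algorithm outputs spin $s$ at $v$; by the almost-sure termination hypothesis, $q_v^{\sigma}(0) + q_v^{\sigma}(1) = 1$. If some pinned neighbour of $v$ under $\sigma$ is assigned $1$, the first test returns $0$, matching $\mu_v^{\sigma}(1) = 0$; feasibility of $\sigma$ ensures this is the only obstruction to $v = 1$. Otherwise, inspecting the pseudocode, the algorithm outputs $1$ iff $X = \bot$ (probability $\lambda/(1+\lambda)$) and every sequential recursive call on $u_1, \dots, u_d$ returns $0$. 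Assuming (by an inductive hypothesis) that each such recursive call samples from the correct marginal under the partial configuration built up so far, the chain rule gives
\[
q_v^{\sigma}(1) \;=\; \frac{\lambda}{1+\lambda} \prod_{i=1}^{d} \mu_{u_i}^{\sigma,\, u_1 = 0, \dots, u_{i-1} = 0}(0) \;=\; \frac{\lambda}{1+\lambda} \cdot \mu^{\sigma}\bigl[u_1 = \dots = u_d = 0\bigr],
\]
which by Step 1 equals $\mu_v^{\sigma}(1)$, as desired.

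\textbf{Step 3: Making the induction rigorous.} The main obstacle is that the recursion is not structurally well-founded: a call on $v$ may invoke a call on $u_i$, which (since $v \notin \Sigma'$ at that point) may invoke a call back on $v$. To sidestep this, I would work with a truncated variant $\hcsampler_T$ that aborts with a special symbol $\bot$ once the global step counter drops below a threshold $T$. Because every recursive call decrements the counter, one can prove by induction on $T$ that, conditional on not aborting, the output at the root matches the claimed marginal distribution. The hypothesis of almost-sure termination forces $\Pr[\hcsampler_T \text{ aborts}] \to 0$ as $T \to \infty$, so the unconditional output distributions converge to those of the untruncated algorithm, yielding $q_v^{\sigma}(s) = \mu_v^{\sigma}(s)$. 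The key conceptual point, isolated in Step 1, is that the algorithm samples the neighbours under $\mu^{\sigma}$ rather than under $\mu^{\sigma \cup \{v = 0\}}$, and the clean form of the target identity is exactly what makes this "unconditioned" sampling produce the correct marginal at $v$ anyway.
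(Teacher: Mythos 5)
First, a remark on the comparison itself: the paper does not prove this statement — it is imported (in adapted form) from \cite[Theorem 5.3]{AJ22} and used as a black box — so your reconstruction has to be judged on its own terms. Steps 1 and 2 are correct and isolate exactly the right invariant: the identity $\mu_v^{\sigma}(1)=\frac{\lambda}{1+\lambda}\,\mu^{\sigma}[u_1=\cdots=u_d=0]$ is precisely what the $X$-coin plus the sequential recursive calls implement, and you correctly note both that the chain rule makes the loop sample the joint law of the unpinned neighbours under $\mu^{\sigma}$ (not under $\mu^{\sigma}$ conditioned on $v=0$), and that the identity is stated so that this is exactly what is needed. This matches the standard route to correctness for such recursive marginal samplers.

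The gap is in Step 3. The inductive claim ``conditional on not aborting, the output at the root matches the claimed marginal distribution'' is false. Aborts can only occur inside the $X=\bot$ branch (the $X=0$ branch and the pinned-neighbour branch never recurse), so conditioning on non-abortion systematically biases the output towards $0$: with a budget of a single call, the truncated algorithm conditioned on not aborting outputs $0$ with probability $1$, whereas $\mu_v^{\sigma}(1)>0$. What the induction on the budget $T$ actually yields is the \emph{unconditional one-sided} bound $q_v^{\sigma,T}(s)\le\mu_v^{\sigma}(s)$ for both $s\in\{0,1\}$, where $q_v^{\sigma,T}$ is the subprobability output law of the truncated run; since $q_v^{\sigma,T}(0)+q_v^{\sigma,T}(1)=1-a_v^{\sigma,T}$ with $a_v^{\sigma,T}$ the abort probability, this gives $\abs{q_v^{\sigma,T}(s)-\mu_v^{\sigma}(s)}\le a_v^{\sigma,T}$. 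The $s=1$ direction is your Step 2 computation with $\le$ in place of $=$; the $s=0$ direction is not automatic and needs a telescoping estimate of the form $\prod_i m_i-\prod_i \tilde q_i\le\sum_i\big(\prod_{j<i}\tilde q_j\big)(m_i-\tilde q_i)\le \Pr[\text{some call aborts}]$, where $m_i$ and $\tilde q_i$ are the true and truncated conditional probabilities that the $i$-th call returns $0$, and the induction hypothesis is applied conditionally on the counter already consumed by earlier calls (the counter is global, so the sub-budgets are random). With this corrected intermediate claim, your coupling observation — almost-sure termination forces $a_v^{\sigma,T}\to 0$ and $q_v^{\sigma,T}\to q_v^{\sigma}$ — does close the argument, exactly in the style of \cite[Appendix B]{FGWWY22}. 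So the architecture is right, but the inductive statement must be the total-variation (or one-sided domination) bound, not conditional correctness.
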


We remark that the correctness does not rely on the graph's neighbourhood growth being sub-exponential. 
However, the algorithm given here is a special case of that in \cite{AJ22}, where they look at an $\ell$-distance neighbourhood. 
Fixing $\ell=1$ as we do here results in the regime of fugacity~$\lambda$ being worse than the critical $\lambda_c$, as we will see very soon. 
The saving grace of \cite{AJ22} is that other $\ell$'s might be chosen in order to get to the critical regime, but this is at the cost of limiting the neighbourhood growth. 
Our main algorithm does not work up to the critical $\lambda_c$, so only the $1$-hop neighbourhood is considered. 

%As a corollary (whose proof is rather simple; see \cite[Corollary B.9]{FGWWY22}), the output of the truncated algorithm is close to the correct distribution. 
%The total variation distance of the output and the true distribution is bounded by the probability of truncation. 
%That is, 
%\begin{corollary}
%Suppose $G$ is a graph with maximum degree bounded by $\Delta$, and $\lambda<\lambda_c(\Delta)$. 
%If the untruncated algorithm $\hcsampler{}_{+\infty}(G,\lambda,(\Sigma,\sigma),v)$ terminates with probability $1$, 
%then, suppose $Y$ is the distribution of the output of the truncated algorithm $\hcsampler{}_{T}(G,\lambda,(\Sigma,\sigma),v)$ with the global timer set to $T\geq 1$, it holds that
%\[
%\dTV\left(Y,\mu_{v}^{\sigma}\right)\leq\Pr[\text{\Cref{Alg:line-hardcore-truncate} is executed}]
%\]
%where $\mu_{v}^{\sigma}$ is the marginal distribution of $v$ conditioned on $\sigma$. 
%\end{corollary}

%Then it is left for us to bound the probability of truncation at \Cref{Alg:line-hardcore-truncate}. 

In \cite{AJ22}, the expected running time is studied and turns out to be a constant depending on the parameters of the model. 
However, we further need an exponential tail bound of the algorithm. 
This is done by the same idea of \cite[Section B.3]{FGWWY22}, 
though we do not truncate this algorithm as is done there. 
As soon as an exponential tail bound of running time is established, the algorithm then terminates with probability $1$ and hence is correct. 

We treat the algorithm as a branching process. 
Each time the algorithm recurses into its neighbourhood, it creates at most $\Delta-1$ new copies of the routine $\hcsampler{}$. 
Such branching happens with probability $p:=\lambda/(1+\lambda)$. 
This leads us to study the following Markovian process that stochastically dominates the actual branching process. 
Let $(X_t)_{t\in\mathbb{Z}_{\geq 0}}$ be a discrete Markov chain where $X_t\in\mathbb{Z}_{\geq 0}$ with initial state $X_0=1$. 
This chain has an absorbing barrier at $0$, and for any other $X_t>0$, the transition probability is given by
\begin{equation} \label{equ:gwb-def}
  X_{t+1}\leftarrow\begin{cases}
  X_t+\Delta-1 & \text{with probability } p; \\
  X_t-1 & \text{with probability } 1-p. 
  \end{cases}
\end{equation}

In the general case, the tail bound of this process is proved in \cite[Lemma B.12]{FGWWY22}, and this requires $\lambda\leq\frac{1}{2e\Delta-1}$ when specialised to the hard-core model. 
Here we provide a stronger analysis to remove the constant. 

\begin{lemma}\label{lem:gwb-terminate}
  Suppose $\lambda<\frac{1}{\Delta-1}$.
  For any $0<\varepsilon<1$, 
  let $T=\frac{2\Delta^2}{\left(\frac{\lambda}{1+\lambda}\Delta-1\right)^2}\log\frac{1}{\varepsilon}$. 
  Then with probability at most $\varepsilon$, the process $(X_t)$ defined by (\ref{equ:gwb-def}) does not terminate in $T$ rounds.  
\end{lemma}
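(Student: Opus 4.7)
The plan is to reduce the non-termination event of $(X_t)$ to a one-sided large-deviation event for an i.i.d.\ sum with negative mean, and then invoke Hoeffding's inequality. No martingale or MGF-tuning argument is needed; the hypothesis $\lambda < 1/(\Delta-1)$ is tailor-made to produce a strictly negative drift.

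First, set $p = \lambda/(1+\lambda)$ and view the step increments as i.i.d.\ variables $Y_1, Y_2, \ldots$ with $Y_t = \Delta-1$ with probability $p$ and $Y_t = -1$ with probability $1-p$. Their common mean is
\[
\mu \;=\; p(\Delta-1) - (1-p) \;=\; p\Delta - 1 \;=\; \tfrac{\lambda}{1+\lambda}\Delta - 1,
\]
and the hypothesis $\lambda < 1/(\Delta-1)$ is precisely the statement $\mu < 0$, so $|\mu| = 1 - p\Delta > 0$ is the drift magnitude appearing in the definition of $T$.

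Next, I would drop the absorbing barrier and consider the unrestricted walk $X'_t \defeq 1 + \sum_{s=1}^{t} Y_s$. Because the increments live in $\{-1, \Delta-1\}$ and $X'_0 = 1$, the only way for $X'_t$ to become non-positive is to step from $1$ to $0$, so the absorbed and unrestricted walks agree up to the first absorption. Consequently the event ``$(X_t)$ does not terminate by time $T$'' is contained in $\{X'_T \ge 1\} = \{\sum_{s=1}^T Y_s \ge 0\}$, reducing the problem to a single upper-tail estimate for a bounded i.i.d.\ sum.

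Finally, applying Hoeffding's inequality to $Y_s \in [-1, \Delta-1]$ (range~$\Delta$) gives
\[
\Pr\!\left[\sum_{s=1}^{T} Y_s \ge 0\right]
\;=\; \Pr\!\left[\sum_{s=1}^{T} (Y_s - \mu) \ge T|\mu|\right]
\;\le\; \exp\!\left(-\tfrac{2T\mu^2}{\Delta^2}\right).
\]
Plugging in $T = \frac{2\Delta^2}{\mu^2}\log(1/\varepsilon)$ yields a bound of $\varepsilon^4 \le \varepsilon$ for $\varepsilon \in (0,1)$, which is even stronger than required. The ``hard part'' is really just the bookkeeping in the reduction step---checking that every non-terminating trajectory is captured and that the constants line up with the stated $T$; both are routine. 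In particular, the argument is loose by a factor of $4$ in the exponent, so a slightly smaller constant in front of $T$ would already suffice, matching the remark in the lemma statement that this analysis improves the earlier constant in \cite[Lemma~B.12]{FGWWY22}.
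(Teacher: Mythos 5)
Your argument is correct and follows essentially the same route as the paper's: both drop the absorbing barrier (via a coupling to the unrestricted walk), reduce non-termination to the centered sum of increments exceeding $T\cdot|\mu|$ with $\mu=\tfrac{\lambda}{1+\lambda}\Delta-1<0$, and close with a concentration bound. The only difference is cosmetic: the paper packages the centered increments as a coupled martingale $Y_t$ and invokes Azuma with increment bound $\Delta$, giving $\exp\bigl(-T\mu^2/(2\Delta^2)\bigr)=\varepsilon$, whereas your direct Hoeffding application to the i.i.d.\ sum with range $\Delta$ gives $\exp\bigl(-2T\mu^2/\Delta^2\bigr)=\varepsilon^4$, which is tighter by the factor of $4$ you note.
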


\begin{proof}
Given $\{X_t\}_{t\in\mathbb{Z}_{\geq 0}}$, define an auxiliary process $\{Y_t\}_{t\in\mathbb{Z}_{\geq 0}}$ in the following way.
Let $Y_0=1$, and the transition probability is given by
\begin{equation}
  Y_{t+1}\leftarrow\begin{cases}
  Y_t+\frac{1}{1+\lambda}\Delta & \text{with probability } \frac{\lambda}{1+\lambda}; \\
  Y_t-\frac{\lambda}{1+\lambda}\Delta & \text{with probability } \frac{1}{1+\lambda}. 
  \end{cases}
\end{equation}
Then couple $X_t$ with $Y_t$ perfectly that, if $X_t$ increases then so does $Y_t$, and vice versa, till $X_t$ reaches the absorbing barrier. 
After this point, $Y_t$ just performs the above transition independently. 

Clearly, $\{Y_t\}$ is a martingale, and if $X_t>0$ is not absorbed then 
$Y_t=X_t+\left(\frac{\lambda}{1+\lambda}\Delta-1\right)t$. 
Also note that the regime on $\lambda$ ensures $\frac{\lambda}{1+\lambda}\Delta-1>0$. 
This allows us to bound the probability of $\{X_t\}$ not terminating after $T$ rounds by applying Azuma--Hoeffding inequality:  
\begin{align*}
\Pr[X_T>0]&=\Pr[X_T\geq X_0]=\Pr\left[Y_T-Y_0\geq T\cdot\left(\frac{\lambda}{1+\lambda}\Delta-1\right)\right]\leq\exp\left\{-\frac{T^2\left(\frac{\lambda}{1+\lambda}\Delta-1\right)^2}{2\Delta^2 T}\right\}=\varepsilon.%\qedhere
\end{align*}
\end{proof}

%%%%% old proof
%
%The tail bound of this process is proved in \cite[Lemma B.12]{FGWWY22}. 
%In the original lemma, it is assumed that $e\Delta p\leq 1/2$ to simply the calculation in the general setting, which requires $\lambda\leq\frac{1}{2e\Delta-1}$ for the hard-core model and gives $T=O(\Delta\log \frac{1}{\varepsilon})$. 
%However, one can go through the same lines and obtain a better condition while giving $T=O(\Delta^3\log \frac{1}{\varepsilon})$: 
%\begin{lemma}[{\cite[Lemma B.12]{FGWWY22}}] \label{lem:gwb-terminate}
%  For any $c\geq 0$, suppose $\lambda\leq\frac{1}{e\Delta+c}$ and let $p=\lambda/(1+\lambda)$.
%  For any $0<\varepsilon<1$, 
%  let $T=\left(1+\frac{e\Delta}{c+1}\right)\left(\frac{1}{2}+\frac{e\Delta}{c+1}\right)\Delta\log\frac{1}{\varepsilon}$. 
%  Then with probability at most $\varepsilon$, the process $(X_t)$ defined by (\ref{equ:gwb-def}) does not terminate in $T$ rounds.  
%\end{lemma}
%

\Cref{lem:truncate-AJ} then follows by exactly the same argument as in \cite[Proof of Lemma B.10]{FGWWY22}, by noticing that the branching process $(X_t)$ stochastically dominates the number of `active' instances of $\hcsampler$, and using \Cref{lem:gwb-terminate}. 

%\begin{proof}[Proof of \Cref{lem:truncate-AJ}]
%Suppose we are now running $\hcsampler$. 
%\Cref{Alg:line-branching} is altered a bit, by %registering all the instances of the routine %before they are invoked. 
%An instance of $\hcsampler$ is called \emph%{active}, if it is running or has been registered %but not yet run. 
%Let $Y_t$ be the number of active instances \emph%{upon invoking} the $(t+1)$-th instance. 
%Initially $Y_0=1$, and when the last instance dies %out, there will not be the $(t+1)$-th instance and %hence $Y_t=0$. 
%For each instance, it either dies out and becomes %no more active because no recursion happens, or %registers up to $\Delta-1$ new instances. 
%In the former case, the count $Y_t$ goes down by %one upon the invocation of the next instance. 
%In the latter case, the count $Y_t$ goes up by the %number of copies it creates prior to the next %invocation of some copy (the current instance has %not died out yet). 
%The branching happens with probability $p$. 
%Because each invocation uses fresh randomness, the %probability that $x$ invocations create copies and %$y$ invocations do not is bounded by $p^x(1-p)^y$.
%In addition to this, each time $Y_t$ branches out %$d'\leq \Delta-1$ new instances. 
%We can just treat this as if it created $\Delta-1$ %copies, but the last $\Delta-1-d'$ ones were bound %to die out. 
%This comparison implies
%\[
%\Pr[(Y_t)\text{ does not terminate in $T$ rounds}]%\leq\Pr[(X_t)\text{ does not terminate in $T$ %rounds}]. 
%\]
%The lemma then follows by invoking \Cref%{lem:gwb-terminate}. 
%\end{proof}

\subsection{Generic sampler}

If we want to cover the whole strong spatial mixing regime but only work on amenable graphs, then we can invoke the original Anand-Jerrum algorithm, allowing us to do recursion at farther vertices rather than one-hop neighbours. 
For completeness, we include the algorithm here (\Cref{Alg:ssms} and \Cref{Alg:bdsplit}). 
Its running time tail bound is shown in \cite[Lemma B.10]{FGWWY22}. 

\begin{algorithm}  \LinesNotNumbered
  \caption{$\ssmsampler(\+S,(\Sigma,\sigma),v,r)$} \label{Alg:ssms}
  \KwIn{a spin system $\+S=(G,[q],{\mathbf{b}},{\mathbf{A}})$, a set of vertices $\Sigma\subseteq V$ with a configuration $\sigma\in \Omega_{\Sigma}$, a vertex to sample $v\notin \Sigma$, and a distance $r\in \mathbb{N}$}
  \KwOut{the partial configuration passed in with a spin at
$v$: $(\Sigma, \sigma) \oplus (v, i)$ for some $i \in [q]$.}
\For{$i\in [q]$}{
    $p_{v}^i\gets \min_{\tau\in \Omega_{S_{r}\setminus \Sigma}}\mu^{\sigma\oplus \tau}(i)$\;
}
$p_{v}^0\gets 1-\sum_{i\in [q]}p_{v}^i$\;
Sample a random value $X\in \{0,1,\ldots,q\}$ with $\Pr[X=i]=p_v^i$ for each $0\leq i\leq q$\;
\uIf{$X=0$}{
  $(\rho_1,\rho_2,\ldots,\rho_q)\gets \bdsplit(\+S,(\Sigma,\sigma),v,r,(p_v^{0},p_v^{1},p_{v}^2,\ldots,p_{v}^q))$\;
  Sample a random value $Y\in [q]$ with $\Pr[Y=i]=\rho_i$ for each $1\leq i\leq q$\;  
\Return $((\Sigma,\sigma)\oplus(v,Y))$\;
}
\Else
{
    \Return $((\Sigma,\sigma)\oplus(v,X))$\;
}
\end{algorithm} 

\begin{algorithm}  \LinesNotNumbered
  \caption{$\bdsplit(\+S,(\Sigma,\sigma),v,r,(p_v^0,p_v^1,p_v^2,\ldots,p_v^q))$} \label{Alg:bdsplit}
  \KwIn{a spin system $\+S=(G,[q],{\mathbf{b}},{\mathbf{A}})$, a set of vertices $\Sigma\subseteq V$ with a configuration $\sigma\in \Omega_{\Sigma}$, a vertex to sample $v\notin \Sigma$, a distance $r\in \mathbb{N}$, and a probability distribution $(p_v^0,p_v^1,p_v^2,\ldots,p_v^q)$}
  \KwOut{a probability distribution $(\rho_1,\rho_2,\ldots,\rho_q)$}
    Let $S_{r}(v)\gets\left\{u \mid \dist_G(u,v)=r\right\}$
    Give $S_{r}(v)\setminus \Sigma$ an arbitrary ordering $S_{r}(v)\setminus \Sigma=\{w_1,w_2,\ldots,w_m\}$\;
    $(\Sigma',\sigma')\gets (\Sigma,\sigma)$\;
    \For{$1\leq j\leq m$}{
        $(\Sigma',\sigma')\gets \ssmsampler(\+S,(\Sigma',\sigma'),w_j,r)$\; \label{alg-line-subroutine}
    }
    \For{$i\in [q]$}{
    $\rho_i\gets (\mu^{\sigma'}_{v}(i)-p_v^i)/p_{v}^0$\;
    }
    \Return $(\rho_1,\rho_2,\ldots,\rho_q)$\;
\end{algorithm}

\begin{theorem}[{\cite[Lemma B.10]{FGWWY22}}]
\label{generic-AJ}
Suppose a $q$-spin system $\+S=(G,[q],{\mathbf{b}},{\mathbf{A}})$ exhibits strong spatial mixing with decay rate $f(\ell)$, 
and there is a function $s(\ell)$ such that the neighbourhood growth of $G$ satisfies $\left|\left\{u \mid \dist_G(u,v)=\ell\right\}\right|\leq s(\ell)$ for all $v$. 
If there is some $r\in\mathbb{Z}_{\geq 1}$ such that $2eq(1+s(r))f(r)\leq 1$, 
then for any feasible boundary configuration $(\Sigma,\sigma)$, the algorithm $\ssmsampler(\+S,(\Sigma,\sigma),v,r)$ generates a sample of $v$ subject to the correct marginal distribution, 
and halts in time $O(s(r)\log \frac{1}{\varepsilon})$ with probability at least $1-\varepsilon$. 
\end{theorem}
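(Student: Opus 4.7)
The plan is to verify two properties separately: exact correctness of the sampled output, and an exponential tail bound on the number of recursive calls.

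For correctness, the key identity is a mixture decomposition of $\mu_v^\sigma$. Let $S_r := S_r(v) \setminus \Sigma$. Since $\mu_v^{\sigma \oplus \tau}(i) \ge p_v^i := \min_{\tau' \in \Omega_{S_r}} \mu_v^{\sigma \oplus \tau'}(i)$ for every $\tau$, averaging against $\tau \sim \mu^\sigma_{S_r}$ yields
\[
\mu_v^\sigma(i) \;=\; p_v^i \;+\; p_v^0 \cdot \Ex_{\tau \sim \mu^\sigma_{S_r}}\!\left[\tfrac{\mu_v^{\sigma \oplus \tau}(i) - p_v^i}{p_v^0}\right],
\]
which is exactly what the algorithm implements: with probability $p_v^i$ it outputs $i$ (the ``easy branch''), and with probability $p_v^0$ it samples a boundary $\tau$ from $\mu^\sigma_{S_r}$ inside $\bdsplit$ and then returns the residual distribution $\rho$. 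The vertex-by-vertex construction of $\tau$ inside $\bdsplit$ (calling $\ssmsampler$ on each $w_j$ conditional on previously-sampled sites) matches $\mu^\sigma_{S_r}$ by the chain rule, using induction on recursion depth. Almost-sure termination, which is needed to close the induction, will follow from the running-time part below.

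For running time, strong spatial mixing gives $|\mu_v^{\sigma \oplus \tau}(i) - \mu_v^\sigma(i)| \le f(r)$, so $p_v^i \ge \mu_v^\sigma(i) - f(r)$ and therefore $p_v^0 \le q f(r)$. Each call thus either terminates (with probability $\ge 1 - q f(r)$) or spawns at most $s(r)$ children, one per vertex of $S_r$. The recursion tree is stochastically dominated by a Galton--Watson process with mean offspring $q f(r) s(r)$. The hypothesis $2 e q (1 + s(r)) f(r) \le 1$ keeps this comfortably below $1/(2e)$, i.e., strictly subcritical. Let $N$ be the total tree size; the MGF $\Ex[z^N]$ satisfies the fixed-point equation $\Ex[z^N] = z\bigl(1 - q f(r) + q f(r) \Ex[z^N]^{s(r)}\bigr)$, and substituting $z = 1 + c/s(r)$ together with the bound $(1 + c/s(r))^{s(r)} \le e^c$ yields $\Ex[z^N] \le 2$ for a suitable constant $c$. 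Markov's inequality then gives $\Pr[N \ge T] \le 2 z^{-T}$, and choosing $T = \Theta(s(r) \log(1/\varepsilon))$ brings this below $\varepsilon$. Amortising the work per call at $O(1)$ and the cumulative boundary enumeration at $O(s(r))$ across siblings, the total runtime bound $O(s(r) \log(1/\varepsilon))$ follows.

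The main obstacle is bookkeeping the constants so that the factor $2e$ in the hypothesis lines up with the MGF bound: one has to pick the test point $z$ so that both the multiplicative factor in Markov and the exponent $z^{-T}$ give tail $\varepsilon$ precisely at the target runtime, not at a larger threshold. An alternative that sidesteps the generating-function analysis is the first-passage argument of \Cref{lem:gwb-terminate}: couple the number of ``active'' recursive instances to a biased random walk and apply Azuma--Hoeffding. Getting the linear-in-$s(r)$ dependence (rather than $s(r)^2$) then requires a careful drift-versus-variance tradeoff, which is where the $e$ factor in the assumption is absorbed.
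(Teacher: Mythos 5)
The paper does not actually prove this theorem: it is stated as a citation to \cite[Lemma~B.10]{FGWWY22}, with the algorithms reproduced for reference. So there is no in-paper proof to compare against. Judged on its own, your argument is essentially correct. The mixture decomposition $\mu_v^\sigma(i) = p_v^i + p_v^0\,\Ex_\tau[\rho_i(\tau)]$ is exactly the identity the algorithm implements, and the induction on recursion depth (closed by almost-sure termination) is the right way to formalise correctness. For the tail bound, bounding $p_v^0\le qf(r)$ via SSM, stochastically dominating the recursion tree by a Galton--Watson process with at most $s(r)$ offspring per branching, and controlling $\Ex[z^N]$ at $z=1+\Theta(1/s(r))$ via the fixed-point inequality is sound and does give the claimed \emph{linear} $O(s(r)\log(1/\eps))$ dependence. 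One small point worth making explicit: the sibling calls inside $\bdsplit$ are not independent (each updates $\Sigma'$), but the per-call branching probability is bounded by $qf(r)$ uniformly over all feasible $(\Sigma,\sigma)$, so stochastic domination by the i.i.d.\ offspring process is still valid.

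It is also worth noting that this MGF route is genuinely different from what the paper does for its own (specialised) tail bound, \Cref{lem:gwb-terminate}: there, the recursion is coupled to a biased random walk and Azuma--Hoeffding is applied, which yields a bound quadratic in the branching factor ($\Delta^2$ appears explicitly). A naive transplant of that argument here would give $O(s(r)^2\log(1/\eps))$, not $O(s(r)\log(1/\eps))$; your generating-function computation is what recovers the linear dependence stated in the theorem. Your closing remark about recovering linearity from Azuma via a ``drift-versus-variance tradeoff'' is a bit optimistic --- the variance term in Azuma is genuinely $\Theta(s(r)^2)$ per step for this offspring distribution --- but since that paragraph is offered only as an aside and the main MGF argument is complete, this does not affect the validity of the proof.
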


%\listoftodos

\section{A lower bound for Weitz's algorithm}\label{sec:lb-Weitz}

In this section, we prove a lower bound for the running time of the standard implementation of Weitz's algorithm.
Consider the hard-core model on $G=(V,E)$ with parameter $\lambda$.
%we show that for the hardcore model with $\lambda  = \Theta(\frac{1}{\Delta^{1+k}})$ in graph $G=(V,E)$ with $k > 0$, the straightforward implementation of Weitz's algorithm computes the partition function $Z$ within a \emph{constant} error requires $\Omega( n^{1+\frac{1}{k + o_\Delta(1)}})$ running time. 
Suppose we want to estimate the partition function $Z$ within a \emph{constant} approximation error.
Let $V=\{v_1,\ldots,v_n\}$ and $G_i = G\setminus \{v_1,\ldots,v_{i-1}\}$.
Weitz's algorithm solves this task by estimating each $\mu_{G_i,v_i}(0)$ within an approximation error $ O(\frac{1}{n})$.  
It first constructs the SAW tree of $G_i$ rooted at $v_i$, then truncates the tree at level $\ell$ and applies dynamic programming on the truncated tree to estimate $\mu_{G_i,v_i}(0)$.
The standard implementation of Weitz's algorithm~\cite{Wei06,LLY13} ensures that for any tree with maximum degree $\Delta$, any two configurations $\sigma,\tau$ at level $\ell$, $\dTV(\mu_v^{\sigma},\mu_v^{\tau}) = O(\frac{1}{n})$. 
Standard analysis bounds the total running time from above by $T_{\mathrm{Weitz}} = \Theta(n \Delta^\ell)$.

By the same correlation decay analysis as in \Cref{lem:contraction},
when the algorithm in \Cref{small-lambda} has running time $\widetilde{O}(n^{1+1/2k})$,
we need to choose $\ell$ so that $T_{\mathrm{Weitz}}=O(n^{1+1/k})$.
This analysis only gives an upper bound on the correlation decay rate.
If the decay rate is faster, then Weitz's algorithm is faster,
and so is the algorithm in \Cref{small-lambda}.
The speedup will depend on how much faster the decay rate becomes.
Nevertheless, the next lemma shows that the analysis in \Cref{lem:contraction} is almost sharp in the worst case.
The speedup in \Cref{small-lambda} is at least $\widetilde{\Omega}\left(n^{\frac{1}{2k}-O(\frac{1}{k^2\log\Delta})}\right)$.

%The following lemma shows that the parameter $\ell$ cannot be too small in this standard implementation, which leads to a lower bound on $T_{\mathrm{Weitz}}$.

\begin{lemma}\label{lemma-inf-lower}
Let the real number $k > 0$ and the integer $\Delta \geq 2$ be two constants satisfying $\Delta^k \geq 4$.
Let  $\lambda = \frac{2}{(\Delta - 1)\Delta^{k}}$.
%Let $\Delta \geq \Delta_0(k,C)$ be a large constant and $\lambda = \frac{C}{\Delta^{1+k}}$.
Let $T$ be an infinite $\Delta$-regular tree with root $v$.
For any $\ell \geq 2$, let $\sigma_0$ and $\sigma_1$ be all-0 and all-1 configurations at level $\ell$ of $T$ respectively.
The Gibbs distribution $\mu$ of the hard-core model on $T$ with parameter $\lambda$ satisfies 
\begin{align*}
	\dTV(\mu_v^{\sigma_0},\mu_v^{\sigma_1}) \geq \frac{1}{2}\tp{\frac{1}{\Delta^k}}^{\ell}.
\end{align*}
\end{lemma}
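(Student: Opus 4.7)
The plan is to bound $\dTV(\mu_v^{\sigma_0}, \mu_v^{\sigma_1})$ by an explicit analysis of the tree recursion \eqref{saw-recursion}. Specialised to the hard-core model the recursion reads $R = \lambda\prod_i (1+R_i)^{-1}$. I would first introduce the auxiliary sequences $a_m$ and $b_m$, defined as the root marginal ratio of a $(\Delta-1)$-ary subtree of depth $m$ with all-0, resp.\ all-1, leaf boundary. The base cases $a_0 = 0$ and $b_0 = \infty$ give $a_1 = \lambda$ and $b_1 = 0$, and a one-line induction then yields the clean identity $b_m = a_{m-1}$ for every $m \geq 1$. Since the root $v$ has $\Delta$ children, each rooting a $(\Delta-1)$-ary subtree of depth $\ell-1$, this gives the compact closed forms $R^{\sigma_0} = \lambda/(1+a_{\ell-1})^\Delta$ and $R^{\sigma_1} = \lambda/(1+a_{\ell-2})^\Delta$, reducing the problem to lower-bounding $\abs{a_{\ell-1} - a_{\ell-2}}$ and then converting back via the standard identity $\dTV(\mu_v^{\sigma_0},\mu_v^{\sigma_1}) = \abs{R^{\sigma_0} - R^{\sigma_1}}/\bigl((1+R^{\sigma_0})(1+R^{\sigma_1})\bigr)$.

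I would then apply the mean value theorem twice: once to $h(x) = (1+x)^{-\Delta}$ at the root, which gives $\abs{R^{\sigma_0} - R^{\sigma_1}} = \lambda\Delta(1+\xi)^{-\Delta-1}\abs{a_{\ell-1} - a_{\ell-2}}$ for some $\xi$ between the two iterates, and once to $f(x) = \lambda/(1+x)^{\Delta-1}$ at each recursive step, yielding $\abs{a_m - a_{m-1}} = \abs{f'(\xi_m)}\abs{a_{m-1} - a_{m-2}}$. A quick check shows $f$ maps $[0,\lambda]$ into itself, so all iterates and all intermediate points lie in $[0,\lambda]$, producing the uniform per-step bound $\abs{f'(\xi)} \geq \lambda(\Delta-1)/(1+\lambda)^\Delta = (2/\Delta^k)/(1+\lambda)^\Delta$. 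Iterating from the base gap $\abs{a_1 - a_0} = \lambda$ and feeding the result back into the root step yields a lower bound on $\dTV$ of the form $2^{\ell}/\bigl((\Delta-1)\Delta^{k\ell}(1+\lambda)^{\Delta(\ell-1)+O(1)}\bigr)$, which already has the desired $\Delta^{-k\ell}$ exponent in $\ell$.

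The main obstacle I anticipate is controlling the accumulated overhead $(1+\lambda)^{\Delta(\ell-1)}$ tightly enough to extract the explicit constant $\frac{1}{2}$. This is exactly where the hypothesis $\Delta^k \geq 4$ enters: it forces $\lambda(\Delta-1) = 2/\Delta^k \leq 1/2$, hence $\lambda\Delta \leq \lambda(\Delta-1)+\lambda \leq 1$, and therefore $(1+\lambda)^\Delta \leq e^{\lambda\Delta} \leq e^{4/\Delta^k}$. The base-$2$ gain $2^{\ell-1}$ in the numerator must then beat the exponential $e^{4(\ell-1)/\Delta^k}$ from the overhead, which requires $\log 2 \geq 4/\Delta^k$, i.e., $\Delta^k \geq 4/\log 2 \approx 5.77$. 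The remaining small slack, combined with a sharper estimate of $\abs{f'(\xi)}$ at iterates close to the fixed point $\hat{x}$ (where $(1+\hat{x})^\Delta$ is noticeably smaller than the worst-case $(1+\lambda)^\Delta$, an observation already exploited in \Cref{lem:contraction}), is what I would use to pin down the constant $\frac{1}{2}$ across the entire regime $\Delta^k \geq 4$.
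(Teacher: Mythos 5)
Your plan is essentially the paper's: track marginal ratios level by level, apply the mean value theorem to the recursion $f(x) = \lambda(1+x)^{-(\Delta-1)}$ at each step, lower-bound the per-step derivative uniformly along the iterates, and convert back to total variation distance at the end. Your identity $b_m = a_{m-1}$ (an all-1 boundary at depth $m$ acts like an all-0 boundary at depth $m-1$, because occupied leaves force their parents to be unoccupied) is correct and is a clean repackaging of the two-ratio bookkeeping the paper does explicitly with $R^0_t$ and $R^1_t$; the base gap $\abs{a_1-a_0}=\lambda$ is also the right one.

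The gap is the one you yourself flag, and it is real, not just a calculational nuisance. The naive per-step bound $\abs{f'(\xi)}\geq\lambda(\Delta-1)(1+\lambda)^{-\Delta}\geq\tfrac{2}{\Delta^k}e^{-4/\Delta^k}$ only dominates $\tfrac{1}{\Delta^k}$ when $\Delta^k\geq 4/\ln 2\approx 5.77$. Your proposed fix—sharpening $\abs{f'}$ near the fixed point $\hat x$—does not obviously help, since $\abs{f'}$ is decreasing and the first mean-value point lies between $a_0=0$ and $a_1=\lambda$, so the worst case $\abs{f'(\lambda)}$ is genuinely attained in the first step. More importantly, a direct spot check shows the stated constant cannot be recovered in the problematic slice: take $\Delta=8$, $k=2/3$ (so $\Delta^k=4$, $\lambda=1/14$) and $\ell=2$; then $R^{\sigma_0}_v=\lambda(1+\lambda)^{-\Delta}\approx 0.0411$, $R^{\sigma_1}_v=\lambda\approx 0.0714$, giving $\dTV\approx 0.0272<1/32=\tfrac12\Delta^{-k\ell}$. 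So the lemma as written actually fails for small $\ell$ in that range, and no proof of it can succeed without either raising the threshold on $\Delta^k$, lowering the constant $\tfrac12$, or restricting to larger $\ell$. For context, the paper's own proof has the same soft spot: its displayed step $\tfrac{2}{\Delta^k}\exp(-4/\Delta^k)\geq\tfrac{1}{\Delta^k}$ implicitly needs $\Delta^k\geq 4/\ln 2$, and the base-case value is stated as $\abs{R^0_{\ell-1}-R^1_{\ell-1}}=\abs{1-\lambda}$ whereas under the ratio convention used it should be $\lambda$, which is at most $\tfrac12$ rather than at least $\tfrac12$. Your instinct that this is where the proof is delicate is correct; what is missing is the realisation that the delicacy is an actual counterexample, not just slack to be optimised away.
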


%Fix a real number  $k > 0$ and an integer $\Delta \geq \max\{4^{1/k},2\}$. 
Let the parameters $k$, $\Delta$, and $\lambda$ be as in \Cref{lemma-inf-lower}.
%
%Consider a family of hardcore instances $(G,\lambda)$ such that the maximum degree of $G$ is $\Delta$ and $\lambda = \frac{2}{(\Delta - 1)\Delta^{k}}$, where the number of vertices $n$ changes with instances.
Consider a family of hard-core instances where the graphs are indeed $\Delta$-regular trees.
%
%It is safe to assume that $\lambda = \frac{C}{\Delta^{1+k}}$ for some constant $\Delta^{-k} \leq C \leq \Delta^k$, otherwise $C$ would change the order of $\lambda$ so that $\lambda \neq \Theta(\frac{1}{\Delta^{1+k}})$.
In Weitz's algorithm, in order to ensure an $O(\frac{1}{n})$ truncation error,
\Cref{lemma-inf-lower} implies that $\ell$ must satisfy $\frac{1}{2}\tp{\frac{1}{\Delta^k}}^{\ell} = O(\frac{1}{n})$,
namely,
\begin{align*}
	\Delta^\ell = \Omega(n^{\frac{1}{k}}).
\end{align*}
This makes the overall running time $T_{\mathrm{Weitz}}  = \Omega(n^{1 + \frac{1}{k}})$.
In comparison, for these parameters, the algorithm in \Cref{small-lambda} has a running time upper bound $\widetilde{O}\left(n^{1+\frac{1}{2k}+O(\frac{1}{k^2\log\Delta})}\right)$,
which is faster by a factor of roughly $\widetilde{\Omega}(n^{1/2k})$.

%Hence, for constants $\Delta$, $\lambda$ and $k$, if $\lambda  = O(\frac{1}{\Delta^{1+k}})$, then Weitz's algorithm requires $\Omega( n^{1+1/k})$ running time to estimate $Z$ with a constant approximation error.

%\begin{proof}[Proof of \Cref{lemma-inf-lower}]
\begin{proof}[Proof of Lemma~\ref{lemma-inf-lower}]
Let $w$ be an arbitrary vertex at level $0 \leq t \leq \ell$. 
Let $\pi$ denote the Gibbs distribution on the subtree rooted $T_w$ at $w$.
Recall that $\sigma_0,\sigma_1$ are pinnings on $T(\ell)$, where $T(\ell)$ is level $\ell$ of $T$.
Let $p^0_t(c) = \pi^{\sigma_0}_w(c)$ and $p^1_t(c) = \pi^{\sigma_1}_w(c)$ for $c \in \{0,1\}$, where we use $\sigma_{0}$ and $\sigma_1$ to denote all-0 and all-1 pinnings on $T_w \cap T(\ell)$.  By symmetry, $p^0_t(\cdot)$ and $p^1_t(\cdot)$ depend only on $t$ but not on $w$. In particular, $p^0_0 = \mu_v^{\sigma_0}$ and $p^1_0 = \mu_v^{\sigma_1}$ for the root $v$. For any $0 \leq t \leq \ell$, define
\begin{align*}
	R^0_t \defeq \frac{p^0_t(1)}{p^0_t(0)}, \quad R^1_t \defeq \frac{p^1_t(1)}{p^1_t(0)}.
\end{align*}

We next prove the following result holds for all $1 \leq t \leq \ell - 1$:
\begin{align}\label{eq-main}
	\abs{R^0_t  - R^1_t } \geq \frac{1}{2}\tp{\frac{1}{\Delta^k}}^{\ell  -t - 1}.
\end{align}
We need the following bound to prove~\eqref{eq-main}.
By considering the worst pinning on the neighbourhood, 
we have the following bound on both ratios $R^0_s$ and $R^1_s$
%one can verify that $\frac{1}{e} \cdot  \frac{\lambda}{1+\lambda} \leq p^0_t(1),p^1_t(1) \leq \frac{\lambda}{1+\lambda}$ for large $\Delta$ such that $\Delta \geq C^{1/k} + 1$. We have
\begin{align}\label{eq-up-low-bds}
\forall 0\leq s \leq \ell - 1, \quad  R^0_s,R^1_s \leq \lambda = \frac{2}{(\Delta - 1)\Delta^{k}}.
	%\frac{C}{ 6\Delta^{1+k}} \overset{(\ast)}{\leq} \frac{1}{e} \cdot  \frac{\lambda}{1+\lambda} \leq R^0_t, R^1_t \leq \lambda = \frac{C}{\Delta^{1+k}},
\end{align}
%where the above inequality holds for large $\Delta$ such that $\Delta \geq C^{1/k}$.
We prove~\eqref{eq-main} by induction on $t$ from $\ell - 1$ to 1.
The base case is $t = \ell - 1$. 
Note that $\Delta^k \geq 4$.
A straightforward calculation shows that
\begin{align*}
	\abs{R^0_{\ell - 1} - R^1_{\ell - 1}} = \abs{ 1 - \lambda} = 1 - \frac{2}{(\Delta - 1)\Delta^{k}} \geq \frac{1}{2}.
\end{align*}
For the induction step, fix $1 \leq t \leq \ell - 2$. %Suppose $|R^0_{t+1} - R^1_{t+1}| \leq ? $. 
The recursion function in $(\Delta-1)$-ary tree is
\begin{align*}
	f(x) = \lambda \tp{\frac{1}{1+x}}^{\Delta - 1}.
\end{align*}
Note that $R^0_t = f(R^0_{t+1})$ and $R^1_t = f(R^1_{t+1})$. By the mean value theorem, there exists  $\theta$ such that $\min(R^0_{t+1}, R^1_{t+1}) < \theta < \max(R^0_{t+1}, R^1_{t+1}) $ and 
\begin{align*}
	\abs{R^0_t  - R^1_t } = |f'(\theta)| \cdot  \abs{R^0_{t+1} - R^1_{t+1}}. 
\end{align*}
By~\eqref{eq-up-low-bds} and the fact $\Delta^k \geq 4$, we have
\begin{align*}
	\abs{f'(\theta)} &= \lambda(\Delta - 1) \tp{\frac{1}{1+\theta}}^{\Delta} \geq \lambda(\Delta - 1)\tp{\frac{1}{1+\lambda}}^{\Delta} \geq \lambda(\Delta - 1)\exp(-\lambda\Delta)\\
	&= \frac{2}{\Delta^k} \exp \tp{ - \frac{2\Delta}{(\Delta - 1)\Delta^k} } \geq \frac{2}{\Delta^k} \exp \tp{ - \frac{4}{\Delta^k} } \geq \frac{1}{\Delta^k}.
\end{align*}
%where the last inequality holds if $\Delta \geq C^{1/k}$. 
By the induction hypothesis that $\abs{R^0_{t+1} - R^1_{t+1}} \geq \frac{1}{2} (\frac{1}{\Delta^k})^{\ell - t - 2} $, we can prove~\eqref{eq-main} for $t$. This finishes the induction step for $1 \leq t \leq \ell - 3$.

Finally, we use~\eqref{eq-main} to bound $|R^0_0 - R^1_0|$. 
The proof is similar to the proof in the induction step.
The only difference is that the recursion for root $v$ becomes $g(x) = \lambda \tp{\frac{1}{1+x}}^{\Delta}$. By a similar calculation, there exists $\min(R^0_{1}, R^1_{1}) < \theta < \max(R^0_{1}, R^1_{1}) $ such that 
\begin{align*}
|R^0_0 - R^1_0| &= |g'(\theta)| \cdot  \abs{R^0_{1} - R^1_{1}}  \geq  \Delta \lambda \tp{\frac{1}{1+\theta}}^{\Delta+1} \cdot 	\frac{1}{2}\tp{\frac{1}{\Delta^k}}^{\ell-2}\\
&= \frac{\Delta}{(\Delta - 1)(1+\theta)} \cdot \lambda (\Delta-1) \tp{\frac{1}{1+\theta}}^{\Delta}  \cdot \frac{1}{2}\tp{\frac{1}{\Delta^k}}^{\ell-2}\\
&\geq  \frac{\Delta}{(\Delta - 1)(1+\lambda)} \cdot \frac{1}{2}\tp{\frac{1}{\Delta^k}}^{\ell-1}
 \geq \frac{1}{3}\tp{\frac{1}{\Delta^k}}^{\ell-1}.
\end{align*}

By the definitions of $R^0_0$ and $R^1_0$ and the fact $\Delta^k \geq 4$, we have
\begin{align*}
	\dTV\tp{ \mu^{\sigma_0}_v, \mu^{\sigma_1}_v } &= \abs{\mu^{\sigma_0}_v(1) - \mu^{\sigma_1}_v(1)} = \mu^{\sigma_0}_v(0)\mu^{\sigma_1}_v(0) |R^0_0 - R^1_0|\\
	&\geq \tp{\frac{1}{1+\lambda}}^2 \cdot \frac{1}{3}\tp{\frac{1}{\Delta^k}}^{\ell-1} 
  \geq \frac{4}{27}\tp{\frac{1}{\Delta^k}}^{\ell-1}
	\geq \frac{1}{2}\tp{\frac{1}{\Delta^k}}^{\ell}. \qedhere
\end{align*}
\end{proof}

\section{Grid graph with quadratic-sized boundary}\label{sec:quadratic-boundary}

Although the distance-$n$ boundary of the $\mathbb{Z}^2$ lattice contains only $4n$ vertices, 
there are subgraphs that blow this number up to $\Omega(n^2)$. 
Below is one such graph constructed recursively. 
For large enough even $n$, the graph $G(n)$ is given by \Cref{quad-boundary-grid}. 
\begin{figure}[!htbp]
\centering
	\includegraphics[width=\textwidth]{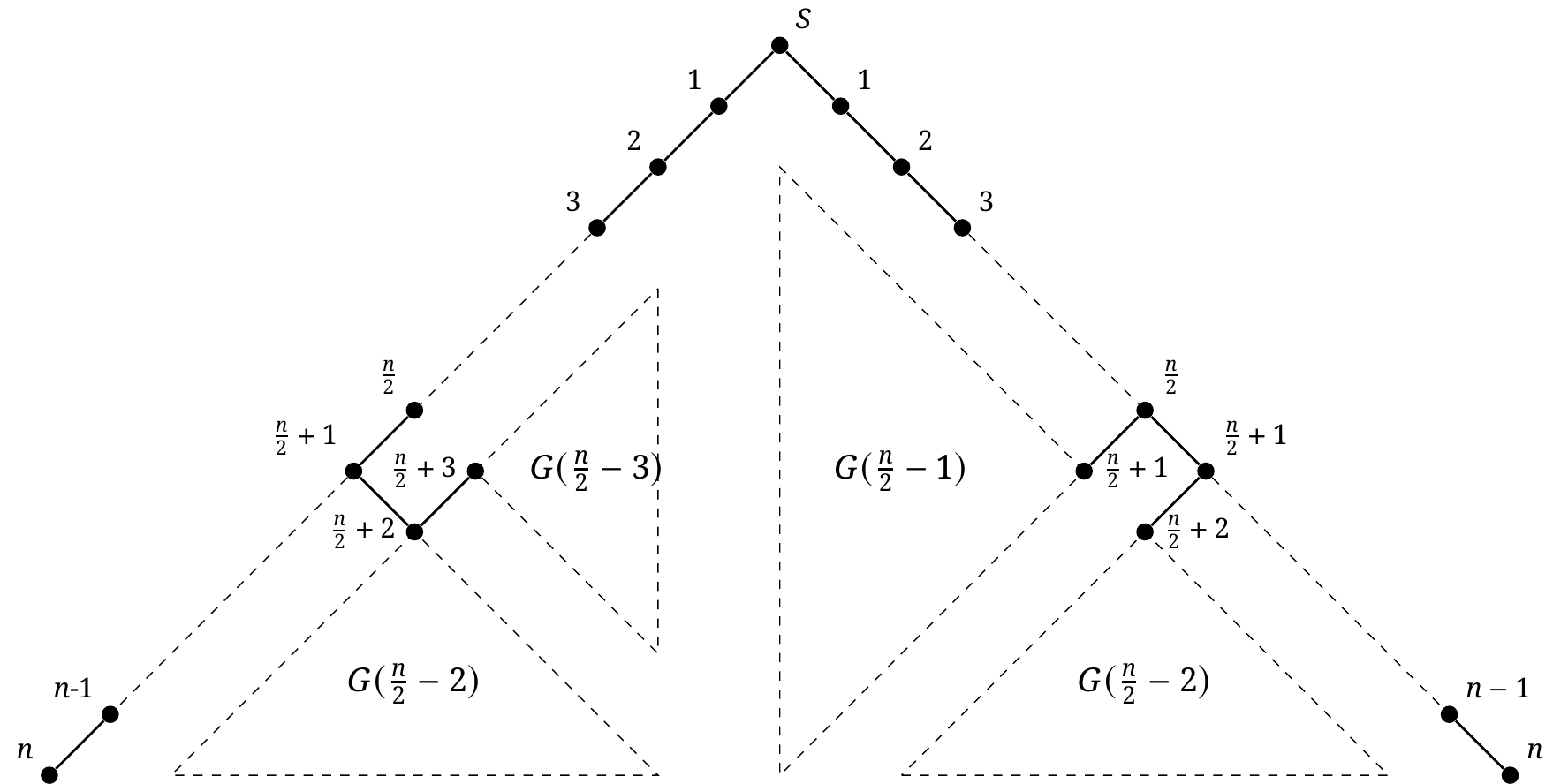}
\caption{The graph $G(n)$ when $n$ is even. The number next to a vertex indicates its distance from the starting vertex $S$. }
\label{quad-boundary-grid}
\end{figure}

The odd-$n$ case can be constructed similarly. 
Let $f(n)$ be the number of distance-$n$ vertices from the vertex $S$ in the graph $G(n)$. 
Then $f(n)=4f(\frac{n}{2}-\Theta(1))+\Theta(1)$.
By the Master Theorem, $f(n)=\Theta(n^2)$. 
Also note that such a construction can be made on an induced subgraph of $\mathbb{Z}^2$, by splitting each edge here into two edges joined by a vertex. 

\end{document}